\newtheorem{theorem}{Theorem}
\newtheorem{lemma}{Lemma}
\newtheorem{assumption}{Assumption}
\newtheorem{remark}{Remark}
\newtheorem{proposition}{Proposition}
\newtheorem{corollary}{Corollary}
\newtheorem{definition}{Definition}
\newtheorem{problem}{Problem}
\newcommand{\Null}{\mathrm{Null}}
\newcommand{\Range}{\mathrm{Range}}
\newcommand{\one}{\mathbf{1}}
\newcommand{\rank}{\mathrm{rank}}
\newcommand{\myspan}{\mathrm{span}}
\newcommand{\mydiag}{\mathrm{diag}}
\newcommand{\blkdiag}{\mathrm{blkdiag}}
\newcommand{\T}{\mathrm{T}}
\newcommand{\R}{\mathbb{R}}
\newcommand{\G}{\mathcal{G}}
\newcommand{\E}{\mathcal{E}}
\newcommand{\V}{\mathcal{V}}
\newcommand{\N}{\mathcal{N}}
\renewcommand{\S}{\mathcal{S}}
\newcommand{\sk}[1]{\left[#1\right]_\times} 
\newcommand{\dia}[1]{\mathrm{diag}\left(#1\right)} 
\begin{document}

\title{Bearing Rigidity and Almost Global\\ Bearing-Only Formation Stabilization}

\author{Shiyu Zhao and Daniel Zelazo
\thanks{S. Zhao and D. Zelazo are with the Faculty of Aerospace Engineering, Israel Institute of Technology, Haifa, Israel.
    {\tt\small szhao@tx.technion.ac.il, dzelazo@technion.ac.il}}
}

\maketitle

\begin{abstract}
A fundamental problem that the bearing rigidity theory studies is to determine when a framework can be uniquely determined up to a translation and a scaling factor by its inter-neighbor bearings.
While many previous works focused on the bearing rigidity of two-dimensional frameworks, a first contribution of this paper is to extend these results to arbitrary dimensions.
It is shown that a framework in an arbitrary dimension can be uniquely determined up to a translation and a scaling factor by the bearings if and only if the framework is infinitesimally bearing rigid.
In this paper, the proposed bearing rigidity theory is further applied to the bearing-only formation stabilization problem where the target formation is defined by inter-neighbor bearings and the feedback control uses only bearing measurements.
Nonlinear distributed bearing-only formation control laws are proposed for the cases with and without a global orientation.
It is proved that the control laws can almost globally stabilize infinitesimally bearing rigid formations.
Numerical simulations are provided to support the analysis.
\end{abstract}

\begin{IEEEkeywords}
Bearing rigidity, formation control, attitude synchronization, almost global input-to-state stability
\end{IEEEkeywords}

\IEEEpeerreviewmaketitle

\section{Introduction}


Multi-agent formation control has been studied extensively in recent years with
\emph{distance-constrained formation control} taking a prominent role \cite{Anderson2008CSM,Krick2009IJC,Corte2009Automatica,Dorfler2009ECC,Dimos2010Automatica,oh2013IJRNC,SunZhiyong2014IFAC}.  In this setting it is assumed that the target formation is specified by inter-agent distances, and each agent is able to measure relative positions of their neighbors.
\emph{Bearing-constrained formation control} has also attracted much attention recently \cite{bishopconf2011rigid,bishop2010SCL,Eren2012IJC,Antonio2012CDC,zhao2013SCLDistribued,zhao2013IJCFinite,Eric2014ACC}.   Instead of distances, the formation is specified by inter-agent bearings, and each agent can measure the relative positions or bearings of their neighbors.
Bearing measurements are often cheaper and more accessible than position measurements, spurring interest in cooperative control using bearing-only measurements \cite{nima2009TR,bishop2010SCL,Eren2012IJC,Antonio2012CDC,Franchi2012IJRR,ZhengRonghao2013SCL,zhao2013SCLDistribued,zhao2013IJCFinite,Cornejo2013IJRR,Eric2014ACC}.

This paper studies a bearing-only formation control problem where the target formation is \emph{bearing-constrained} and each agent has access to the \emph{bearing-only} measurements of their neighbors.
It is noted that while bearing measurements can be used to estimate relative distances or positions \cite{Franchi2012IJRR,Cornejo2013IJRR,zelazo2014SE2Rigidity}, such schemes may significantly increase the complexity of the sensing system in terms of both hardware and software.  This then motivates our study focusing on a pure bearing-only control scheme without the need for estimation of additional quantities (e.g., relative position).


Although bearing-only formation control has lately attracted much interest, many problems on this topic remain unsolved.
The studies in \cite{nima2009TR,bishopconf2011rigid,Antonio2012CDC} considered bearing-constrained formation control in two-dimensional spaces, but required access to position or other measurements in the proposed control laws.
The results reported in \cite{Franchi2012IJRR,Cornejo2013IJRR} only require bearing measurements, but they are used to estimate additional relative-state information such as distance ratios or scale-free coordinates.
The works in \cite{bishop2010SCL,Eren2012IJC,zhao2013SCLDistribued,zhao2013IJCFinite} studied formation control with bearing measurements directly applied in the control.  However, these results were applied to special formations, such as cyclic formations, and may not be extendable to arbitrary formation shapes.
A very recent work reported in \cite{Eric2014ACC} solved bearing-only formation control for arbitrary underlying graphs, but only for formations in the plane.
Bearing-only formation control in arbitrary dimensions with general underlying graphs still remains an open problem.

A central tool in the study of bearing-only formation control is \emph{bearing rigidity theory}\footnote{Also referred to as \emph{parallel rigidity} in some literature.}.
Existing works on bearing rigidity mainly focused on frameworks in two-dimensional ambient spaces \cite{eren2003,bishopconf2011rigid,Eren2012IJC,zelazo2014SE2Rigidity}.  The first contribution of our work, therefore, is an extension of the existing bearing rigidity theory to arbitrary dimensions.  We also explore connections between bearing rigidity and distance rigidity, and in particular show that a framework in $\R^2$ is infinitesimally bearing rigid if and only if it is also infinitesimally distance rigid.

Based on the proposed bearing rigidity theory, we investigate distributed bearing-only formation control in arbitrary dimensions in the presence of a global reference frame.
We propose a distributed bearing-only formation control law and show by a Lyapunov approach that the control law can almost globally stabilize infinitesimally bearing rigid formations.  We also provide a sufficient condition ensuring collision avoidance between any pair of agents under the action of the control.

In the third part of the paper, we investigate bearing-only formation control in the three dimensional space without a global reference frame known to the agents.
Each agent can only measure the bearings and relative orientations of their neighbors in their local reference frames.
We propose a distributed control law to control both the position and the orientation of each agent.
It is shown that the orientation will synchronize and the target formation is almost globally stable.


This paper is organized as follows.
Section~\ref{section_parallelRigidity} presents the bearing rigidity theory that is applicable to arbitrary dimensions.
Section~\ref{section_BOF_global} studies bearing-only formation control in arbitrary dimensions in the presence of a global reference frame, and Section~\ref{section_BOF_noGlobal} studies the case without a global reference frame.
Simulation results are presented in Section~\ref{section_simulation}.
Conclusions and future works are given in Section~\ref{section_conclusion}.

\paragraph*{Notations}

Given $A_i\in\mathbb{R}^{p\times q}$ for $i=1,\dots,n$, denote $ \mydiag(A_i)\triangleq\blkdiag\{A_1,\dots,A_n\}\in\mathbb{R}^{np\times nq}$.
Let $\Null(\cdot)$, $\Range(\cdot)$, and $\rank(\cdot)$ be the null space, range space, and rank of a matrix, respectively.
Denote $I_d\in\R^{d\times d}$ as the identity matrix, and $\one\triangleq[1,\dots,1]^\T$.
Let $\|\cdot\|$ be the Euclidian norm of a vector or the spectral norm of a matrix, and $\otimes$ the Kronecker product.
For any $x=[x_1,x_2,x_3]^\T\in\R^3$, the associated skew-symmetric matrix is denoted as
\begin{align}\label{eq_skewSymmetricOperator}
    \sk{x}\triangleq\left[
      \begin{array}{ccc}
        0 & -x_3 & x_2 \\
        x_3 & 0 & -x_1 \\
        -x_2 & x_1 & 0 \\
      \end{array}
    \right].
\end{align}

An undirected graph, denoted as $\mathcal{G}=(\mathcal{V},\mathcal{E})$, consists of a vertex set $\mathcal{V}=\{1,\dots,n\}$ and an edge set $\mathcal{E}\subseteq \mathcal{V} \times \mathcal{V}$ with $m=|\mathcal{E}|$.
The set of neighbors of vertex $i$ is denoted as $\mathcal{N}_i\triangleq\{j \in \mathcal{V}: \ (i,j)\in \mathcal{E}\}$.
An {orientation} of an undirected graph is the assignment of a direction to each edge.
An {oriented graph} is an undirected graph together with an orientation.
The {incidence matrix} $H\in\mathbb{R}^{m\times n}$ of an oriented graph is the $\{0,\pm1\}$-matrix with rows indexed by
edges and columns by vertices: $[H]_{ki}=1$ if vertex $i$ is the head of edge $k$, $[H]_{ki}=-1$ if vertex $i$ is the tail of edge $k$, and $[H]_{ki}=0$ otherwise.  For a connected graph, one always has $H\one = 0$ and $\rank(H)=n-1$ \cite{Mesbahi2010}.

\section{Bearing Rigidity in Arbitrary Dimensions}\label{section_parallelRigidity}

In this section, we propose a bearing rigidity theory that is applicable to arbitrary dimensions.
The basic problem that the bearing rigidity theory studies is whether a framework can be uniquely determined up to a translation and a scaling factor given the bearings between each pair of neighbors in the framework.
This problem can be equivalently stated as whether two frameworks with the same inter-neighbor bearings have the same shape.

We first define some necessary notations.
Given a finite collection of $n$ points $\{p_i\}_{i=1}^n$ in $\R^d$ ($n\ge2$, $d\ge2$), a \emph{configuration} is denoted as $p=[p_1^\T,\dots,p_n^\T]^\T\in\mathbb{R}^{dn}$.
A \emph{framework} in $\R^d$, denoted as $\G(p)$, is a combination of an undirected graph $\G=(\V,\E)$ and a configuration $p$, where vertex $i\in\V$ in the graph is mapped to the point $p_i$ in the configuration.
For a framework $\G(p)$, define
\begin{align}\label{eq_eij_gij_def}
    e_{ij}\triangleq p_j-p_i, \quad g_{ij}\triangleq e_{ij}/\|e_{ij}\|, \quad \forall (i,j)\in\E.
\end{align}
Note the unit vector $g_{ij}$ represents the relative bearing of $p_j$ to $p_i$.
This unit-vector representation is different from the conventional ways where a bearing is described as one angle (azimuth) in $\mathbb{R}^2$, or two angles (azimuth and altitude) in $\mathbb{R}^3$.
Note also that $e_{ij}=-e_{ji}$ and $g_{ij}=-g_{ji}$.

We now introduce an important orthogonal projection operator that will be widely used in this paper.
For any nonzero vector $x\in\R^d$ ($d\ge2$), define the operator $P: \R^d\rightarrow\R^{d\times d}$ as
\begin{align*}
    P(x) \triangleq I_d - \frac{x}{\|x\|}\frac{x^\T }{\|x\|}.
\end{align*}
For notational simplicity, denote $P_x=P(x)$.
Note $P_x$ is an orthogonal projection matrix which geometrically projects any vector onto the orthogonal compliment of $x$.
It can be verified that $P_x^\T =P_x$, $P_x^2=P_x$, and $P_x$ is positive semi-definite.
Moreover, $\Null(P_x)=\myspan\{x\}$ and the eigenvalues of $P_x$ are $\{0,1^{(d-1)}\}$.
In the bearing rigidity theory, it is often required to evaluate whether two given bearings are parallel to each other.
The orthogonal projection operator provides a convenient way to describe parallel vectors in arbitrary dimensions.
\begin{lemma}\label{lemma_parallelismDefinition}
    Two nonzero vectors $x,y\in\R^d$ are parallel if and only if $P_x y=0$ (or equivalently $P_{y} x=0$).
\end{lemma}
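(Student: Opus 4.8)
The plan is to prove the two implications separately, exploiting the explicit form $P_x = I_d - \frac{x x^\T}{\|x\|^2}$.

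For the forward direction, suppose $x$ and $y$ are parallel, so that $y = \lambda x$ for some scalar $\lambda \neq 0$. Then I would simply compute
\begin{align*}
    P_x y = \lambda P_x x = \lambda\left(x - \frac{x x^\T x}{\|x\|^2}\right) = \lambda\left(x - \frac{x\,\|x\|^2}{\|x\|^2}\right) = \lambda(x - x) = 0,
\end{align*}
using $x^\T x = \|x\|^2$. The same computation with the roles of $x$ and $y$ swapped gives $P_y x = 0$, which disposes of the parenthetical equivalent form.

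For the converse, suppose $P_x y = 0$. Recall from the discussion preceding the lemma that $\Null(P_x) = \myspan\{x\}$. Hence $P_x y = 0$ forces $y \in \myspan\{x\}$, i.e.\ $y = \lambda x$ for some scalar $\lambda$; since $y$ is nonzero, $\lambda \neq 0$, so $x$ and $y$ are parallel. To keep the argument self-contained one could alternatively verify $\Null(P_x) = \myspan\{x\}$ directly: $P_x z = 0$ means $z = \frac{x^\T z}{\|x\|^2}\, x$, which exhibits $z$ as a scalar multiple of $x$, and conversely every scalar multiple of $x$ lies in the null space by the forward computation. The equivalence between $P_x y = 0$ and $P_y x = 0$ then follows because each is equivalent to parallelism of $x$ and $y$, a symmetric relation.

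There is essentially no obstacle here; the only thing to be slightly careful about is bookkeeping the nonzero scalar (so that ``parallel'' is genuinely symmetric and well-defined for nonzero vectors) and invoking the already-stated fact $\Null(P_x) = \myspan\{x\}$ rather than re-deriving it. I would keep the writeup to these two short computations plus the null-space remark.
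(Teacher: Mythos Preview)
Your proof is correct and follows essentially the same approach as the paper: the paper's proof is the single line ``The result follows from $\Null(P_x)=\myspan\{x\}$,'' which is exactly the fact you invoke for the converse, and your forward computation is just an explicit instance of that same null-space identity. Your version is slightly more expansive but contains no new ideas beyond what the paper uses.
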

\begin{proof}
The result follows from $\Null(P_x)=\myspan\{x\}$.
\end{proof}

\begin{remark}
Most existing works use the notion of \emph{normal vectors} to describe parallel vectors in $\R^2$ \cite{eren2003,bishopconf2011rigid,Eren2012IJC}.
Specifically, given a nonzero vector $x\in\R^2$, denote $x^\perp\in\R^2$ as a nonzero normal vector satisfying $x^\T x^\perp =0$. Then any vector $y\in\R^2$ is parallel to $x$ if and only if $(x^\perp)^\T y=0$.
This approach is applicable to two dimensional cases but difficult to extend to arbitrary dimensions.
Moreover, it is straightforward to prove that in $\R^2$ the use of the orthogonal projection operator is equivalent to the use of normal vectors based on the fact that $P_x=x^\perp(x^\perp)^\T/\|x^\perp\|^2$ for $x\in\R^2$.
\end{remark}

We are now ready to define the fundamental concepts in bearing rigidity.
These concepts are defined analogously to those in the distance rigidity theory.

\begin{definition}[Bearing Equivalency]\label{def.bearingequiv}
    Frameworks $\G(p)$ and $\G(p')$ are \emph{bearing equivalent} if $P_{(p_i-p_j)}(p_i'-p_j')=0$ for all $(i,j)\in\E$.
\end{definition}

\begin{definition}[Bearing Congruency]\label{def.bearingcong}
    Frameworks $\G(p)$ and $\G(p')$ are \emph{bearing congruent} if $P_{(p_i-p_j)}(p_i'-p_j')=0$ for all $i,j\in\V$.
\end{definition}

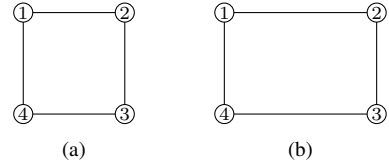
\begin{figure}[t]
  \centering
  \def\myscale{0.45}
  \def\mycolor{black}
\subfloat[]{
\begin{tikzpicture}[scale=\myscale]
            \def\length{3}
            \coordinate (x1) at (0,0);
            \coordinate (x2) at (\length,0);
            \coordinate (x3) at (\length,-\length);
            \coordinate (x4) at (0,-\length);
            \draw [] (x1)--(x2)--(x3)--(x4)--cycle;
            \def\radius{8pt}
            \draw [fill=white](x1) circle [radius=\radius];
            \draw (x1) node[] {\scriptsize{\color{\mycolor}$1$}};
            \draw [fill=white](x2) circle [radius=\radius];
            \draw (x2) node[] {\scriptsize{\color{\mycolor}$2$}};
            \draw [fill=white](x3) circle [radius=\radius];
            \draw (x3) node[] {\scriptsize{\color{\mycolor}$3$}};
            \draw [fill=white](x4) circle [radius=\radius];
            \draw (x4) node[] {\scriptsize{\color{\mycolor}$4$}};
\end{tikzpicture}
}
\qquad
\subfloat[]{
\begin{tikzpicture}[scale=\myscale]
            \def\length{3}
            \coordinate (x1) at (0,0);
            \coordinate (x2) at (\length/2*3,0);
            \coordinate (x3) at (\length/2*3,-\length);
            \coordinate (x4) at (0,-\length);
            \draw [] (x1)--(x2)--(x3)--(x4)--cycle;
            \def\radius{8pt}
            \draw [fill=white](x1) circle [radius=\radius];
            \draw (x1) node[] {\scriptsize{\color{\mycolor}$1$}};
            \draw [fill=white](x2) circle [radius=\radius];
            \draw (x2) node[] {\scriptsize{\color{\mycolor}$2$}};
            \draw [fill=white](x3) circle [radius=\radius];
            \draw (x3) node[] {\scriptsize{\color{\mycolor}$3$}};
            \draw [fill=white](x4) circle [radius=\radius];
            \draw (x4) node[] {\scriptsize{\color{\mycolor}$4$}};
\end{tikzpicture}
}
  \caption{The two frameworks are bearing equivalent but \emph{not} bearing congruent.
   The bearings between $(p_1,p_3)$ or $(p_2,p_4)$ of the frameworks are different.}
   \vspace{-15pt}
  \label{fig_equivNotCongru}
\end{figure}

By definition, bearing congruency implies bearing equivalency.
The converse, however, is not true, as illustrated in Figure~\ref{fig_equivNotCongru}.

\begin{definition}[Bearing Rigidity]
    A framework $\G(p)$ is \emph{bearing rigid} if there exists a constant $\epsilon>0$ such that any framework $\G(p')$ that is bearing equivalent to $\G(p)$ and satisfies $\|p'-p\|<\epsilon$ is also bearing congruent to $\G(p)$.
\end{definition}

\begin{definition}[Global Bearing Rigidity]\label{definition_globalParallelRigidity}
    A framework $\G(p)$ is \emph{globally bearing rigid} if an arbitrary framework that is bearing equivalent to $\G(p)$ is also bearing congruent to $\G(p)$.
\end{definition}

By definition, global bearing rigidity implies bearing rigidity.
As will be shown later, the converse is also true.

We next define \emph{infinitesimal bearing rigidity}, which is one of the most important concepts in the bearing rigidity theory.
Consider an arbitrary orientation of the graph $\G$ and denote
\begin{align}\label{eq_ek_gk_def}
    e_{k}\triangleq p_j-p_i,\quad g_{k}\triangleq {e_{k}}/{\|e_{k}\|}, \quad \forall k\in\{1,\dots,m\}
\end{align}
as the edge vector and the bearing for the $k$th directed edge.  
  Denote $e=[e_1^\T ,\dots,e_m^\T ]^\T$ and $g=[g_1^\T ,\dots,g_m^\T ]^\T$.
Note $e$ satisfies $e=\bar{H}p$ where $\bar{H}=H\otimes I_d$. 
Define the \emph{bearing function} $F_B: \R^{dn}\rightarrow\R^{dm}$ as
\begin{align*}
    F_B(p)\triangleq
    \left[
            \begin{array}{ccc}
              g_1^\T & \cdots & g_m^\T
            \end{array}
    \right]^\T\in\R^{dm}.
\end{align*}
The bearing function describes all the bearings in the framework.
The \emph{bearing rigidity matrix} is defined as the Jacobian of the bearing function,
\begin{align}\label{eq_rigidityMatrixDefinition}
    R(p) \triangleq \frac{\partial F_B(p)}{\partial p}\in\R^{dm\times dn}.
\end{align}
Let $\delta p$ be a variation of the configuration $p$.
If $R(p)\delta p=0$, then $\delta{p}$ is called an \emph{infinitesimal bearing motion} of $\G(p)$.  This is analogous to infinitesimal motions in distance-based rigidity.
Distance preserving motions of a framework include rigid-body translations and rotations, whereas bearing preserving motions of a framework include translations and scalings.
An infinitesimal bearing motion is called \emph{trivial} if it corresponds to a translation and a scaling of the entire framework.

\begin{definition}[Infinitesimal Bearing Rigidity]\label{definition_infinitesimalParallelRigid}
    A framework is \emph{infinitesimally bearing rigid} if all the infinitesimal bearing motions are trivial.
\end{definition}

Up to this point, we have introduced all the fundamental concepts in the bearing rigidity theory.
We next explore the properties of these concepts.
We first derive a useful expression for the bearing rigidity matrix.

\begin{lemma}
The bearing rigidity matrix in \eqref{eq_rigidityMatrixDefinition} can be expressed as
\begin{align}\label{eq_rigidityMatrixForm}
    R(p) = \mydiag\left(\frac{P_{g_k}}{\|e_k\|}\right)\bar{H}.
\end{align}
\end{lemma}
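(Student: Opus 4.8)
The plan is to compute the Jacobian in \eqref{eq_rigidityMatrixDefinition} block by block via the chain rule, exploiting the fact that the $k$th block $g_k$ of $F_B(p)$ depends on $p$ only through the edge vector $e_k=p_j-p_i$. First I would record the elementary derivative of the normalization map $e\mapsto e/\|e\|$ on $\R^d\setminus\{0\}$: a direct calculation gives
\begin{align*}
    \frac{\partial}{\partial e}\left(\frac{e}{\|e\|}\right) = \frac{1}{\|e\|}I_d - \frac{ee^\T}{\|e\|^3} = \frac{1}{\|e\|}\left(I_d - \frac{e}{\|e\|}\frac{e^\T}{\|e\|}\right) = \frac{P_e}{\|e\|}.
\end{align*}
Since $P_x$ depends on $x$ only through its direction, $P_{e_k}=P_{g_k}$, and therefore $\partial g_k/\partial e_k = P_{g_k}/\|e_k\|$.

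Next I would propagate this through $e_k=p_j-p_i$. For the $k$th directed edge with tail $i$ and head $j$, the chain rule yields $\partial g_k/\partial p_i = -P_{g_k}/\|e_k\|$, $\partial g_k/\partial p_j = P_{g_k}/\|e_k\|$, and $\partial g_k/\partial p_\ell = 0$ for every other vertex $\ell$. Stacking these derivatives over all $k$ exhibits $R(p)$ as an $m\times n$ array of $d\times d$ blocks whose $(k,\ell)$ block equals $[H]_{k\ell}\,P_{g_k}/\|e_k\|$.

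Finally I would identify this array with the claimed product. Because $\bar H=H\otimes I_d$ has $(k,\ell)$ block $[H]_{k\ell}I_d$, left-multiplying $\bar H$ by $\mydiag(P_{g_k}/\|e_k\|)$ scales the $k$th block row by $P_{g_k}/\|e_k\|$, producing exactly the block $[H]_{k\ell}\,P_{g_k}/\|e_k\|$ in position $(k,\ell)$; hence $R(p)=\mydiag(P_{g_k}/\|e_k\|)\bar H$. There is no genuine obstacle here — the argument is a routine chain-rule computation — so the only things to be careful about are bookkeeping: keeping the head/tail sign convention of the incidence matrix consistent with the definition $e_k = p_j - p_i$, and invoking the identity $P_{e_k}=P_{g_k}$ so that the result can be written in terms of the bearings $g_k$ rather than the edge vectors $e_k$.
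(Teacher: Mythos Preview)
Your proposal is correct and follows essentially the same approach as the paper: compute $\partial g_k/\partial e_k = P_{g_k}/\|e_k\|$ and then apply the chain rule through $e=\bar H p$. The paper's version is slightly terser, writing $R(p)=(\partial F_B/\partial e)(\partial e/\partial p)$ directly rather than checking the $(k,\ell)$ block entries against $[H]_{k\ell}P_{g_k}/\|e_k\|$, but the content is identical.
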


\begin{proof}
    It follows from $g_k=e_k/\|e_k\|, \forall k\in\{1,\dots,m\}$ that
    \begin{align*}
        \frac{\partial g_k}{\partial e_k} = \frac{1}{\|e_k\|}\left(I_d - \frac{e_k}{\|e_k\|}\frac{e_k^\T }{\|e_k\|}\right)= \frac{1}{\|e_k\|}P_{g_k}.
    \end{align*}
    As a result, $\partial F_B(p)/\partial e=\dia{P_{g_k}/\|e_k\|}$ and consequently
    \begin{align*}
        R(p)=\frac{\partial F_B(p)}{\partial p}
        =\frac{\partial F_B(p)}{\partial e}\frac{\partial e}{\partial p}
        =\mydiag\left(\frac{P_{g_k}}{\|e_k\|}\right)\bar{H}.
    \end{align*}
\end{proof}
The expression \eqref{eq_rigidityMatrixForm} can be used to characterize the null space and the rank of the bearing rigidity matrix.

\begin{lemma}\label{lemma_nullSpaceRigidity}
    A framework $\G(p)$ in $\R^d$ always satisfies $\myspan\{\one\otimes I_d, p\}\subseteq \Null(R(p))$ and $\rank(R(p))\le dn-d-1$.
\end{lemma}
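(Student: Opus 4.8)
The plan is to verify directly that each of the two vectors $\one\otimes I_d$ (meaning each column of this $dn\times d$ matrix) and $p$ lies in $\Null(R(p))$, using the factored form $R(p)=\mydiag(P_{g_k}/\|e_k\|)\bar H$ from \eqref{eq_rigidityMatrixForm}, and then argue that these $d+1$ directions are linearly independent so that the rank bound follows.

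First I would treat the translation directions. Writing $\bar H = H\otimes I_d$, the key fact is $H\one = 0$ for a connected graph (and in general $H\one=0$ since every row of $H$ has one $+1$ and one $-1$), hence $\bar H(\one\otimes v) = (H\one)\otimes v = 0$ for any $v\in\R^d$. Therefore $R(p)(\one\otimes v)=\mydiag(P_{g_k}/\|e_k\|)\bar H(\one\otimes v)=0$, which shows every column of $\one\otimes I_d$ is in the null space; geometrically this is the statement that translating the whole framework does not change any bearing. Next I would handle the scaling direction $p$ itself: here $\bar H p = e = [e_1^\T,\dots,e_m^\T]^\T$, so the $k$th block of $R(p)p$ is $(P_{g_k}/\|e_k\|)e_k$. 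Since $e_k = \|e_k\| g_k$ and $P_{g_k}g_k=0$ (because $\Null(P_x)=\myspan\{x\}$, or directly $P_{g_k}=I_d-g_kg_k^\T$ with $\|g_k\|=1$), each block vanishes, so $R(p)p=0$; geometrically, uniformly scaling the framework about the origin leaves all bearings fixed.

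It remains to combine these into the rank estimate. The span $\myspan\{\one\otimes I_d, p\}$ has dimension $d+1$ provided $p\notin\myspan\{\one\otimes e_1,\dots,\one\otimes e_d\}$, i.e. provided $p$ is not of the form $\one\otimes v$ for some $v\in\R^d$; but $p=\one\otimes v$ would mean $p_1=\cdots=p_n=v$, which is excluded because the points are distinct (indeed $e_{ij}\neq 0$ is implicit in the definition of $g_{ij}$ for each edge). Hence $\dim\Null(R(p))\ge d+1$, and by rank–nullity $\rank(R(p)) = dn - \dim\Null(R(p)) \le dn - d - 1$.

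I do not expect a genuine obstacle here; the proof is a short computation once \eqref{eq_rigidityMatrixForm} is available. The only point requiring a word of care is the linear independence claim, i.e. ruling out the degenerate configuration in which all points coincide — but that case is already outside the setup since edge vectors must be nonzero for the bearings $g_k$ to be defined. A secondary subtlety worth a remark is that when the graph is disconnected, $\Null(\bar H)$ is larger (it contains $\one_{\mathcal C}\otimes v$ for each connected component $\mathcal C$), so the null space of $R(p)$ can be strictly larger than $d+1$; the lemma only asserts containment and an inequality, both of which remain valid, so nothing needs to change.
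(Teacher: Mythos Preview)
Your proposal is correct and follows essentially the same approach as the paper: show $\one\otimes I_d$ is annihilated because $\bar H(\one\otimes I_d)=0$, show $p$ is annihilated because $P_{g_k}e_k=0$ for each $k$, and then deduce the rank bound from the $(d+1)$-dimensional null space. In fact you are more careful than the paper, which simply asserts that the rank inequality ``follows immediately'' without spelling out the linear independence of $\{\one\otimes I_d, p\}$ that you justify via the nondegeneracy of the configuration.
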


\begin{proof}
    First, it is clear that $\myspan\{\one\otimes I_d\}\subseteq\Null(\bar{H})\subseteq\Null(R(p))$.
    Second, since $P_{e_k}e_k=0$, we have $R(p)p=\mydiag(P_{e_k}/\|e_k\|)\bar{H}p=\mydiag(P_{e_k}/\|e_k\|)e=0$ and hence $p\subseteq\Null(R(p))$.
    The inequality $\rank(R(p))\le dn-d-1$ follows immediately from $\myspan\{\one\otimes I_d, p\}\subseteq \Null(R(p))$.
\end{proof}

For any undirected graph $\G=(\V,\E)$, denote $\G^\kappa$ as the complete graph over the same vertex set $\V$, and $R^\kappa(p)$ as the bearing rigidity matrix of the framework $G^\kappa(p)$.
The next result gives the necessary and sufficient conditions for bearing equivalency and bearing congruency.

\begin{theorem}\label{theorem_NSConditionForBEBC}
    Two frameworks $\G(p)$ and $\G(p')$ are bearing equivalent if and only if $R(p)p'=0$, and bearing congruent if and only if $R^\kappa(p)p'=0$.
\end{theorem}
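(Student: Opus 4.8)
The plan is to reduce both assertions to the explicit factorization of the bearing rigidity matrix given in \eqref{eq_rigidityMatrixForm}. First I would compute $R(p)p'$ directly. Since $R(p)=\mydiag(P_{g_k}/\|e_k\|)\bar H$ and, by the very identity $e=\bar H p$ used to define the edge vectors, the product $\bar H p'$ is precisely the stacked vector $e'=[{e_1'}^\T,\dots,{e_m'}^\T]^\T$ of edge vectors ${e_k'}=p_j'-p_i'$ of the configuration $p'$ (with the orientation fixed once and for all on $\G$), one obtains $R(p)p'=\mydiag(P_{g_k}/\|e_k\|)\,e'$. Because each factor $1/\|e_k\|$ is a positive scalar, $R(p)p'=0$ holds if and only if $P_{g_k}{e_k'}=0$ for every $k\in\{1,\dots,m\}$.

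The second ingredient is the observation that the projection operator depends only on direction: since $\|g_k\|=1$ and $g_k=e_k/\|e_k\|$ we have $P_{g_k}=I_d-g_kg_k^\T=P_{e_k}$, and moreover $P_{e_k}=P_{-e_k}=P_{(p_i-p_j)}$ while ${e_k'}=-(p_i'-p_j')$, so $P_{g_k}{e_k'}=P_{(p_i-p_j)}(p_i'-p_j')$. Comparing with Definition~\ref{def.bearingequiv}, the condition ``$P_{g_k}{e_k'}=0$ for all $k$'' is exactly the statement that $\G(p)$ and $\G(p')$ are bearing equivalent. Together with the previous paragraph this yields the first equivalence.

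For the second equivalence I would apply the first one to the complete graph $\G^\kappa$ on $\V$. By Definition~\ref{def.bearingcong}, $\G(p)$ and $\G(p')$ are bearing congruent precisely when $\G^\kappa(p)$ and $\G^\kappa(p')$ are bearing equivalent; by the part already proved, the latter holds if and only if $R^\kappa(p)p'=0$. This completes the argument.

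I do not expect a genuine obstacle here; the only points needing care are bookkeeping ones. One must keep the edge orientation used to build $R(p)$ consistent when identifying $\bar H p'$ with $e'$, and note that sign flips are immaterial because $P(\cdot)$ is even in its argument, so the cosmetic asymmetry between Definition~\ref{def.bearingequiv} (written with $p_i-p_j$) and the edge convention $e_k=p_j-p_i$ is harmless. It is also worth remarking that the statement remains meaningful when $p_i'=p_j'$ for some edge: the corresponding block condition reads $P_{g_k}\cdot 0=0$, which holds trivially and is consistent with both sides of each equivalence.
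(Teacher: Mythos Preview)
Your proposal is correct and follows essentially the same route as the paper: compute $R(p)p'=\mydiag(P_{g_k}/\|e_k\|)\bar H p'=\mydiag(P_{g_k}/\|e_k\|)e'$, use positivity of the scalar factors to reduce $R(p)p'=0$ to $P_{g_k}e_k'=0$ for all $k$, and invoke Definitions~\ref{def.bearingequiv} and~\ref{def.bearingcong} (the latter via the complete graph). The only cosmetic slip is that $P_{g_k}e_k'=-P_{(p_i-p_j)}(p_i'-p_j')$ rather than $+$, but as you note this is immaterial for the vanishing condition.
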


\begin{proof}
    Since $R(p)p'=\dia{I_d/\|e_k\|}\dia{P_{g_k}}\bar{H}p'=\dia{I_d/\|e_k\|}\dia{P_{g_k}}e'$, we have
    \begin{align*}
        R(p)p'=0 \,\Leftrightarrow P_{g_k}e_k'=0, \,\forall k\in\{1,\dots,m\}.
    \end{align*}
    Therefore, by Definition \ref{def.bearingequiv}, the two frameworks are bearing equivalent if and only if $R(p)p'=0$.
    By Definition \ref{def.bearingcong}, it can be analogously shown that frameworks are bearing equivalent if and only if $R^\kappa(p)p'=0$.
\end{proof}

Since any infinitesimal motion $\delta p$ is in $\Null(R(p))$, it is implied from Theorem~\ref{theorem_NSConditionForBEBC} that $R(p)(p+\delta p)=0$ and hence $\G(p+\delta p)$ is bearing equivalent to $\G(p)$.

We next give a useful lemma and then prove the necessary and sufficient condition for global bearing rigidity.

\begin{lemma}\label{lemma_NullRkappa_NullR}
A framework $\G(p)$ in $\R^d$ always satisfies $\myspan\{\one\otimes I_d, p\}\subseteq\Null(R^\kappa(p))\subseteq\Null(R(p))$ and $dn-d-1\ge\rank(R^\kappa(p))\ge\rank(R(p))$.
\end{lemma}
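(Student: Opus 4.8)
The plan is to obtain all four assertions by leveraging two facts that are already available: Lemma~\ref{lemma_nullSpaceRigidity}, which applies to \emph{any} framework and hence in particular to the complete-graph framework $\G^\kappa(p)$; and Theorem~\ref{theorem_NSConditionForBEBC}, which characterizes the null space of a bearing rigidity matrix through the orientation-free conditions $P_{g_{ij}}(\cdot)=0$. Everything else reduces to the rank--nullity theorem.

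First I would apply Lemma~\ref{lemma_nullSpaceRigidity} to the framework $\G^\kappa(p)$, which is a legitimate framework in $\R^d$. This immediately yields $\myspan\{\one\otimes I_d, p\}\subseteq\Null(R^\kappa(p))$ and $\rank(R^\kappa(p))\le dn-d-1$, i.e. the leftmost inclusion and the leftmost rank bound. Next, for the middle inclusion $\Null(R^\kappa(p))\subseteq\Null(R(p))$, I would take any $x=[x_1^\T,\dots,x_n^\T]^\T$ with $R^\kappa(p)x=0$. By the computation in the proof of Theorem~\ref{theorem_NSConditionForBEBC}, this is equivalent to $P_{g_{ij}}(x_i-x_j)=0$ for every pair $i,j\in\V$. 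Since $\E\subseteq\V\times\V$, these relations hold in particular for every $(i,j)\in\E$, and the same computation, now applied to $\G$ rather than $\G^\kappa$, gives $R(p)x=0$; thus $x\in\Null(R(p))$.

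Finally, the rank inequality $\rank(R^\kappa(p))\ge\rank(R(p))$ follows from the inclusion just proved: both $R^\kappa(p)$ and $R(p)$ have $dn$ columns, so $\dim\Null(R^\kappa(p))\le\dim\Null(R(p))$, and rank--nullity turns this into $dn-\rank(R^\kappa(p))\le dn-\rank(R(p))$. Chaining with the first step gives $dn-d-1\ge\rank(R^\kappa(p))\ge\rank(R(p))$. I do not anticipate a substantive obstacle here; the one place that needs a little care is that $R(p)$ and $R^\kappa(p)$ are built from independently chosen orientations of their respective edge sets, so one cannot literally regard the rows of $R(p)$ as a sub-block of those of $R^\kappa(p)$ — routing the middle inclusion through the orientation-independent characterization of Theorem~\ref{theorem_NSConditionForBEBC} is precisely what sidesteps this bookkeeping.
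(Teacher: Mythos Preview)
Your proof is correct and follows essentially the same skeleton as the paper's: apply Lemma~\ref{lemma_nullSpaceRigidity} to the complete-graph framework for the first inclusion and rank bound, establish $\Null(R^\kappa(p))\subseteq\Null(R(p))$ via Theorem~\ref{theorem_NSConditionForBEBC}, and finish with rank--nullity. The one minor difference is in how the middle inclusion is executed: the paper adds $p$ to an element $\delta p\in\Null(R^\kappa(p))$, observes that $\G(p+\delta p)$ is bearing congruent to $\G(p)$, invokes ``congruency $\Rightarrow$ equivalency'' to get $R(p)(p+\delta p)=0$, and then subtracts $p$ again; you instead read off directly from the block computation in the proof of Theorem~\ref{theorem_NSConditionForBEBC} that $R^\kappa(p)x=0$ means $P_{g_{ij}}(x_i-x_j)=0$ for all pairs, hence in particular for pairs in $\E$, hence $R(p)x=0$. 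Your route is a touch more direct and, as you note, cleanly sidesteps any worry about the independently chosen edge orientations; the paper's route has the small expository benefit of phrasing the step in terms of the named notions of congruency and equivalency.
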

\begin{proof}
The results that $\myspan\{\one\otimes I_d, p\}\subseteq\Null(R^\kappa(p))$ and $dn-d-1\ge\rank(R^\kappa(p))$ can be proved similarly as Lemma~\ref{lemma_nullSpaceRigidity}.
    For any $\delta p\in\Null(R^\kappa(p))$, we have $R^\kappa(p)\delta p=0\Rightarrow R^\kappa(p)(p+\delta p)=0$.
    As a result, $\G(p+\delta p)$ is bearing congruent to $\G(p)$ by Theorem~\ref{theorem_NSConditionForBEBC}.
    Since bearing congruency implies bearing equivalency, we further know $R(p)(p+\delta p)=0$ and hence $R(p)\delta p=0$.
    Therefore, any $\delta p$ in $\Null(R^\kappa(p))$ is also in $\Null(R(p))$ and thus $\Null(R^\kappa(p))\subseteq\Null(R(p))$.
    Since $R(p)$ and $R^\kappa(p)$ have the same column number, it follows immediately that $\rank(R^\kappa(p))\ge\rank(R(p))$.
\end{proof}

\begin{theorem}[Condition for Global Bearing Rigidity]\label{theorem_NSConditionforGBR}
    A framework $\G(p)$ in $\R^d$ is globally bearing rigid if and only if $\Null(R^\kappa(p))=\Null(R(p))$ or equivalently $\rank(R^\kappa(p))=\rank(R(p))$.
\end{theorem}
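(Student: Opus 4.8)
The plan is to prove both directions of the equivalence using Theorem~\ref{theorem_NSConditionForBEBC} as the main engine, together with Definition~\ref{definition_globalParallelRigidity} of global bearing rigidity and Lemma~\ref{lemma_NullRkappa_NullR} for the bookkeeping on null spaces and ranks. The final equivalence between the null-space condition $\Null(R^\kappa(p))=\Null(R(p))$ and the rank condition $\rank(R^\kappa(p))=\rank(R(p))$ is immediate from Lemma~\ref{lemma_NullRkappa_NullR}: we already have the inclusion $\Null(R^\kappa(p))\subseteq\Null(R(p))$ and the two matrices have the same number of columns, so equality of null spaces and equality of ranks are the same statement. So the real content is the characterization of global bearing rigidity.

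For the ``if'' direction, I would assume $\Null(R^\kappa(p))=\Null(R(p))$ and take an arbitrary framework $\G(p')$ that is bearing equivalent to $\G(p)$. By Theorem~\ref{theorem_NSConditionForBEBC}, bearing equivalency gives $R(p)p'=0$, i.e.\ $p'\in\Null(R(p))=\Null(R^\kappa(p))$. Then $R^\kappa(p)p'=0$, which by Theorem~\ref{theorem_NSConditionForBEBC} again says exactly that $\G(p')$ is bearing congruent to $\G(p)$. Since $\G(p')$ was arbitrary, $\G(p)$ is globally bearing rigid by Definition~\ref{definition_globalParallelRigidity}.

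For the ``only if'' direction, I would argue the contrapositive: suppose $\Null(R^\kappa(p))\subsetneq\Null(R(p))$ (the inclusion is always one-sided by Lemma~\ref{lemma_NullRkappa_NullR}), so there exists $p'\in\Null(R(p))\setminus\Null(R^\kappa(p))$. Then $R(p)p'=0$ but $R^\kappa(p)p'\neq0$, so by Theorem~\ref{theorem_NSConditionForBEBC} the framework $\G(p')$ is bearing equivalent to $\G(p)$ but not bearing congruent to it, which contradicts global bearing rigidity. Hence global bearing rigidity forces $\Null(R^\kappa(p))=\Null(R(p))$.

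The only subtlety worth flagging is a definitional one, not a computational obstacle: one must be careful that ``bearing equivalent/congruent framework'' in Definition~\ref{definition_globalParallelRigidity} is exactly the same notion captured algebraically by Theorem~\ref{theorem_NSConditionForBEBC}, including degenerate configurations $p'$ (e.g.\ collapsed points) that still satisfy $R(p)p'=0$. As long as Definitions~\ref{def.bearingequiv}--\ref{def.bearingcong} and Theorem~\ref{theorem_NSConditionForBEBC} are read at face value for all $p'\in\R^{dn}$, the argument above is complete; there is no hard step, the proof is essentially a direct translation between the geometric definition and its linear-algebraic reformulation.
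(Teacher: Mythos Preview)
Your proposal is correct and follows essentially the same approach as the paper: both directions rely on Theorem~\ref{theorem_NSConditionForBEBC} to translate bearing equivalency/congruency into membership in $\Null(R(p))$/$\Null(R^\kappa(p))$, together with Lemma~\ref{lemma_NullRkappa_NullR} for the standing inclusion and the rank equivalence. The only cosmetic difference is that the paper proves necessity directly (taking $\delta p\in\Null(R(p))$ and using $R(p)(p+\delta p)=0$) while you argue by contrapositive with $p'$ itself, but this is the same argument up to the substitution $p'=p+\delta p$.
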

\begin{proof}
    \emph{(Necessity)}
    Suppose the framework $\G(p)$ is globally bearing rigid.
    We next show that $\Null(R(p))\subseteq\Null(R^\kappa(p))$.
    For any $\delta p\in \Null(R(p))$, we have $R(p)\delta p=0\Rightarrow R(p)(p+\delta p)=0$.
    As a result, $\G(p+\delta p)$ is bearing equivalent to $\G(p)$ according to Theorem~\ref{theorem_NSConditionForBEBC}.
    Since $\G(p)$ is globally bearing rigid, it follows that $\G(p+\delta p)$ is also bearing congruent to $\G(p)$, which means $R^\kappa(p)(p+\delta p)=0\Rightarrow R^\kappa(p)\delta p=0$.
    Therefore, any $\delta p$ in $\Null(R(p))$ is in $\Null(R^\kappa(p))$ and thus $\Null(R(p))\subseteq\Null(R^\kappa(p))$.
    Since $\Null(R^\kappa(p))\subseteq\Null(R(p))$ as shown in Lemma~\ref{lemma_NullRkappa_NullR}, we have $\Null(R(p))=\Null(R^\kappa(p))$.

    \emph{(Sufficiency)} Suppose $\Null(R(p))=\Null(R^\kappa(p))$.
    Any framework $\G(p')$ that is bearing equivalent to $\G(p)$ satisfies $R(p)p'=0$.
    It then follows from $\Null(R(p))=\Null(R^\kappa(p))$ that $R^\kappa(p)p'=0$, which means $\G(p')$ is also bearing congruent to $\G(p)$.
    As a result, $\G(p)$ is globally bearing rigid.

    Because $R(p)$ and $R^\kappa(p)$ have the same column number, it follows immediately that $\Null(R^\kappa(p))=\Null(R(p))$ if and only if $\rank(R^\kappa(p))=\rank(R(p))$.
\end{proof}

The following result shows that bearing rigidity and global bearing rigidity are equivalent notions.

\begin{theorem}[Condition for Bearing Rigidity]\label{theorem_BRimplyGBR}
    A framework $\G(p)$ in $\R^d$ is bearing rigid if and only if it is globally bearing rigid.
\end{theorem}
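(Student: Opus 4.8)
The plan is to prove that bearing rigidity implies global bearing rigidity (the converse is immediate from the definitions). By Theorem~\ref{theorem_NSConditionforGBR}, it suffices to show that if $\G(p)$ is bearing rigid, then $\Null(R(p)) = \Null(R^\kappa(p))$; by Lemma~\ref{lemma_NullRkappa_NullR} the inclusion $\Null(R^\kappa(p)) \subseteq \Null(R(p))$ always holds, so the only thing to establish is $\Null(R(p)) \subseteq \Null(R^\kappa(p))$. So I fix an arbitrary $\delta p \in \Null(R(p))$ and aim to conclude $\delta p \in \Null(R^\kappa(p))$.

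The key idea is to exploit the \emph{scaling freedom} in bearing rigidity to convert a ``small perturbation'' argument into a ``global'' one. Take $\delta p \in \Null(R(p))$. For any scalar $t \in \R$, the vector $t\,\delta p$ is also in $\Null(R(p))$ (it is a linear subspace), and by the remark following Theorem~\ref{theorem_NSConditionForBEBC}, $R(p)(p + t\,\delta p) = 0$, i.e., $\G(p + t\,\delta p)$ is bearing equivalent to $\G(p)$ for \emph{every} $t$. Now choose $t$ small enough that $\|t\,\delta p\| < \epsilon$, where $\epsilon > 0$ is the constant from the definition of bearing rigidity. Then bearing rigidity forces $\G(p + t\,\delta p)$ to be bearing congruent to $\G(p)$, which by Theorem~\ref{theorem_NSConditionForBEBC} means $R^\kappa(p)(p + t\,\delta p) = 0$. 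Since $R^\kappa(p)p = 0$ by Lemma~\ref{lemma_NullRkappa_NullR}, this gives $t\,R^\kappa(p)\,\delta p = 0$, and dividing by the nonzero scalar $t$ yields $R^\kappa(p)\,\delta p = 0$, i.e., $\delta p \in \Null(R^\kappa(p))$. This establishes $\Null(R(p)) \subseteq \Null(R^\kappa(p))$, hence equality, and we conclude by Theorem~\ref{theorem_NSConditionforGBR} that $\G(p)$ is globally bearing rigid.

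The mechanism that makes this work — and the one subtle point worth flagging — is the \emph{linearity of the null space} combined with the fact that bearing equivalency is characterized by the \emph{linear} condition $R(p)p' = 0$ (Theorem~\ref{theorem_NSConditionForBEBC}). This is what allows scaling a nearby bearing-equivalent framework down into the $\epsilon$-ball without losing bearing equivalency, and then scaling back up without losing the congruency conclusion. There is no real analytic obstacle here: unlike distance rigidity, where the analogous implication (rigid $\Rightarrow$ globally rigid) is \emph{false}, the bearing setting is governed entirely by linear-algebraic relations on $\Null(R(p))$ and $\Null(R^\kappa(p))$, so local information propagates globally. The only thing to be careful about is the degenerate case $\delta p = 0$, which trivially lies in both null spaces, so we may assume $\delta p \neq 0$ when choosing $t$; with that caveat the argument is complete.
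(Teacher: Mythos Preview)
Your proof is correct and takes essentially the same approach as the paper: both argue that bearing rigidity gives the implication $R(p)\delta p=0\Rightarrow R^\kappa(p)\delta p=0$ for all $\delta p$ in an $\epsilon$-ball, and then use the linearity of the null space to remove the norm restriction. The paper states this last step tersely (``in spite of the constraint of $\|\delta p\|$''), whereas you spell out the scaling-by-$t$ argument explicitly; the content is identical.
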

\begin{proof}
    By definition, global bearing rigidity implies bearing rigidity.
    We next prove the converse is also true.
    Suppose the framework $\G(p)$ is bearing rigid.
    By the definition of bearing rigidity and Theorem~\ref{theorem_NSConditionForBEBC}, any framework satisfying $R(p)p'=0$ and $\|p'-p\|\le\epsilon$ also satisfies $R^\kappa(p)p'=0$, i.e.,
    $$R(p)(p+\delta p)=0 \Rightarrow R^\kappa(p)(p+\delta p)=0, \quad \forall \delta p, \|\delta p\|\le\epsilon,$$
    where $\delta p= p'-p$.
    It then follows from $R(p)p=0$ and $R^\kappa(p)p=0$ that
    $R(p)\delta p=0 \Rightarrow R^\kappa(p)\delta p=0$ for all $\|\delta p\|\le\epsilon.$
    This means $\Null(R(p))\subseteq \Null(R^\kappa(p))$ in spite of the constraint of $\|\delta p\|$.
    Since $\Null(R^\kappa(p))\subseteq \Null(R(p))$ as shown in Lemma~\ref{lemma_NullRkappa_NullR}, we further have $\Null(R(p))= \Null(R^\kappa(p))$ and consequently $\G(p)$ is globally bearing rigid.
\end{proof}

We next give the necessary and sufficient condition for infinitesimal bearing rigidity.

\begin{theorem}[Condition for Infinitesimal Bearing Rigidity]\label{theorem_conditionInfiParaRigid}
    For a framework $\G(p)$ in $\R^d$, the following statements are equivalent:
    \begin{enumerate}[(a)]
    \item $\G(p)$ is {infinitesimally bearing rigid};
    \item $\rank(R(p))=dn-d-1$;
    \item $\Null(R(p))=\myspan\{\one\otimes I_d, p\} = \myspan\{\one\otimes I_d, p-\one\otimes\bar{p}\}$, where $\bar{p}=(\one\otimes I_d)^\T p/n$ is the centroid of $\{p_i\}_{i\in\V}$.
    \end{enumerate}
\end{theorem}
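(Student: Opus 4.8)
The plan is to take Lemma~\ref{lemma_nullSpaceRigidity} as the starting point and prove the chain (b)~$\Leftrightarrow$~(c) and (a)~$\Leftrightarrow$~(c). First I would record two preliminary facts. (i) Since $\bar p\in\R^d$, the vector $\one\otimes\bar p$ lies in the column space of $\one\otimes I_d$, so $\myspan\{\one\otimes I_d, p\}=\myspan\{\one\otimes I_d, p-\one\otimes\bar p\}$; this handles the second equality in (c) immediately. (ii) Because the framework has edges of nonzero length the points $\{p_i\}$ are not all coincident, hence $p\notin\myspan\{\one\otimes I_d\}$, and therefore $\dim\myspan\{\one\otimes I_d, p\}=d+1$ exactly.

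For (b)~$\Leftrightarrow$~(c): by Lemma~\ref{lemma_nullSpaceRigidity} the fixed $(d+1)$-dimensional subspace $\myspan\{\one\otimes I_d, p\}$ is contained in $\Null(R(p))$, so $\dim\Null(R(p))\ge d+1$, i.e.\ $\rank(R(p))\le dn-d-1$, with equality if and only if $\dim\Null(R(p))=d+1$, if and only if that inclusion is in fact an equality. This is exactly the equivalence of (b) and (c).

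For (a)~$\Leftrightarrow$~(c): the key is to identify precisely the set of \emph{trivial} infinitesimal bearing motions. A translation $\delta p=\one\otimes v$ satisfies $\bar H\delta p=(H\one)\otimes v=0$, hence $R(p)\delta p=0$. A uniform scaling of the framework about an arbitrary center $c\in\R^d$ is the curve $t\mapsto p+t(p-\one\otimes c)$, whose velocity $p-\one\otimes c$ satisfies $R(p)(p-\one\otimes c)=R(p)p-R(p)(\one\otimes c)=0$, using $R(p)p=0$ and $\bar H(\one\otimes c)=0$ from the proof of Lemma~\ref{lemma_nullSpaceRigidity}. Writing $p-\one\otimes c=(p-\one\otimes\bar p)+\one\otimes(\bar p-c)$ shows every translation-plus-scaling velocity lies in $\myspan\{\one\otimes I_d, p-\one\otimes\bar p\}=\myspan\{\one\otimes I_d, p\}$, and conversely every element of this span is such a velocity; so the trivial infinitesimal bearing motions form exactly this $(d+1)$-dimensional subspace. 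By Definition~\ref{definition_infinitesimalParallelRigid}, $\G(p)$ is infinitesimally bearing rigid if and only if $\Null(R(p))\subseteq\myspan\{\one\otimes I_d, p\}$, which, combined with the reverse inclusion of Lemma~\ref{lemma_nullSpaceRigidity}, is equivalent to (c).

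The step I expect to demand the most care is this last identification of the trivial motion space: one must argue that "a translation and a scaling" generates neither more nor fewer than the $d+1$ independent directions in $\myspan\{\one\otimes I_d, p\}$ — being explicit that scalings about different centers differ only by translations, and invoking the non-degeneracy of the framework so the dimension is genuinely $d+1$ rather than smaller. Everything else is a matter of assembling the rank/nullity bookkeeping already supplied by Lemma~\ref{lemma_nullSpaceRigidity}.
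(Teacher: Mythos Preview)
Your proposal is correct and follows essentially the same line as the paper's proof: invoke Lemma~\ref{lemma_nullSpaceRigidity} for the inclusion $\myspan\{\one\otimes I_d,p\}\subseteq\Null(R(p))$, identify this span with the trivial (translation-plus-scaling) infinitesimal motions, and read off the rank/null-space equivalences. The paper's own argument is considerably terser---it simply observes that $\one\otimes I_d$ and $p$ correspond to translation and scaling and declares the result immediate from Definition~\ref{definition_infinitesimalParallelRigid}---whereas you spell out the dimension count, the rank--nullity bookkeeping for (b)$\Leftrightarrow$(c), and the fact that scalings about different centers differ only by translations; these are genuine details the paper leaves implicit, so your version is if anything more complete.
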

\begin{proof}
Lemma~\ref{lemma_nullSpaceRigidity} shows $\myspan\{\one\otimes I_d, p\}\subseteq \Null(R(p))$.
Observe $\one\otimes I_d$ and $p$ correspond to a rigid-body translation and a scaling of the framework, respectively.
The stated condition directly follows from Definition \ref{definition_infinitesimalParallelRigid}.  Note also that $\{\one\otimes I_d, p-\one\otimes\bar{p}\}$ is an orthogonal basis for $\myspan\{\one\otimes I_d, p\}$.
\end{proof}

The special cases of $\R^2$ and $\R^3$ are of particular interest.
A framework $\G(p)$ is infinitesimally bearing rigid in $\R^2$ if and only if $\rank(R(p))=2n-3$, and in $\R^3$ if and only if $\rank(R(p))=3n-4$.
Note Theorem~\ref{theorem_conditionInfiParaRigid} does not require $n\ge d$.

The following result characterizes the relationship between infinitesimal bearing rigidity and global bearing rigidity.

\begin{theorem}\label{theorem_IBRimplyGBR}
Infinitesimal bearing rigidity implies global bearing rigidity.
\end{theorem}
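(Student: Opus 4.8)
The plan is to deduce this directly from the two characterization theorems already established, namely Theorem~\ref{theorem_NSConditionforGBR} (the rank/null-space condition for global bearing rigidity) and Theorem~\ref{theorem_conditionInfiParaRigid} (the rank condition for infinitesimal bearing rigidity), together with the sandwich inequality on ranks from Lemma~\ref{lemma_NullRkappa_NullR}. So this statement is essentially a corollary, and the argument should be short.

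First I would assume $\G(p)$ is infinitesimally bearing rigid. By the equivalence (a)$\Leftrightarrow$(b) in Theorem~\ref{theorem_conditionInfiParaRigid}, this gives $\rank(R(p)) = dn - d - 1$. Next I would invoke Lemma~\ref{lemma_NullRkappa_NullR}, which asserts $dn - d - 1 \ge \rank(R^\kappa(p)) \ge \rank(R(p))$. Substituting $\rank(R(p)) = dn - d - 1$ into this chain forces $\rank(R^\kappa(p)) = \rank(R(p)) = dn - d - 1$; that is, the complete-graph bearing rigidity matrix and the original bearing rigidity matrix have equal rank. Finally, by Theorem~\ref{theorem_NSConditionforGBR}, the equality $\rank(R^\kappa(p)) = \rank(R(p))$ is exactly the necessary and sufficient condition for $\G(p)$ to be globally bearing rigid, which completes the proof.

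There is no real obstacle here: the only thing to be careful about is that Lemma~\ref{lemma_NullRkappa_NullR} already supplies both the upper bound $dn-d-1$ on $\rank(R^\kappa(p))$ and the monotonicity $\rank(R^\kappa(p)) \ge \rank(R(p))$, so the infinitesimal rigidity hypothesis pins the rank down at the top of the admissible range and the conclusion follows. If one instead wanted a null-space-based argument rather than a rank-based one, the same pinching applies to $\Null(R^\kappa(p))$: Lemma~\ref{lemma_NullRkappa_NullR} gives $\myspan\{\one\otimes I_d, p\}\subseteq \Null(R^\kappa(p)) \subseteq \Null(R(p))$, while infinitesimal bearing rigidity gives $\Null(R(p)) = \myspan\{\one\otimes I_d, p\}$, so all three spaces coincide and Theorem~\ref{theorem_NSConditionforGBR} again yields the claim. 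I would present the rank version as the primary argument since it is the most compact. It is also worth noting in passing that, combined with Theorem~\ref{theorem_BRimplyGBR}, this shows infinitesimal bearing rigidity implies bearing rigidity as well, though the converse need not hold.
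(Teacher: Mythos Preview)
Your proposal is correct and essentially identical to the paper's proof: the paper uses the null-space sandwich $\myspan\{\one\otimes I_d, p\}\subseteq\Null(R^\kappa(p))\subseteq\Null(R(p))$ from Lemma~\ref{lemma_NullRkappa_NullR} together with $\Null(R(p))=\myspan\{\one\otimes I_d, p\}$ from Theorem~\ref{theorem_conditionInfiParaRigid}, then invokes Theorem~\ref{theorem_NSConditionforGBR}. Your primary rank-based version and your alternate null-space version are both valid and equivalent to this argument.
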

\begin{proof}
Infinitesimal bearing rigidity implies $\Null(R(p))=\myspan\{\one\otimes I_d, p\}$.
Since $\myspan\{\one\otimes I_d, p\}\subseteq\Null(R^\kappa(p))\subseteq\Null(R(p))$ as shown in Lemma~\ref{lemma_NullRkappa_NullR}, it immediately follows from $\Null(R(p))=\myspan\{\one\otimes I_d, p\}$ that $\Null(R^\kappa(p))=\Null(R(p))$, which means $\G(p)$ is globally bearing rigid according to Theorem~\ref{theorem_NSConditionforGBR}.
\end{proof}

The converse of Theorem~\ref{theorem_IBRimplyGBR} is not true, i.e., global bearing rigidity does not imply infinitesimal bearing rigidity.
For example, the collinear framework as shown in Figure~\ref{fig_nonIBRExamples}(a) is globally bearing rigid but not infinitesimally bearing rigid.

We have at this point discussed three notions of bearing rigidity: (i)~bearing rigidity, (ii)~global bearing rigidity, and (iii)~infinitesimal bearing rigidity.
According to Theorem~\ref{theorem_BRimplyGBR} and Theorem~\ref{theorem_IBRimplyGBR}, the relationship between the three kinds of bearing rigidity can be summarized as below:
\begin{figure}[h]
\vskip-10pt
  \centering
  \def\myscale{0.35}
  \begin{tikzpicture}[scale=\myscale]
\centering
\def\w{3}
\def\h{3}
\def\offset{2.5}
\def\hOffset{1.2}
\coordinate (x1) at (0,0);
\coordinate (x1_left) at (-\offset,0);
\coordinate (x1_right) at (\offset,0);
\coordinate (x2) at (-\w,-\h);
\coordinate (x2_right) at (-\w,-\h);
\coordinate (x2_up) at (-\w-\offset,-\h+\hOffset);
\coordinate (x3) at (\w,-\h);
\coordinate (x3_left) at (\w,-\h);
\coordinate (x3_up) at (\w+\offset,-\h+\hOffset);
\draw[implies-implies,double equal sign distance] (x2_right) -- (x3_left);
\draw[-implies,double equal sign distance] (x1_left) -- (x2_up);
\draw[-implies,double equal sign distance] (x1_right) -- (x3_up);
\draw[] (x1)node[above,align=center]{infinitesimal \\ bearing rigidity};
\draw[] (x2)node[left]{bearing rigidity};
\draw[] (x3)node[right,align=center]{global\\bearing rigidity};
\end{tikzpicture}

  \vskip-20pt
\end{figure}

We next explore two important properties of infinitesimal bearing rigidity.
The following theorem shows that infinitesimal bearing rigidity can uniquely determine the shape of a framework.

\begin{theorem}[Unique Shape]\label{theorem_IPRImplyGPR}
    An infinitesimally bearing rigid framework can be uniquely determined up to a translational and a scaling factor.
\end{theorem}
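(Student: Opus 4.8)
The plan is to reduce the statement to a fact already established, namely Theorem~\ref{theorem_IBRimplyGBR} together with the characterization of global bearing rigidity in Definition~\ref{definition_globalParallelRigidity} and Theorem~\ref{theorem_NSConditionforGBR}. The phrase ``uniquely determined up to a translation and a scaling factor'' must first be made precise: I would interpret it as saying that if $\G(p)$ is infinitesimally bearing rigid and $\G(p')$ has the same inter-neighbor bearings (i.e.\ $g_{ij}' = g_{ij}$ for all $(i,j)\in\E$, equivalently $\G(p')$ is bearing equivalent to $\G(p)$), then $p'$ is obtained from $p$ by a translation and a positive scaling, that is, there exist $c\in\R$ with $c>0$ (or $c\neq 0$, depending on whether one wants to preserve bearing orientation) and $\xi\in\R^d$ such that $p' = cp + \one\otimes\xi$.

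First I would invoke Theorem~\ref{theorem_IBRimplyGBR} to conclude that $\G(p)$ is globally bearing rigid, so any bearing-equivalent $\G(p')$ is in fact bearing \emph{congruent} to $\G(p)$. By Theorem~\ref{theorem_NSConditionForBEBC}, bearing congruency is equivalent to $R^\kappa(p)p' = 0$, i.e.\ $p'\in\Null(R^\kappa(p))$. Now I would use Theorem~\ref{theorem_conditionInfiParaRigid}(c), which gives $\Null(R(p)) = \myspan\{\one\otimes I_d,\, p\}$, combined with the sandwich $\myspan\{\one\otimes I_d, p\}\subseteq\Null(R^\kappa(p))\subseteq\Null(R(p))$ from Lemma~\ref{lemma_NullRkappa_NullR}; these force $\Null(R^\kappa(p)) = \myspan\{\one\otimes I_d,\, p\}$. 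Hence $p' = (\one\otimes I_d)\xi + cp$ for some $\xi\in\R^d$ and $c\in\R$, which is exactly a translation by $\xi$ composed with a scaling by $c$.

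The remaining point is to argue $c\neq 0$ (and, if the orientation of bearings is to be respected, $c>0$): if $c=0$ then $p' = \one\otimes\xi$, i.e.\ all points of $\G(p')$ coincide, and then the bearings $g_{ij}'$ are undefined, contradicting bearing equivalency (which presupposes $e_{ij}'\neq 0$). More carefully, from $e_{ij}' = \bar H_{ij} p' = c\,e_{ij}$ we get $g_{ij}' = \sgn(c)\,g_{ij}$; bearing equivalency only requires $g_{ij}'$ parallel to $g_{ij}$, so $c\neq 0$ suffices for the ``up to translation and scaling'' claim, and one notes the sign is fixed to be positive precisely when $\G(p')$ has the identical (not merely parallel) bearings. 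I expect the only genuinely delicate step is this bookkeeping about whether $c$ can vanish or be negative — the linear-algebra core is immediate from the earlier theorems — so the write-up should state the interpretation of ``uniquely determined'' cleanly up front and then dispatch the degenerate case $c=0$ by the undefined-bearing argument.
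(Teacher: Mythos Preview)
Your proof is correct and rests on the same linear-algebraic core as the paper's, but you take a small detour. The paper works directly with $R(p)$: bearing equivalency already gives $R(p)p'=0$ by Theorem~\ref{theorem_NSConditionForBEBC}, and infinitesimal bearing rigidity gives $\Null(R(p))=\myspan\{\one\otimes I_d,p\}$ by Theorem~\ref{theorem_conditionInfiParaRigid}, so one immediately has $p'\in\myspan\{\one\otimes I_d,p\}$ (the paper writes this out via an orthogonal decomposition $p'=cp+\one\otimes\eta+q$ with $q\perp\myspan\{\one\otimes I_d,p\}$ and shows $q=0$). Your route instead invokes Theorem~\ref{theorem_IBRimplyGBR} to upgrade to bearing congruency and then argues with $R^\kappa(p)$; this is harmless because Lemma~\ref{lemma_NullRkappa_NullR} forces $\Null(R^\kappa(p))=\Null(R(p))$ here, but it is an unnecessary step. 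On the other hand, your handling of the degenerate case $c=0$ and the sign of $c$ is more careful than the paper's, which simply stipulates $c\in\R\setminus\{0\}$ in the decomposition without justification.
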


\begin{proof}
    Suppose $\G(p)$ is an infinitesimally bearing rigid framework in $\R^d$.
    Consider an arbitrary framework $\G(p')$ that is bearing equivalent to $\G(p)$.
    Our aim is to prove $\G(p')$ is different from $\G(p)$ only in a translation and a scaling factor.
    The configuration $p'$ can always be decomposed as
    \begin{align}\label{eq_IPR_uniqueShape}
        p'=cp+\one\otimes \eta+q,
    \end{align}
    where $c\in\R\setminus\{0\}$ is the scaling factor, $\eta\in\R^d$ denotes a rigid-body translation of the framework, and $q\in\R^{dn}$, which satisfies $q\perp\myspan\{\one\otimes I_d,p\}$, represents a transformation other than translation and scaling.
    We only need to prove $q=0$.
    Since infinitesimal bearing rigidity implies that $\Null(R(p))=\myspan\{\one\otimes I_d, p\}$, multiplying $R(p)$ on both sides of \eqref{eq_IPR_uniqueShape} yields
    \begin{align}\label{eq_IPR_uniqueShape2}
        R(p)p'=R(p)q.
    \end{align}
    Since $\G(p')$ is bearing equivalent to $\G(p)$, we have $R(p)p'=0$ by Theorem~\ref{theorem_NSConditionForBEBC}.
    Therefore, \eqref{eq_IPR_uniqueShape2} implies $R(p)q=0$.
    Since $q\perp\myspan\{\one\otimes I_d,p\}=\Null(R(p))$, the above equation suggests $q=0$.
    As a result, $p'$ is different from $p$ only in a scaling factor $c$ and a rigid-body translation $\eta$.
\end{proof}

The following theorem shows that if a framework is infinitesimally bearing rigid in a lower dimension, it is still infinitesimally bearing rigid when evaluated in a higher dimensional space.

\begin{theorem}[Invariance to Dimension]\label{theorem_IBRInvariantDimension}
    Infinitesimal bearing rigidity is invariant to space dimensions.
\end{theorem}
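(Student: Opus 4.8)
The plan is to reduce to a single dimension increment $d\to d+1$ and then induct. Given a framework $\G(p)$ that is infinitesimally bearing rigid in $\R^d$, embed it into $\R^{d+1}$ by appending a zero coordinate, i.e., set $\tilde p_i=[p_i^\T,0]^\T$ for every $i\in\V$, and let $\G(\tilde p)$ denote the resulting framework. (This is without loss of generality, since a rigid motion or a scaling of the ambient space changes the bearing rigidity matrix only by an orthogonal conjugation or a scalar factor, neither of which affects its rank.) I will show that $\G(\tilde p)$ is infinitesimally bearing rigid in $\R^{d+1}$; iterating the step then gives infinitesimal bearing rigidity in every $\R^{d'}$ with $d'\ge d$. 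By Theorem~\ref{theorem_conditionInfiParaRigid} it suffices to prove $\rank(R_{d+1}(\tilde p))=(d+1)n-(d+1)-1$, where $R_{d+1}$ denotes the bearing rigidity matrix evaluated in $\R^{d+1}$ and $R_d$ the one evaluated in $\R^d$.

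The key observation is that appending a zero coordinate forces every edge vector to have a vanishing last entry: $\tilde e_k=[e_k^\T,0]^\T$, hence $\|\tilde e_k\|=\|e_k\|$ and $\tilde g_k=[g_k^\T,0]^\T$, so the projection operator decomposes as
\begin{align*}
    P_{\tilde g_k}=\begin{bmatrix} P_{g_k} & 0 \\ 0 & 1 \end{bmatrix},\qquad k=1,\dots,m.
\end{align*}
Substituting into the formula \eqref{eq_rigidityMatrixForm}, $R_{d+1}(\tilde p)=\mydiag(P_{\tilde g_k}/\|e_k\|)(H\otimes I_{d+1})$, and permuting the $(d+1)n$ columns so that the first $d$ coordinates of all vertices are grouped together and their last coordinates are grouped separately (and permuting the $(d+1)m$ rows in the matching way), $R_{d+1}(\tilde p)$ becomes block diagonal:
\begin{align*}
    \begin{bmatrix} R_d(p) & 0 \\ 0 & D\,H \end{bmatrix},
\end{align*}
where $D=\mydiag(1/\|e_k\|)$ is the $m\times m$ diagonal matrix with entries $1/\|e_k\|$. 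Since a permutation of rows and columns preserves rank and $D$ is invertible, $\rank(R_{d+1}(\tilde p))=\rank(R_d(p))+\rank(H)$.

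It remains to evaluate the two terms under the hypothesis. By Theorem~\ref{theorem_conditionInfiParaRigid}, infinitesimal bearing rigidity of $\G(p)$ in $\R^d$ gives $\rank(R_d(p))=dn-d-1$. Also $R_d(p)=\mydiag(P_{g_k}/\|e_k\|)(H\otimes I_d)$ implies $\rank(R_d(p))\le d\,\rank(H)$, so $d\,\rank(H)\ge dn-d-1>d(n-2)$, which forces the integer $\rank(H)$ to equal its maximal value $n-1$. Hence $\rank(R_{d+1}(\tilde p))=(dn-d-1)+(n-1)=(d+1)n-(d+1)-1$, so $\G(\tilde p)$ is infinitesimally bearing rigid in $\R^{d+1}$ by Theorem~\ref{theorem_conditionInfiParaRigid}, which completes the induction step and the proof. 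The same rank identity read in reverse also yields the converse direction: a framework whose configuration lies in a $d$-dimensional affine subspace is infinitesimally bearing rigid in $\R^{d+1}$ only if it is so in $\R^d$.

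I expect the only delicate point to be verifying the block-diagonal form: one must check that separating the $d$ "old" coordinates of the configuration from the single "new" coordinate decouples the rows of $R_{d+1}(\tilde p)$ exactly, which is precisely what the block structure of $P_{\tilde g_k}$ above guarantees; the remaining rank count is then routine.
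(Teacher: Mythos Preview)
Your proof is correct and follows essentially the same approach as the paper: both embed the configuration by zero-padding, exploit the resulting block structure of $P_{\tilde g_k}$, and permute coordinates to separate the original $d$ directions from the new ones, obtaining the rank identity $\rank(R_{d+1}(\tilde p))=\rank(R_d(p))+\rank(H)$ (the paper does the general jump $d\to\tilde d$ in one step and argues via orthogonality of row blocks rather than full block-diagonalization, but the computation is the same). A minor refinement in your version is that you deduce $\rank(H)=n-1$ from the rigidity hypothesis via $d\,\rank(H)\ge dn-d-1>d(n-2)$, whereas the paper tacitly relies on the standing connectivity assumption stated in the Notations.
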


\begin{proof}
Consider a framework $\G(p)$ in $\R^d$ ($n\ge2$, $d\ge2$).
Suppose the framework becomes $\G(\tilde{p})$ when the dimension is lifted from $d$ to $\tilde{d}$ ($\tilde{d}>d$).
Our goal is to prove that
\begin{align*}
    \rank(R(p))=dn-d-1 \Leftrightarrow \rank(R(\tilde{p}))=\tilde{d}n-\tilde{d}-1,
\end{align*}
and consequently $\G(\tilde{p})$ is infinitesimally bearing rigid in $\R^{\tilde{d}}$ if and only if $\G(p)$ is infinitesimally bearing rigid in $\R^d$.

First, consider an oriented graph and write the bearings of $\G(p)$ and $\G(\tilde{p})$ as $\{g_k\}_{k=1}^m$ and $\{\tilde{g}_k\}_{k=1}^m$, respectively.
Since $\tilde{p}_i$ is obtained from $p_i$ by lifting the dimension, without loss of generality, assume $\tilde{p}_i=[p_i^\T, 0]^\T$ ($\forall i\in\V$) where the zero vector is $(\tilde{d}-d)$-dimensional. Then,
\begin{align*}
    \tilde{g}_k=
    \left[
      \begin{array}{c}
        g_k \\
        0 \\
      \end{array}
    \right],
    \quad
    P_{\tilde{g}_k}=
    \left[
      \begin{array}{cc}
        P_{g_k} & 0\\
        0       & I_{\tilde{d}-d} \\
      \end{array}
    \right],\quad\forall k=\{1,\dots,m\}.
\end{align*}
The bearing rigidity matrix of $\G(\tilde{p})$ is $R(\tilde{p})=\dia{I_{\tilde{d}}/\|e_k\|}\dia{P_{\tilde{g}_k}}(H\otimes I_{\tilde{d}})$,
where
\begin{align*}
    &\dia{P_{\tilde{g}_k}}(H\otimes I_{\tilde{d}})\\
    &\qquad=\dia{
    \left[
      \begin{array}{cc}
        P_{g_k} & 0 \\
        0 & I_{\tilde{d}-d} \\
      \end{array}
    \right]}
    H\otimes
    \left[
      \begin{array}{cc}
        I_d & 0 \\
        0 & I_{\tilde{d}-d} \\
      \end{array}
    \right].
\end{align*}
Permutate the rows of $\dia{P_{\tilde{g}_k}}(H\otimes I_{\tilde{d}})$ to obtain
\begin{align*}
    A=
    \left[
      \begin{array}{c}
        \dia{P_{{g}_k}}H\otimes \left[
                                         \begin{array}{cc}
                                           I_d & 0 \\
                                         \end{array}
                                       \right] \\
        I_{(\tilde{d}-d)m}H\otimes \left[
                                         \begin{array}{cc}
                                           0 & I_{\tilde{d}-d} \\
                                         \end{array}
                                       \right] \\
      \end{array}
    \right]
    \triangleq
    \left[
      \begin{array}{c}
        A_1 \\
        A_2 \\
      \end{array}
    \right].
\end{align*}
Since the permutation of the rows does not change the matrix rank, we have $\rank(R(\tilde{p}))=\rank(A)$.
Because the rows of $A_1$ are orthogonal to the rows of $A_2$, we have $\rank(A)=\rank(A_1)+\rank(A_2)$.
As a result, considering $\rank(A_1)=\rank(\dia{P_{{g}_k}}H\otimes I_d)=\rank(R(p))$ and $\rank(A_2)=\rank(H\otimes I_{\tilde{d}-d})=(\tilde{d}-d)(n-1)$,
we have
\begin{align*}
    \rank(R(\tilde{p}))
    &=\rank(R(p))+(\tilde{d}-d)(n-1).
\end{align*}
It can be easily verified using the above equation that $\rank(R(\tilde{p}))=\tilde{d}n-\tilde{d}-1$ if and only if $\rank(R(p))=dn-d-1$.
\end{proof}

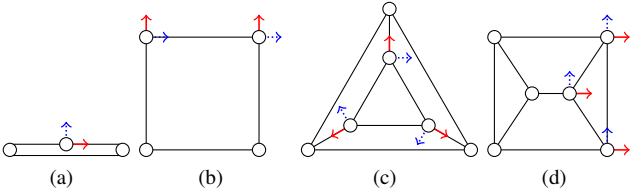
\begin{figure}[!t]
  \centering
  \def\myscale{0.5}
  \def\arrowL{0.6}
\def\arrowWidth{semithick}
\subfloat[]{
\begin{tikzpicture}[scale=\myscale]
\def\length{3}
\def\offset{0.15}
\coordinate (x1) at (0,0);
\coordinate (x1Above) at (0,\offset);
\coordinate (x1Below) at (0,-\offset);
\coordinate (x2) at (\length/2,\offset);
\coordinate (x3) at (\length,0);
\coordinate (x3Above) at (\length,\offset);
\coordinate (x3Below) at (\length,-\offset);
\draw [] (x1Above)--(x2)--(x3Above);
\draw [] (x1Below)--(x3Below);
\draw [\arrowWidth,->,draw=red] (x2)--(\length/2+\arrowL,\offset);
\draw [densely dotted,\arrowWidth,->,draw=blue] (x2)--(\length/2,\offset+\arrowL);
\def\radius{5pt}
\draw [fill=white](x1) circle [radius=\radius];
\draw [fill=white](x2) circle [radius=\radius];
\draw [fill=white](x3) circle [radius=\radius];
\end{tikzpicture}}
\subfloat[]{
\begin{tikzpicture}[scale=\myscale]
            \def\length{3}
            \coordinate (x1) at (0,0);
            \coordinate (x2) at (\length,0);
            \coordinate (x3) at (\length,-\length);
            \coordinate (x4) at (0,-\length);
            \draw [] (x1)--(x2)--(x3)--(x4)--cycle;
            \draw [\arrowWidth,->,draw=red] (x1)--(0,\arrowL);
            \draw [densely dotted,\arrowWidth,->,draw=blue] (x1)--(\arrowL,0);
            \draw [\arrowWidth,->,draw=red] (x2)--(\length,\arrowL);
            \draw [densely dotted,\arrowWidth,->,draw=blue] (x2)--(\length+\arrowL,0);
            \def\radius{5pt}
            \draw [fill=white](x1) circle [radius=\radius];
            \draw [fill=white](x2) circle [radius=\radius];
            \draw [fill=white](x3) circle [radius=\radius];
            \draw [fill=white](x4) circle [radius=\radius];
\end{tikzpicture}
}
\subfloat[]{
\begin{tikzpicture}[scale=\myscale]
            \def\scaleBig{2.5}; 
            \coordinate (x1) at (-0.866*\scaleBig,-0.5*\scaleBig);
            \coordinate (x2) at (0.866*\scaleBig,-0.5*\scaleBig);
            \coordinate (x3) at (0,\scaleBig);
            \def\scaleSmall{1.2}; 
            \coordinate (x4) at (-0.866*\scaleSmall,-0.5*\scaleSmall);
            \coordinate (x5) at (0.866*\scaleSmall,-0.5*\scaleSmall);
            \coordinate (x6) at (0,\scaleSmall);
            \draw [] (x1)--(x4);
            \draw [] (x2)--(x5);
            \draw [] (x3)--(x6);
            \draw [] (x1)--(x2)--(x3)--cycle;
            \draw [] (x4)--(x5)--(x6)--cycle;
            \draw [\arrowWidth,->,draw=red] (x4)--(-0.866*\scaleSmall-0.866*\arrowL,-0.5*\scaleSmall-0.5*\arrowL);
            \draw [densely dotted,\arrowWidth,->,draw=blue] (x4)--(-0.866*\scaleSmall-0.5*\arrowL,-0.5*\scaleSmall+0.866*\arrowL);
            \draw [\arrowWidth,->,draw=red] (x5)--(0.866*\scaleSmall+0.866*\arrowL,-0.5*\scaleSmall-0.5*\arrowL);
            \draw [densely dotted,\arrowWidth,->,draw=blue] (x5)--(0.866*\scaleSmall-0.5*\arrowL,-0.5*\scaleSmall-0.866*\arrowL);
            \draw [\arrowWidth,->,draw=red] (x6)--(0,\scaleSmall+\arrowL);
            \draw [densely dotted,\arrowWidth,->,draw=blue] (x6)--(\arrowL,\scaleSmall);
            \def\radius{5pt}
            \draw [fill=white](x1) circle [radius=\radius];
            \draw [fill=white](x2) circle [radius=\radius];
            \draw [fill=white](x3) circle [radius=\radius];
            \draw [fill=white](x4) circle [radius=\radius];
            \draw [fill=white](x5) circle [radius=\radius];
            \draw [fill=white](x6) circle [radius=\radius];
\end{tikzpicture}}
\subfloat[]{
\begin{tikzpicture}[scale=\myscale]
\def\length{3}
\def\high{3}
\coordinate (x1) at (0,0);
\coordinate (x2) at (\length,0);
\coordinate (x3) at (\length,\high);
\coordinate (x4) at (0,\high);
\coordinate (x5) at (\length/3,\high/2);
\coordinate (x6) at (\length/3*2,\high/2);
\draw [] (x1)--(x2)--(x3)--(x4)--cycle;
\draw [] (x1)--(x5)--(x4);
\draw [] (x2)--(x6)--(x3);
\draw [] (x5)--(x6);
\draw [\arrowWidth,->,draw=red] (x2)--(\length+\arrowL,0);
\draw [densely dotted,\arrowWidth,->,draw=blue] (x2)--(\length, \arrowL);
\draw [\arrowWidth,->,draw=red] (x3)--(\length+\arrowL,\high);
\draw [densely dotted,\arrowWidth,->,draw=blue] (x3)--(\length, \high+\arrowL);
\draw [\arrowWidth,->,draw=red] (x6)--(\length/3*2+\arrowL,\high/2);
\draw [densely dotted,\arrowWidth,->,draw=blue] (x6)--(\length/3*2,\high/2+\arrowL);
\def\radius{5pt}
\draw [fill=white](x1) circle [radius=\radius];
\draw [fill=white](x2) circle [radius=\radius];
\draw [fill=white](x3) circle [radius=\radius];
\draw [fill=white](x4) circle [radius=\radius];
\draw [fill=white](x5) circle [radius=\radius];
\draw [fill=white](x6) circle [radius=\radius];
\end{tikzpicture}}
  \caption{Examples of \emph{non-infinitesimally bearing rigid} frameworks. The red arrows (solid) stand for non-trivial infinitesimal bearing motions and the blue arrows (dashed) for the associated orthogonal infinitesimal distance motions.}
  \label{fig_nonIBRExamples}
  \vspace{-12pt}
\end{figure}

Figure~\ref{fig_nonIBRExamples} shows examples of non-infinitesimal bearing rigid frameworks.
The frameworks in Figure~\ref{fig_nonIBRExamples} are not infinitesimally bearing rigid because there exist {non-trivial} infinitesimal bearing motions (see, for example, the red arrows).
Figure~\ref{fig_IPRExamples} shows some two- and three-dimensional infinitesimally bearing rigid frameworks.
It can be verified that each of the frameworks satisfies $\rank(R(p))=dn-d-1$.

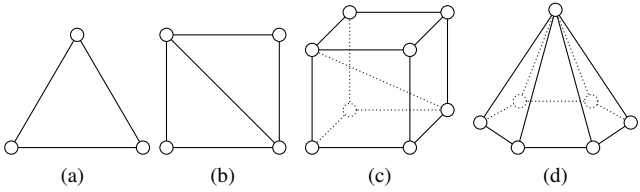
\begin{figure}[!t]
  \centering
  \def\myscale{0.5}
  \subfloat[]{
\begin{tikzpicture}[scale=\myscale]
\def\length{3.5}
            \coordinate (x1) at (0,0);
            \coordinate (x2) at (\length,0);
            \coordinate (x3) at (\length/2,3);
            \draw [] (x1)--(x2)--(x3)--cycle;
            \def\radius{5pt}
            \draw [fill=white](x1) circle [radius=\radius];
            \draw [fill=white](x2) circle [radius=\radius];
            \draw [fill=white](x3) circle [radius=\radius];
\end{tikzpicture}}
\subfloat[]{
\begin{tikzpicture}[scale=\myscale]
            \def\length{3}
            \coordinate (x1) at (0,0);
            \coordinate (x2) at (\length,0);
            \coordinate (x3) at (\length,-\length);
            \coordinate (x4) at (0,-\length);
            \draw [] (x1)--(x2)--(x3)--(x4)--cycle;
            \draw [] (x1)--(x3);
            \def\radius{5pt}
            \draw [fill=white](x1) circle [radius=\radius];
            \draw [fill=white](x2) circle [radius=\radius];
            \draw [fill=white](x3) circle [radius=\radius];
            \draw [fill=white](x4) circle [radius=\radius];
\end{tikzpicture}
}
\subfloat[]{
\begin{tikzpicture}[scale=\myscale]
            \def\length{2.6}
            \coordinate (x1) at (0,0);
            \coordinate (x2) at (\length,0);
            \coordinate (x3) at (\length,-\length);
            \coordinate (x4) at (0,-\length);
            \def\Xoffset{1}
            \def\Yoffset{1}
            \coordinate (x5) at (0+\Xoffset,0+\Yoffset);
            \coordinate (x6) at (\length+\Xoffset,0+\Yoffset);
            \coordinate (x7) at (\length+\Xoffset,-\length+\Yoffset);
            \coordinate (x8) at (0+\Xoffset,-\length+\Yoffset);
            \draw [] (x1)--(x2)--(x3)--(x4)--cycle;
            \draw [] (x1)--(x5)--(x6)--(x2);
            \draw [] (x6)--(x7)--(x3);
            \draw [densely dotted] (x5)--(x8);
            \draw [densely dotted] (x7)--(x8);
            \draw [densely dotted] (x4)--(x8);
            \draw [densely dotted] (x1)--(x7);
            \def\radius{5pt}
            \draw [fill=white](x1) circle [radius=\radius];
            \draw [fill=white](x2) circle [radius=\radius];
            \draw [fill=white](x3) circle [radius=\radius];
            \draw [fill=white](x4) circle [radius=\radius];
            \draw [fill=white](x5) circle [radius=\radius];
            \draw [fill=white](x6) circle [radius=\radius];
            \draw [fill=white](x7) circle [radius=\radius];
            \draw [fill=white,densely dotted](x8) circle [radius=\radius];
\end{tikzpicture}
}
\subfloat[]{
\begin{tikzpicture}[scale=\myscale]
            \def\length{1}
            \coordinate (x1) at (-\length,-2/3*\length);
            \coordinate (x2) at (\length,-2/3*\length);
            \coordinate (x3) at (2*\length,0);
            \coordinate (x4) at (\length-0.05,2/3*\length-0.1);
            \coordinate (x5) at (-\length+0.05,2/3*\length-0.1);
            \coordinate (x6) at (-2*\length,0);
            \coordinate (x7) at (0,3*\length);
            \draw [] (x6)--(x1)--(x2)--(x3);
            \draw [densely dotted] (x3)--(x4)--(x5)--(x6);
            \draw [] (x1)--(x7);
            \draw [] (x2)--(x7);
            \draw [] (x3)--(x7);
            \draw [densely dotted] (x4)--(x7);
            \draw [densely dotted] (x5)--(x7);
            \draw [] (x6)--(x7);
            \def\radius{5pt}
            \draw [fill=white](x1) circle [radius=\radius];
            \draw [fill=white](x2) circle [radius=\radius];
            \draw [fill=white](x3) circle [radius=\radius];
            \draw [fill=white,densely dotted](x4) circle [radius=\radius];
            \draw [fill=white,densely dotted](x5) circle [radius=\radius];
            \draw [fill=white](x6) circle [radius=\radius];
            \draw [fill=white](x7) circle [radius=\radius];
\end{tikzpicture}
}
  \caption{
  Examples of \emph{infinitesimally bearing rigid} frameworks.
  }
  \label{fig_IPRExamples}
  \vspace{-10pt}
\end{figure}
\subsection{Connections to Distance Rigidity Theory}

The bearing rigidity theory and the distance rigidity theory study similar problems of whether the shape of a framework can be uniquely determined by the inter-neighbor bearings and inter-neighbor distances, respectively.
It is meaningful to study the connections between the two rigidity theories.
The following theorem establishes the equivalence between infinitesimal bearing and distance rigidity in $\R^2$.

\begin{theorem}\label{theorem_2DIPRandIREquivalence}
    In $\R^2$, a framework is infinitesimally bearing rigid if and only if it is infinitesimally distance rigid.
\end{theorem}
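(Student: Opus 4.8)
The plan is to reduce the claimed equivalence to a single identity between the ranks of the bearing and distance rigidity matrices. By Theorem~\ref{theorem_conditionInfiParaRigid} with $d=2$, a framework $\G(p)$ in $\R^2$ is infinitesimally bearing rigid if and only if $\rank(R(p))=2n-3$. By the classical theory of infinitesimal distance rigidity, $\G(p)$ in $\R^2$ is infinitesimally distance rigid if and only if $\rank(R_D(p))=2n-3$, where $R_D(p)\triangleq\partial F_D(p)/\partial p$ is the distance rigidity matrix associated with $F_D(p)\triangleq[\|e_1\|^2,\dots,\|e_m\|^2]^\T$. (This characterization holds for every $n\ge2$: the trivial infinitesimal distance motions of $n$ non-coincident points in $\R^2$ always span a $3$-dimensional space, consisting of two independent translations and one rotation.) Hence it suffices to prove
\begin{align}\label{eq_rankRequalsRankRD}
    \rank(R(p))=\rank(R_D(p))
\end{align}
for every framework $\G(p)$ in $\R^2$.

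The key observation is that in $\R^2$ the orthogonal projector $P_{g_k}$ has rank one. Let $J\triangleq\left[\begin{smallmatrix}0&-1\\1&0\end{smallmatrix}\right]$ denote the $90^\circ$ rotation matrix and set $g_k^\perp\triangleq Jg_k$; then $g_k^\perp$ is a unit vector orthogonal to $g_k$ and $P_{g_k}=g_k^\perp(g_k^\perp)^\T=Jg_kg_k^\T J^\T$. Substituting this into \eqref{eq_rigidityMatrixForm} and factoring out the tall block-diagonal matrix,
\begin{align*}
    R(p)=\mydiag\left(\frac{P_{g_k}}{\|e_k\|}\right)\bar{H}
        =\mydiag(g_k^\perp)\,\mydiag\left(\frac{(g_k^\perp)^\T}{\|e_k\|}\right)\bar{H},
\end{align*}
where $\mydiag(g_k^\perp)\in\R^{2m\times m}$ has full column rank because every $g_k^\perp\ne0$. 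Since left-multiplication by a full-column-rank matrix and scaling of individual rows by positive scalars both preserve rank, we obtain $\rank(R(p))=\rank\big(\mydiag((g_k^\perp)^\T)\bar{H}\big)$.

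Next I would carry the rotation $J$ across the incidence structure. Writing $(g_k^\perp)^\T=g_k^\T J^\T$ and using the mixed-product property of Kronecker products,
\begin{align*}
    \mydiag((g_k^\perp)^\T)\bar{H}
    &=\mydiag(g_k^\T)\,(I_m\otimes J^\T)(H\otimes I_2)\\
    &=\mydiag(g_k^\T)\,(H\otimes J^\T)
     =\mydiag(g_k^\T)\,\bar{H}\,(I_n\otimes J^\T).
\end{align*}
Because $I_n\otimes J^\T$ is orthogonal, right-multiplication by it preserves rank, so $\rank(R(p))=\rank\big(\mydiag(g_k^\T)\bar{H}\big)$. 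On the other hand, differentiating $F_D$ gives $R_D(p)=2\,\mydiag(e_k^\T)\bar{H}$, and stripping the positive row factors $2\|e_k\|$ shows $\rank(R_D(p))=\rank\big(\mydiag(g_k^\T)\bar{H}\big)$ as well. This establishes \eqref{eq_rankRequalsRankRD} and hence the theorem.

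Most of the argument is bookkeeping with block-diagonal and Kronecker-product matrices; the only points requiring a little care are that $\mydiag(g_k^\perp)$ genuinely has full column rank (immediate, since each $g_k^\perp\ne0$ and the blocks are disjoint) and that the distance-rigidity characterization ``$\rank(R_D)=2n-3$ if and only if infinitesimally distance rigid'' is valid for all $n\ge2$, including collinear configurations. I do not anticipate a real obstacle here: conceptually, the whole statement rests on the single planar fact that a global $90^\circ$ rotation interchanges the projection \emph{orthogonal to} $g_k$ appearing in $R(p)$ with the projection \emph{along} $g_k$ appearing in $R_D(p)$, so the two rigidity matrices differ only by an invertible right factor and therefore share their rank.
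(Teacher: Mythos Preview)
Your proposal is correct and follows essentially the same route as the paper: you factor $P_{g_k}=g_k^\perp(g_k^\perp)^\T$ in $\R^2$, push the $90^\circ$ rotation through the Kronecker product $(I_m\otimes J^\T)(H\otimes I_2)=\bar{H}(I_n\otimes J^\T)$, and conclude $\rank(R(p))=\rank(R_D(p))$, then invoke the two rank characterizations. The paper does exactly this (with $Q_{\pi/2}$ in place of your $J$), packaging the rank identity as a separate proposition and writing the full factorization $R(p)=\mydiag\!\big(g_k^\perp/\|e_k\|^2\big)\,R_D(p)\,(I_n\otimes Q_{\pi/2}^\T)$ explicitly rather than arguing rank preservation step by step; your handling of the $n\ge d$ caveat is also consistent with the lemma the paper cites.
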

\begin{proof}
See Appendix~\ref{appendix_distanceRigidity}.
\end{proof}

Two remarks on Theorem~\ref{theorem_2DIPRandIREquivalence} are given below.
Firstly, Theorem~\ref{theorem_2DIPRandIREquivalence} cannot be generalized to $\R^3$ or higher dimensions.
For example, the three-dimensional cubic and hexagonal pyramid frameworks in Figure~\ref{fig_IPRExamples}(c)-(d) are infinitesimally bearing rigid but not distance rigid.
In particular, the rank of the distance rigidity matrices of the two frameworks are $13$ and $12$, respectively, whereas the required ranks for infinitesimal distance rigidity are $18$ and $15$, respectively.
Secondly, Theorem~\ref{theorem_2DIPRandIREquivalence} suggests that we can determine the infinitesimal distance rigidity of a framework by examining its infinitesimal bearing rigidity.
For example, it may be tricky to see the frameworks in Figure~\ref{fig_nonIBRExamples}(c)-(d) are not infinitesimally distance rigid, but it is obvious to see the non-trivial infinitesimal bearing motions and conclude they are not infinitesimally bearing rigid.

An immediate corollary of Theorem \ref{theorem_2DIPRandIREquivalence} describes the relationship between infinitesimal bearing motions and infinitesimal distance motions of frameworks in $\R^2$.
Let $Q_{\pi/2}\in\mathcal{SO}(2)$ be a rotation matrix that can rotate a vector in $\R^2$ by $\pi/2$.
For any $\delta p=[\delta p_1^\T,\dots,\delta p_n^\T]^\T\in\R^{2n}$, denote $\delta p^\perp=[(Q_{\pi/2}\delta p_1)^\T,\dots,(Q_{\pi/2}\delta p_n)^\T]^\T\in\R^{2n}$.

\begin{corollary}\label{corollary_infinitesimalBearngAndDistanceMotion}
The vector $\delta p$ is an infinitesimal bearing motion of a framework $\G(p)$ in $\R^2$ if and only if $\delta p^\perp$ is an infinitesimal distance motion of $\G(p)$.
\end{corollary}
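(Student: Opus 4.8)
The plan is to connect the two rigidity matrices---the bearing rigidity matrix $R(p)$ and the distance rigidity matrix---via the rotation $Q_{\pi/2}$, and then read off the statement about null spaces. Concretely, I would first recall that the distance rigidity matrix $R_D(p)\in\R^{m\times 2n}$ of a framework $\G(p)$ in $\R^2$ has rows indexed by edges, with the row for edge $k=(i,j)$ equal to $e_k^\T$ in the block columns for $j$ and $-e_k^\T$ in the block columns for $i$; equivalently $R_D(p)=\dia{e_k^\T}\bar H$. Meanwhile, from \eqref{eq_rigidityMatrixForm}, $R(p)=\dia{P_{g_k}/\|e_k\|}\bar H$, and $P_{g_k}=P_{e_k}$.

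The key algebraic fact I would establish is a pointwise identity relating $P_{e_k}$ acting on a vector to the action of $e_k^\T$ on the $\pi/2$-rotated vector. Since $Q_{\pi/2}$ is orthogonal with $Q_{\pi/2}^\T Q_{\pi/2}=I_2$ and $Q_{\pi/2}^\T = Q_{\pi/2}^{-1}=Q_{-\pi/2}$, and since in $\R^2$ the orthogonal complement of $\myspan\{e_k\}$ is exactly $\myspan\{Q_{\pi/2}e_k\}$, one has
\begin{align*}
    P_{e_k} = \frac{(Q_{\pi/2}e_k)(Q_{\pi/2}e_k)^\T}{\|e_k\|^2}.
\end{align*}
Therefore, for any $v\in\R^2$, $P_{e_k}v = 0$ if and only if $(Q_{\pi/2}e_k)^\T v = 0$, i.e. if and only if $e_k^\T (Q_{\pi/2}^\T v) = e_k^\T(Q_{-\pi/2}v)=0$. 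Applying this blockwise with $v$ running over the per-vertex blocks of $\delta p$, and noting $\bar H(\delta p) = [\dots,\delta p_j-\delta p_i,\dots]$ while $\delta p^\perp$ is the stacking of $Q_{\pi/2}\delta p_i$, I would show
\begin{align*}
    R(p)\,\delta p = 0 \ \Longleftrightarrow\ R_D(p)\,\delta p^{\perp\perp} = 0,
\end{align*}
or more directly, $R(p)\delta p=0 \Leftrightarrow R_D(p)(Q_{-\pi/2}\text{-rotated }\delta p)=0$. To land exactly on the statement as phrased (with $\delta p^\perp$ using $Q_{\pi/2}$), I would use that $\delta p\mapsto\delta p^\perp$ is a bijection on $\R^{2n}$, so it is equivalent to prove: $R(p)\delta p^\perp = 0 \Leftrightarrow R_D(p)\delta p = 0$, which after the substitution $Q_{\pi/2}(Q_{\pi/2}\delta p_i)$ is exactly the identity above up to sign (and $Q_{\pi/2}^2=-I_2$ only contributes an irrelevant overall sign to a homogeneous equation). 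I would then simply observe that $\delta p$ being an infinitesimal bearing motion means $\delta p\in\Null(R(p))$ and $\delta p^\perp$ being an infinitesimal distance motion means $\delta p^\perp\in\Null(R_D(p))$, so the equivalence of the two kernel conditions is precisely the claim.

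The main obstacle---really the only subtle point---is bookkeeping the rotation consistently: making sure the per-edge rotation $Q_{\pi/2}$ acting inside $P_{e_k}$ matches the per-vertex rotation in the definition of $\delta p^\perp$, and tracking the harmless sign from $Q_{\pi/2}^{2}=-I_2$ versus $Q_{\pi/2}^\T=Q_{-\pi/2}$. I expect this to be a short proof once the identity $P_{e_k}=(Q_{\pi/2}e_k)(Q_{\pi/2}e_k)^\T/\|e_k\|^2$ is in hand; indeed much of the groundwork is already available since Theorem~\ref{theorem_2DIPRandIREquivalence} is proved in the appendix by comparing these same two matrices, and the corollary is essentially the ``pointwise'' refinement of that comparison. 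Alternatively, and perhaps cleaner, I would note that the proof of Theorem~\ref{theorem_2DIPRandIREquivalence} in Appendix~\ref{appendix_distanceRigidity} almost certainly exhibits an explicit relation of the form $R(p) = D\,R_D(p)\,(I_n\otimes Q_{\pi/2})$ (or its inverse) with $D$ an invertible block-diagonal matrix; given such a relation the corollary is immediate, so my plan is to extract that relation from the appendix argument and then just chase null spaces.
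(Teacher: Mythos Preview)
Your proposal is correct and essentially matches the paper's argument. The paper derives the explicit relation $R(p)=\dia{g_k^\perp/\|e_k\|^2}\,R_D(p)\,(I_n\otimes Q_{\pi/2}^\T)$ in the proof of Theorem~\ref{theorem_2DIPRandIREquivalence} (this is equation~\eqref{eq_RBAndRDRelation}), and the corollary's proof is the one-line observation that $R(p)\delta p=0$ if and only if $R_D(p)\delta p^\perp=0$, exactly the ``alternative'' you sketch at the end; your bookkeeping of the sign from $Q_{\pi/2}^\T=-Q_{\pi/2}$ is correct, and the only minor slip is calling the left factor ``invertible'' when it is really a $2m\times m$ block-diagonal matrix of full column rank (which is all that is needed).
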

\begin{proof}
See Appendix~\ref{appendix_distanceRigidity}.
\end{proof}

Given a framework in $\R^2$, Corollary~\ref{corollary_infinitesimalBearngAndDistanceMotion} suggests that for any infinitesimal bearing motion, there exists a perpendicular infinitesimal distance motion, and the converse is also true.
Corollary~\ref{corollary_infinitesimalBearngAndDistanceMotion} is illustrated by Figure~\ref{fig_nonIBRExamples} (indicated by the red (solid) and blue (dashed) arrows).

To end this section, we briefly compare the proposed bearing rigidity theory with the well-known distance rigidity theory.
In the distance rigidity theory, there are three kinds of rigidity: (i)~distance rigidity, (ii)~global distance rigidity, and (iii)~infinitesimal distance rigidity.
The relationship between them is (ii)$\Rightarrow$(i) and (iii)$\Rightarrow$(i).
Note (ii) and (iii) do not imply each other.
The global distance rigidity can uniquely determine the shape of a framework, but it is usually difficult to mathematically examine \cite{Hendrickson1992SIAM,Connelly2005Generic}.
Infinitesimal distance rigidity can be conveniently examined by a rank condition (see Lemma~\ref{lemma_diatanceInfiRigid_NSCondition} in Appendix~\ref{appendix_distanceRigidity}), but it is not able to ensure a unique shape.
As a comparison, the proposed infinitesimal bearing rigidity not only can be examined by a rank condition (Theorem~\ref{theorem_conditionInfiParaRigid}) but also can ensure the unique shape of a framework (Theorem~\ref{theorem_IPRImplyGPR}).
In addition, the rank condition for infinitesimal distance rigidity requires to distinguish the cases of $n\ge d$ and $n<d$ (Lemma~\ref{lemma_diatanceInfiRigid_NSCondition}), while the rank condition for infinitesimal bearing rigidity does not.
Finally, an infinitesimally distance rigid framework in a lower dimension may become non-rigid in a higher dimension (see, for example, Figure~\ref{fig_IPRExamples}(b)), while infinitesimal bearing rigidity is invariant to dimensions.
In summary, the bearing rigidity theory possesses a number of attractive features compared to the distance rigidity theory, and as we will show in the sequel, it is a powerful tool for analyzing bearing-based formation control problems.

\section{Bearing-only Formation Control with a Global Reference Frame}\label{section_BOF_global}

In this section, we study bearing-only formation control of multi-agent systems in arbitrary dimensions in the presence of a global reference frame.
Consider $n$ agents in $\R^d$ ($n\ge2$ and $d\ge2$).
Note $n\ge d$ is not required.
Assume there is a global reference frame known to each agent.
All the vector quantities given in this section are expressed in this global frame.
Denote ${p}_i\in\R^d$ as the position of agent $i\in\{1,\dots,n\}$.
The dynamics of agent $i$ is
\begin{align*}
    \dot{p}_i(t) = v_i(t),
\end{align*}
where $v_i(t)\in\R^d$ is the velocity input to be designed.
Denote $p=[p_1^\T ,\dots,p_n^\T ]^\T \in\R^{dn}$ and $v=[v_1^\T,\dots,v_n^\T ]^\T \in\R^{dn}$.
The underlying sensing graph $\G=(\V,\E)$ is assumed to be undirected and fixed, and the formation is denoted by $\G(p)$.
The {edge vector} $e_{ij}$ and the bearing $g_{ij}$ are defined as in \eqref{eq_eij_gij_def}.
Considering an arbitrary orientation of $\G$, we can reexpress the edge and bearing vectors as $e=[e_1^\T ,\dots,e_m^\T ]^\T$ and $g=[g_1^\T ,\dots,g_m^\T ]^\T$ as defined in \eqref{eq_ek_gk_def}.

If $(i,j)\in\E$, agent $i$ can measure the relative bearing $g_{ij}$ of agent $j$.
As a result, the \emph{bearing measurements} obtained by agent $i$ at time $t$ are $\left\{g_{ij}(t)\right\}_{j\in\N_i}$.
The constant \emph{bearing constraints} for the target formation are specified as $\{g_{ij}^*\}_{(i,j)\in\E}$ with $g_{ij}^*=-g_{ji}^*$.
Figure~\ref{fig_simExampleDemo} gives two examples to illustrate the bearing constraints.

\begin{figure}
  \centering
  \subfloat[]{\includegraphics[width=0.35\linewidth]{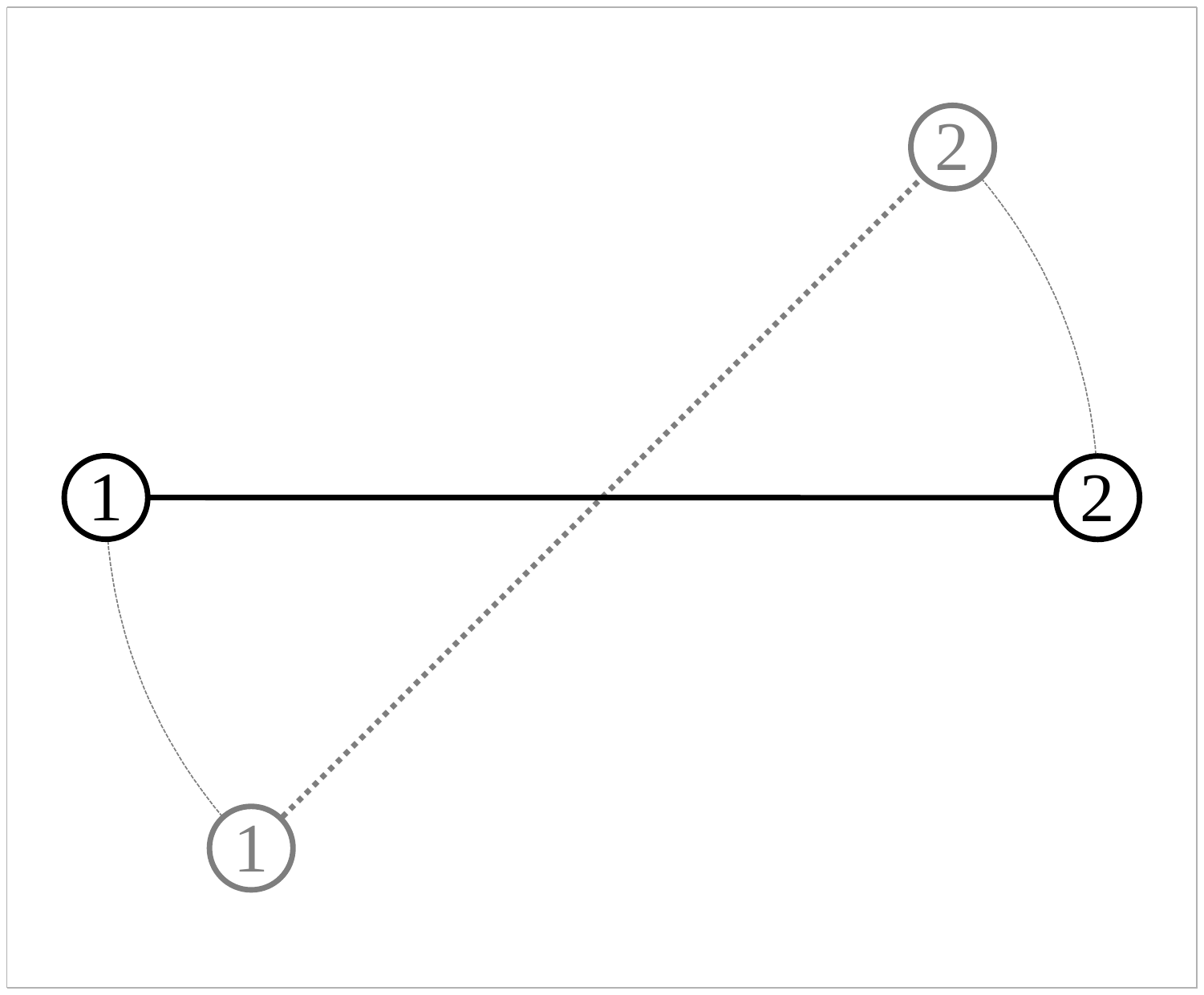}}
  \subfloat[]{\includegraphics[width=0.35\linewidth]{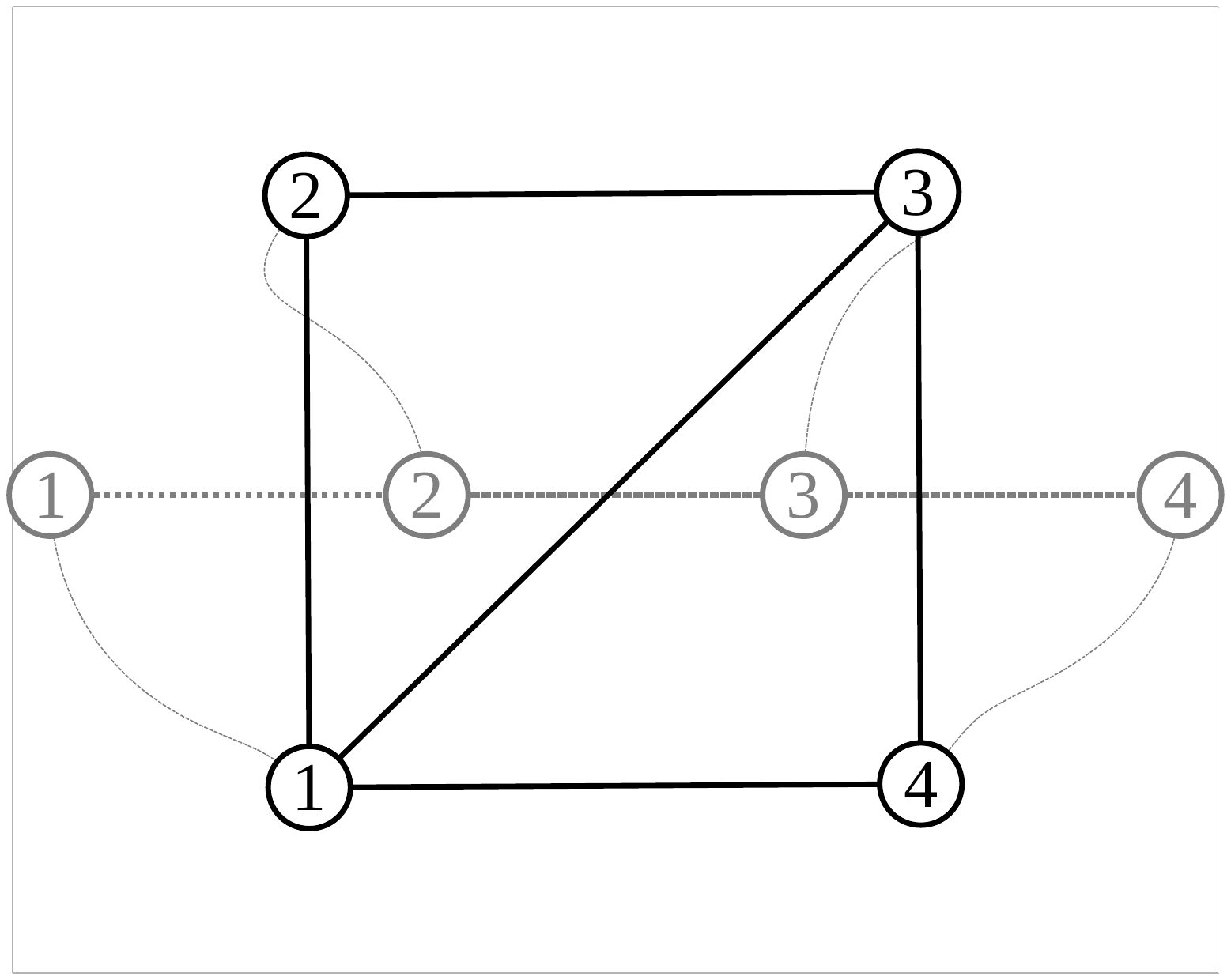}}
  \caption{%
  Target formation in black; initial formation in grey.
  (a) Bearing constraints for the target formation are $g_{12}^*=-g_{21}^*=[1,0]^\T$.
  (b) Bearing constraints for the target formation are
  $g_{12}^*=-g_{21}^*=[0,1]^\T$, $g_{23}^*=-g_{32}^*=[1,0]^\T$, $g_{34}^*=-g_{43}^*=[0,-1]^\T$, $g_{41}^*=-g_{14}^*=[-1,0]^\T$, and $g_{13}^*=-g_{31}^*=[\sqrt{2}/2,\sqrt{2}/2]^\T$.}
  \label{fig_simExampleDemo}
  \vspace{-10pt}
\end{figure}

\begin{definition}[Feasible Bearing Constraints]
    The bearing constraints $\{g_{ij}^*\}_{(i,j)\in\E}$ are \emph{feasible} if there exists a formation $\G(p)$ that satisfies $g_{ij}=g_{ij}^*$ for all $(i,j)\in\E$.
\end{definition}

Feasible bearing constraints can be easily calculated from an arbitrary configuration that has the desired geometric formation pattern.
The bearing-only formation control problem to be solved in this section is formally stated below.

\begin{problem}\label{problem_global_bearingonlyformationcontrol}
    Given feasible constant bearing constraints $\{g_{ij}^*\}_{(i,j)\in\E}$ and the initial formation $\G(p(0))$, design $v_i(t)$ for agent $i\in\V$ based only on the bearing measurements $\left\{g_{ij}(t)\right\}_{j\in\N_i}$ such that $g_{ij}(t)\rightarrow g_{ij}^*$ as $t\rightarrow\infty \, \forall \,(i,j)\in\E$.
\end{problem}

\subsection{A Bearing-Only Control Law}
The proposed nonlinear bearing-only control law is
\begin{align}\label{eq_controlLawElement}
    v_i(t) = - \sum_{j\in\N_i} P_{g_{ij}(t)}g_{ij}^*, \quad i\in\V
\end{align}
where $P_{g_{ij}(t)}=I_d-g_{ij}(t)g_{ij}^\T(t)$.
First, the control law is distributed since the control of agent $i$ only requires the bearing measurements $\{g_{ij}(t)\}_{j\in\N_i}$.
Second, the control input is always bounded as $\|v_i(t)\|\le|\N_i|$ since $\|P_{g_{ij}(t)}\|=\|g_{ij}^*\|=1$.
Third, the control law has a clear geometric interpretation as illustrated in Figure~\ref{fig_controlLawGeometricMeaning}: the control term $-P_{g_{ij}}g_{ij}^*$ is perpendicular to $g_{ij}$ since $g_{ij}^\T P_{g_{ij}}g_{ij}^*=0$.
As a result, the control law attempts to reduce the bearing error of $g_{ij}$ while preserving the distance between agents $i$ and $j$.
This geometric interpretation can also be demonstrated by the example shown in Figure~\ref{fig_simExampleDemo}(a), where the bearing error is reduced to zero while the inter-agent distance is preserved.
In addition, similar ``projective'' control laws have been used before in \cite{Moshtagh2007CDC,Justh2005CDC} for circular formation coordination control.

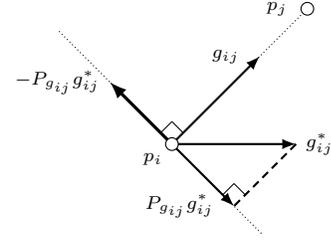
\begin{figure}
  \centering
  \def\myscale{0.30}
  \begin{tikzpicture}[scale=\myscale]
            \coordinate (xi) at (0,0);
            \coordinate (xj) at (6,6);
            \def\unit{5.5cm}
            \coordinate (gij)                 at (\unit*0.7071,\unit*0.7071);
            \coordinate (gij_star)            at (\unit,0);
            \coordinate (gij_star_proj)       at (\unit*0.7071*0.7071,-\unit*0.7071*0.7071);
            \coordinate (gij_star_proj_minus) at (-\unit*0.7071*0.7071,\unit*0.7071*0.7071);
            \draw [densely dotted, thin] (xi)--(xj);
            \draw [densely dotted, thin] (-5,5)--(4,-4);
            \draw [densely dashed, thick] (gij_star)--(gij_star_proj);
            \draw[->, >=latex, thick] (xi) -- (gij) node [left=4pt] {\scriptsize{$g_{ij}$}};
            \draw[->, >=latex, thick] (xi) -- (gij_star) node [right] {\scriptsize{$g_{ij}^*$}};
            \draw[->, >=latex, thick] (xi) -- (gij_star_proj) node [left=4pt] {\scriptsize{$P_{g_{ij}}g^*_{ij}$}};
            \draw[->, >=latex, very thick] (xi) -- (gij_star_proj_minus) node [left] {\scriptsize{$-P_{g_{ij}}g^*_{ij}$}};
            \def\radius{8pt}
            \draw [fill=white](xi) circle [radius=\radius];
            \draw [fill=white](xj) circle [radius=\radius];
            \draw (xi) node[below left] {\scriptsize{$p_i$}};
            \draw (xj) node[left=\radius/2] {\scriptsize{$p_j$}};
            \def\dis{1cm}
            \coordinate (p1) at (0,\dis);
            \coordinate (p2) at (\dis*0.5,\dis*0.5);
            \coordinate (p3) at (-\dis*0.5,\dis*0.5);
            \draw [] (p1)--(p2);
            \draw [] (p1)--(p3);
            \coordinate (p4) at (\unit*0.7071*0.7071,-\unit*0.7071*0.7071+\dis);
            \coordinate (p5) at (\unit*0.7071*0.7071+\dis/2,-\unit*0.7071*0.7071+\dis/2);
            \coordinate (p6) at (\unit*0.7071*0.7071-\dis/2,-\unit*0.7071*0.7071+\dis/2);
            \draw [] (p4)--(p5);
            \draw [] (p4)--(p6);
\end{tikzpicture}
  \vspace{-8pt}
  \caption{The geometric interpretation of control law \eqref{eq_controlLawElement}. The control term $-P_{g_{ij}}g_{ij}^*$ is perpendicular to the bearing $g_{ij}$.}
  \label{fig_controlLawGeometricMeaning}
  \vspace{-13pt}
\end{figure}

In order to analyze the proposed control law, we rewrite it in a matrix-vector form.
Since $g_{ij}^*=-g_{ji}^*$, the bearing constraints $\{g_{ij}^*\}_{(i,j)\in\E}$ can be reexpressed as $\{g_k^*\}_{k=1}^m$ by considering an oriented graph.
Let $g^*=[(g_1^*)^\T ,\dots,(g_m^*)^\T ]^\T$, then \eqref{eq_controlLawElement} can be written as
\begin{align}\label{eq_controlLawMatrix}
    v = \bar{H}^\T\mydiag(P_{g_k}) g^* \triangleq \tilde{R}^\T (p)g^*.
\end{align}
It should be noted that the oriented graph is merely used to obtain the matrix expression while the underlying sensing graph of the formation is still the undirected graph $\G$.
Moreover, it is worth mentioning that control law \eqref{eq_controlLawMatrix} is a modified gradient control law. If we consider the bearing error $\sum_{k=1}^m \|g_k-g_k^*\|^2$, a short calculation shows the corresponding gradient control law is $u=\bar{H}^\T \mydiag({P_{g_k}}/{\|e_k\|})g^*$, which is exactly $u=R^\T (p)g^*$, where $R(p)$ is the bearing rigidity matrix.
This gradient control law, however, requires the distance measurement $\|e_k\|$.
By removing the distance term $\|e_k\|$, we can obtain the proposed control law \eqref{eq_controlLawMatrix}.

We next examine some useful properties of the control law.
First of all, we show that both the centroid and scale of the formation are invariant quantities under the action of the control law.
In this direction, define the \emph{centroid} and \emph{scale} of the formation as
\begin{align*}
    \bar{p} \triangleq \frac{1}{n}\sum_{i=1}^n p_i, &\quad  s  \triangleq \sqrt{\frac{1}{n}\sum_{i=1}^n \|p_i-\bar{p}\|^2}.
\end{align*}
The scale is the quadratic mean of the distances from the agents to the centroid.

\begin{lemma}\label{lemma_xDotPerpendicular}
    Under control law \eqref{eq_controlLawMatrix}, $\dot{p}(t)\perp \myspan\left\{\one\otimes I_d, p(t)\right\}$.
\end{lemma}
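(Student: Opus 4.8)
The plan is to verify the two orthogonality conditions separately using the explicit form of the control law in \eqref{eq_controlLawMatrix}, namely $\dot p = \bar H^\T \mydiag(P_{g_k}) g^*$.

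First I would show $\dot p \perp \myspan\{\one\otimes I_d\}$. This is equivalent to checking $(\one\otimes I_d)^\T \dot p = 0$, i.e. $\sum_{i=1}^n \dot p_i = 0$. Using the matrix form, $(\one\otimes I_d)^\T \dot p = (\one\otimes I_d)^\T \bar H^\T \mydiag(P_{g_k}) g^* = (H\one \otimes I_d)^\T \mydiag(P_{g_k}) g^*$. Since $H\one = 0$ for the (connected, or indeed any) graph as recalled in the Notations, this vanishes. Equivalently, one can argue directly at the agent level: each edge $(i,j)$ contributes $-P_{g_{ij}}g_{ij}^*$ to $\dot p_i$ and $-P_{g_{ji}}g_{ji}^* = -P_{g_{ij}}g_{ij}^*$... wait, one must be careful: the contribution to $\dot p_j$ from the same edge is $-P_{g_{ji}}g_{ji}^*$, and since $g_{ji}=-g_{ij}$ we have $P_{g_{ji}}=P_{g_{ij}}$ and $g_{ji}^*=-g_{ij}^*$, so the contribution to $\dot p_j$ is $+P_{g_{ij}}g_{ij}^*$, exactly cancelling the contribution to $\dot p_i$. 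Hence the sum over all agents telescopes to zero. I would present whichever of these is cleaner, probably the $H\one=0$ computation.

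Next I would show $\dot p \perp \myspan\{p\}$, i.e. $p^\T \dot p = 0$. Using the matrix form, $p^\T \dot p = p^\T \bar H^\T \mydiag(P_{g_k}) g^* = (\bar H p)^\T \mydiag(P_{g_k}) g^* = e^\T \mydiag(P_{g_k}) g^*$ since $e = \bar H p$. Now $e^\T \mydiag(P_{g_k}) g^* = \sum_{k=1}^m e_k^\T P_{g_k} g_k^*$, and $e_k^\T P_{g_k} = (\|e_k\| g_k)^\T P_{g_k} = \|e_k\| (P_{g_k} g_k)^\T = 0$ because $g_k \in \Null(P_{g_k})$ (Lemma~\ref{lemma_parallelismDefinition}, or the stated property $\Null(P_x)=\myspan\{x\}$). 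Therefore every term vanishes and $p^\T \dot p = 0$.

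Combining the two computations, $\dot p$ is orthogonal to both $\one\otimes I_d$ and $p$, hence to their span, which is the desired conclusion. There is no real obstacle here; the only point requiring a little care is the bookkeeping with the oriented versus undirected graph and the sign conventions $g_{ji}=-g_{ij}$, $g_{ji}^*=-g_{ij}^*$, which guarantee that the matrix expression \eqref{eq_controlLawMatrix} is well-defined independently of the chosen orientation. Once that is granted, both claims reduce to the identities $H\one=0$ and $P_{g_k}g_k=0$.
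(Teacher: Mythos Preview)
Your proof is correct and rests on exactly the same two identities the paper uses, $H\one=0$ and $P_{g_k}e_k=0$. The paper packages them slightly more abstractly---observing that $\dot p=\tilde R^\T(p)g^*\in\Range(\tilde R^\T(p))\perp\Null(\tilde R(p))$ and then invoking Lemma~\ref{lemma_nullSpaceRigidity} to place $\myspan\{\one\otimes I_d,p\}$ inside that null space---but your direct computation of $(\one\otimes I_d)^\T\dot p$ and $p^\T\dot p$ is the same argument unpacked.
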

\begin{proof}
The dynamics $\dot{p}=\tilde{R}^\T (p)g^*$ implies $\dot{p}\in\Range(\tilde{R}^\T (p))$.
Since $\Range(\tilde{R}^\T (p))\perp\Null(\tilde{R}(p))$, we have $\dot{p}\perp\Null(\tilde{R}(p))$.
Furthermore, $\Null(\tilde{R}(p))=\Null(R(p))$ and $\myspan\{\one\otimes I_d, p\}\subseteq \Null(R(p))$ by  Lemma~\ref{lemma_nullSpaceRigidity} concludes the proof.
\end{proof}

\begin{theorem}[Centroid and Scale Invariance]\label{propos_invariantCentroidScale}
    The centroid $\bar{p}$ and the scale $s$ are invariant under the control law \eqref{eq_controlLawMatrix}.
\end{theorem}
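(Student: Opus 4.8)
The plan is to show that $\dot{\bar p}(t)=0$ and $\dot s(t)=0$ for all $t$, both of which follow almost immediately from Lemma~\ref{lemma_xDotPerpendicular}. First I would treat the centroid. Writing $\bar p = \tfrac{1}{n}(\one\otimes I_d)^\T p$, differentiating gives $\dot{\bar p} = \tfrac{1}{n}(\one\otimes I_d)^\T \dot p$. Since $\one\otimes I_d$ is (by columns) a set of vectors spanning a subspace of $\myspan\{\one\otimes I_d, p(t)\}$, and Lemma~\ref{lemma_xDotPerpendicular} tells us $\dot p(t)\perp \myspan\{\one\otimes I_d, p(t)\}$, we get $(\one\otimes I_d)^\T \dot p(t)=0$, hence $\dot{\bar p}(t)=0$. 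Thus the centroid is constant.

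Next I would handle the scale. The cleanest route is to note that $n s^2 = \sum_{i=1}^n\|p_i-\bar p\|^2 = \|p - \one\otimes\bar p\|^2$. Since $\dot{\bar p}=0$ was just established, differentiating yields $\tfrac{d}{dt}(n s^2) = 2(p-\one\otimes\bar p)^\T(\dot p - \one\otimes\dot{\bar p}) = 2(p-\one\otimes\bar p)^\T\dot p = 2p^\T\dot p - 2(\one\otimes\bar p)^\T\dot p$. The second term vanishes because $\one\otimes\bar p\in\myspan\{\one\otimes I_d\}\subseteq\myspan\{\one\otimes I_d, p(t)\}$, which is orthogonal to $\dot p(t)$ by Lemma~\ref{lemma_xDotPerpendicular}; the first term vanishes because $p(t)$ itself lies in $\myspan\{\one\otimes I_d, p(t)\}$, again orthogonal to $\dot p(t)$. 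Hence $\tfrac{d}{dt}(ns^2)=0$, so $s$ is constant (it is nonnegative, so constancy of $s^2$ gives constancy of $s$). Alternatively one can avoid introducing $\one\otimes\bar p$ and argue directly with $\tfrac{d}{dt}\tfrac12\|p\|^2 = p^\T\dot p = 0$ together with constancy of $\bar p$, but the decomposition above is more transparent.

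Since the whole argument is a direct consequence of the orthogonality relation already proved, there is essentially no obstacle; the only point requiring a moment's care is to make sure that \emph{both} pieces appearing after differentiation — the term involving $p$ and the term involving $\one\otimes\bar p$ (or equivalently $\one\otimes I_d$) — lie inside $\myspan\{\one\otimes I_d, p(t)\}$, so that Lemma~\ref{lemma_xDotPerpendicular} applies to kill each of them. I would also remark that because the centroid and scale are the only quantities not fixed by bearing congruency, this lemma is exactly what is needed later to pin down the limiting formation (shape and size) from the initial condition.
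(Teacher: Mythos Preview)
Your proof is correct and follows essentially the same approach as the paper: both arguments differentiate $\bar p=(\one\otimes I_d)^\T p/n$ and $s=\|p-\one\otimes\bar p\|/\sqrt{n}$ and use Lemma~\ref{lemma_xDotPerpendicular} to kill the resulting inner products $(\one\otimes I_d)^\T\dot p$, $p^\T\dot p$, and $(\one\otimes\bar p)^\T\dot p$. The only cosmetic difference is that the paper differentiates $s$ directly while you differentiate $ns^2$, which amounts to the same computation.
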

\begin{proof}
    Since $\bar{p}=(\one\otimes I_d)^\T  p/n$, we have $\dot{\bar{p}}=(\one\otimes I_d)^\T  \dot{p}/n$.
    It follows from $\dot{p}\perp\Range(\one\otimes I_d)$ as shown in Lemma~\ref{lemma_xDotPerpendicular} that $\dot{\bar{p}}\equiv0$.
    Rewrite $s$ as $s=\|p-\one\otimes \bar{p}\|/\sqrt{n}$.
    Then,
    \begin{align*}
        \dot{s}
        &=\frac{1}{\sqrt{n}} \frac{(p-\one\otimes \bar{p})^\T }{\|p-\one\otimes \bar{p}\|} \dot{p}.
    \end{align*}
    It follows from $\dot{p}\perp{p}$ and $\dot{p}\perp\one\otimes\bar{p}$ that $\dot{s}\equiv0$.
\end{proof}

Theorem~\ref{propos_invariantCentroidScale} can be well demonstrated by the simple simulation example as shown in Figure~\ref{fig_simExampleDemo}(a).
As can be seen, the middle point (i.e., the centroid) and the distance of the two agents (i.e., the scale) are invariant during the formation evolution.
The invariance of centroid and scale has also been observed by \cite{Eric2014ACC} for bearing-only formation control in two-dimensional cases.

The following results, which can be obtained from Theorem~\ref{propos_invariantCentroidScale}, characterize the behavior of the formation trajectories.
In particular, the bounds for the quantities $\max_{i\in\V}\|p_i(t)-\bar{p}\|$ and $\|p_i(t)-p_j(t)\|, \forall i,j\in\V$ are given.

\begin{corollary}\label{corollary_maximumpipbarDistance}
The formation trajectory under the control law \eqref{eq_controlLawMatrix} satisfies the following inequalities,
\begin{enumerate}[(a)]
	\item $s\le \max_{i\in\V}\|p_i(t)-\bar{p}\| \le s\sqrt{n-1}, \quad \forall t\ge0.$
	\item $ \|p_i(t)-p_j(t)\|\le 2s\sqrt{n-1},\quad\forall i,j\in\V,\ \forall t\ge0.$
\end{enumerate}
\end{corollary}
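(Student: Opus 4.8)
The plan is to reduce everything to the two invariants established in Theorem~\ref{propos_invariantCentroidScale}: the centroid $\bar{p}$ is fixed in time and the scale satisfies $s = \|p(t)-\one\otimes\bar{p}\|/\sqrt{n}$, so that $\sum_{i\in\V}\|p_i(t)-\bar{p}\|^2 = ns^2$ for every $t\ge0$, together with the elementary identity $\sum_{i\in\V}(p_i(t)-\bar{p}) = 0$ which holds at every instant by the definition of the centroid. Since neither $s$ nor $\bar{p}$ depends on $t$, it suffices to fix an arbitrary $t\ge0$ and work with the scalars $d_i \triangleq \|p_i(t)-\bar{p}\|\ge0$, which satisfy $\sum_{i\in\V} d_i^2 = ns^2$.

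For the lower bound in (a), I would simply observe that the largest of the $d_i^2$ is no smaller than their average, i.e. $\max_{i\in\V} d_i^2 \ge \frac{1}{n}\sum_{i\in\V} d_i^2 = s^2$, hence $\max_{i\in\V}\|p_i(t)-\bar{p}\| = \max_{i\in\V} d_i \ge s$. For the upper bound, let $k\in\V$ attain the maximum and put $M \triangleq d_k$. Using $\sum_{i\in\V}(p_i(t)-\bar{p})=0$ I would write $p_k(t)-\bar{p} = -\sum_{i\ne k}(p_i(t)-\bar{p})$, so by the triangle inequality and then the Cauchy--Schwarz inequality, $M \le \sum_{i\ne k} d_i \le \sqrt{n-1}\,\bigl(\sum_{i\ne k} d_i^2\bigr)^{1/2} = \sqrt{n-1}\,\sqrt{ns^2 - M^2}$. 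Squaring and rearranging yields $nM^2 \le n(n-1)s^2$, that is $M \le s\sqrt{n-1}$; since $M = \max_{i\in\V} d_i$, this is exactly the asserted bound.

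Part (b) then follows immediately from (a): for any $i,j\in\V$, the triangle inequality gives $\|p_i(t)-p_j(t)\| \le \|p_i(t)-\bar{p}\| + \|p_j(t)-\bar{p}\| \le 2\max_{\ell\in\V}\|p_\ell(t)-\bar{p}\| \le 2s\sqrt{n-1}$, and the right-hand side is independent of $t$.

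I do not anticipate any genuine obstacle here; the only step that requires a moment's care is the upper bound in (a), where the centroid constraint must be combined with Cauchy--Schwarz to convert the ``sum of squares equals $ns^2$'' information into a bound on the single largest term, and one has to notice that bounding the maximizing index $k$ is tantamount to bounding $\max_{i\in\V} d_i$.
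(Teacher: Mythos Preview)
Your proposal is correct and essentially identical to the paper's proof: both use the centroid identity $\sum_i(p_i-\bar p)=0$ together with Cauchy--Schwarz to obtain $\|p_i-\bar p\|^2\le(n-1)(ns^2-\|p_i-\bar p\|^2)$ for the upper bound, the ``maximum is at least the average'' observation for the lower bound, and the triangle inequality for~(b). The only cosmetic difference is that you bound the maximizing index $k$ while the paper bounds an arbitrary $i$, which of course is equivalent.
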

\begin{proof}
We first prove $\|p_i-\bar{p}\| \le s\sqrt{n-1}$ for all $i\in\V$.
On one hand, $\sum_{j\in\V}(p_j-\bar{p})=(p_i-\bar{p})+\sum_{j\in\V,j\ne i}(p_j-\bar{p})=0$ implies
\begin{align}\label{eq_pipbarDistance_inequality1}
    \hspace{-5pt}\|p_i-\bar{p}\|^2
    &\le \left(\sum_{\substack{j\in\V \\j\ne i}}\|p_j-\bar{p}\|\right)^2 \hspace{-5pt}\le (n-1)\hspace{-3pt}\sum_{\substack{j\in\V,\\j\ne i}}\|p_j-\bar{p}\|^2.
\end{align}
On the other hand, scale invariance implies that $\|p_i-\bar{p}\|^2+\sum_{j\in\V,j\ne i}\|p_j-\bar{p}\|^2=ns^2$.
Substituting this expression into \eqref{eq_pipbarDistance_inequality1} gives
$\|p_i-\bar{p}\|^2\le (n-1)(ns^2-\|p_i-\bar{p}\|^2)$,
which implies $\|p_i-\bar{p}\| \le s\sqrt{n-1}$.
We next prove $s\le\max_{i\in\V}\|p_i-\bar{p}\|$.
Since $\max_{i\in\V}\|p_i-\bar{p}\|^2\ge\|p_j-\bar{p}\|^2$, we have
$
    n(\max_{i\in\V} \|p_i-\bar{p}\|^2)\ge\sum_{i=1}^n\|p_i-\bar{p}\|^2=ns^2,
$
which implies $\max_{i\in\V} \|p_i-\bar{p}\|\ge s$.
The inequality in (b) is obtained from
$\|p_i(t)-p_j(t)\|=\|(p_i(t)-\bar{p})-(p_j(t)-\bar{p})\|\le\|p_i(t)-\bar{p}\|+\|p_j(t)-\bar{p}\|\le 2s\sqrt{n-1}$.
\end{proof}

\subsection{Formation Stability Analysis}

In order to prove the formation stability, we adopt the following rigidity assumption.
\begin{assumption}\label{assumption_parallelRigid}
    The bearing constraints $\{g_{ij}^*\}_{(i,j)\in\E}$ ensures infinitesimal bearing rigidity.
\end{assumption}

Assumption~\ref{assumption_parallelRigid} gives two conditions that will be useful for the formation stability analysis.
The first condition is that the shape of any formation that satisfies the bearing constraints is unique according to Theorem~\ref{theorem_IPRImplyGPR}.
The second condition is a mathematical condition.
More specifically, suppose $\G(p)$ is a formation that satisfies the bearing constraints, then Assumption~\ref{assumption_parallelRigid} indicates that the bearing rigidity matrix $R(p)$ satisfies $\rank(R(p))=dn-d-1$ and $\Null(R(p))=\myspan\{\one\otimes I_d, p\}$ according to Theorem~\ref{theorem_conditionInfiParaRigid}.

The basic idea of the formation stability proof is to show that the formation converges from an initial formation $\G(p(0))$ to a \emph{target formation} $\G(p^*)$ as defined below.

\begin{definition}[Target Formation]\label{definition_globalcase_targetFormation}
    Let $\G(p^*)$ be a \emph{target formation} satisfying
    \begin{enumerate}[(a)]
        \item Centroid: $\bar{p}^*=\bar{p}(0)$.
        \item Scale: $s^*=s(0)$.
        \item Bearing: $(p_j^*-p_i^*)/\|p_j^*-p_i^*\|=g_{ij}^*$ for all $(i,j)\in\E$.
    \end{enumerate}
\end{definition}

The target formation $\G(p^*)$ has the same centroid and scale as the initial formation and it satisfies all the bearing constraints.

\begin{lemma}[Existence and Uniqueness]\label{lemma_targetFormationExistUnique}
The target formation $\G(p^*)$ in Definition~\ref{definition_globalcase_targetFormation} always exists and is unique under Assumption~\ref{assumption_parallelRigid}.
\end{lemma}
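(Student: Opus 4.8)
The target formation $\G(p^*)$ is required to satisfy three conditions: a prescribed centroid $\bar p^* = \bar p(0)$, a prescribed scale $s^* = s(0)$, and all the bearing constraints $g_{ij}^* $. The natural strategy is to first produce \emph{some} configuration $q$ satisfying the bearing constraints (such a $q$ exists since the constraints are feasible by hypothesis), then show that the full set of configurations satisfying the bearing constraints is exactly a two-parameter family obtained from $q$ by translation and positive scaling, and finally pick the unique member of that family with the correct centroid and scale.

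\textbf{Step 1 (parametrize all bearing-feasible configurations).} Let $q$ be any configuration with $(q_j-q_i)/\|q_j-q_i\| = g_{ij}^*$ for all $(i,j)\in\E$; feasibility of the constraints guarantees one exists, and we may assume the points $q_i$ are distinct. Since $\G(p^*)$ must be bearing equivalent to $\G(q)$ and $\G(q)$ is infinitesimally bearing rigid by Assumption~\ref{assumption_parallelRigid}, Theorem~\ref{theorem_IPRImplyGPR} (together with its proof via the decomposition $p^* = cq + \one\otimes\eta + w$ with $w\perp\myspan\{\one\otimes I_d,q\}$) shows that any such $p^*$ has the form $p^* = cq + \one\otimes\eta$ for some $c\in\R\setminus\{0\}$ and $\eta\in\R^d$. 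Conversely, scaling by $c>0$ and translating preserves all bearings, so the set of bearing-feasible configurations is precisely $\{cq + \one\otimes\eta : c>0,\ \eta\in\R^d\}$ — note $c<0$ reverses every bearing, hence is excluded, which pins down the sign.

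\textbf{Step 2 (match centroid and scale).} Writing $\bar q$ for the centroid of $q$, the candidate $p^* = cq + \one\otimes\eta$ has centroid $c\bar q + \eta$ and scale $c\,s_q$, where $s_q = \|q - \one\otimes\bar q\|/\sqrt n > 0$ is the (nonzero) scale of $q$. Setting $c = s(0)/s_q$ and $\eta = \bar p(0) - c\bar q$ produces a configuration satisfying all three requirements, giving existence; and since $c$ is forced by the scale condition (as $s_q\neq 0$) and then $\eta$ is forced by the centroid condition, the solution is unique. This establishes Lemma~\ref{lemma_targetFormationExistUnique}.

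\textbf{Main obstacle.} The only substantive point is Step~1: knowing that bearing equivalence to an infinitesimally bearing rigid framework forces the very rigid form $p^* = cq+\one\otimes\eta$ (rather than merely ``same shape up to something''). This is exactly the content of Theorem~\ref{theorem_IPRImplyGPR} and its proof, so the work is essentially to invoke it cleanly and then handle the sign of $c$ and the bookkeeping of centroid/scale, all of which is routine. One should also be slightly careful that $s(0) > 0$ (equivalently, the initial agents are not all coincident) so that the required scaling factor $c$ is well-defined and nonzero; this is implicitly assumed throughout the formation-control setup.
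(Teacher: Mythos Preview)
Your proposal is correct and follows essentially the same route as the paper: invoke feasibility to get one bearing-satisfying configuration, use Theorem~\ref{theorem_IPRImplyGPR} (unique shape under infinitesimal bearing rigidity) to reduce the solution set to translations and scalings, and then fix the translation and scale via the centroid and scale conditions. Your write-up is in fact more explicit than the paper's---you spell out the parametrization $p^*=cq+\one\otimes\eta$, resolve the sign of $c$ by noting that $c<0$ reverses every bearing, and flag the implicit hypothesis $s(0)>0$---but the underlying argument is the same.
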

\begin{proof}
Since the bearing constraints are feasible, there exist formations that satisfy the bearings. Due to the infinitesimal bearing rigidity in Assumption~\ref{assumption_parallelRigid}, these formations including $\G(p^*)$ can be uniquely determined up to translations and scaling factors.
Since $\G(p^*)$ additionally has the centroid and the scale as $\bar{p}(0)$ and $s(0)$, the translation and the scale of $\G(p^*)$ can be uniquely determined.
\end{proof}
\begin{remark}
The unique value of $p^*$ can be calculated as below.
From the bearing constraints, construct $\tilde{R}\triangleq\mydiag(P_{g_k^*})\bar{H}$, which has the same null space as the bearing rigidity matrix $R(p^*)$.
It follows from the infinitesimal bearing rigidity that $\Null(\tilde{R})=\Null(R(p^*))=\myspan\{\one\otimes I_d, p^*\}$.
Suppose $\myspan\{\one\otimes I_d, q\}$ is an orthogonal basis of $\Null(\tilde{R})$ obtained by numerical calculation.
Since $p^*\in\Null(\tilde{R})$, we can express $p^*$ as a linear combination of $\one\otimes I_d$ and $q$, $p^*=\one\otimes x+\alpha q,$
where $x\in\R^d$ and $\alpha \in\R$ are the coefficients to be calculated.
Since $\bar{p}^*=(\one\otimes I_d)^\T p^*/n=\bar{p}(0)$ and $s^*=\|p^*-\one\otimes \bar{p}^*\|/\sqrt{n}=s(0)$, a short calculation shows that $x=\bar{p}(0)$ and $\alpha =\pm s(0)\sqrt{n}/\|q\|$. The correct sign of $\alpha $ can be determined by comparing the signs of $q_j-q_i$ and $g_{ij}^*$.
\end{remark}

The stability proof is to show that the formation converges to the target formation and consequently the bearing errors converge to zero.
This idea was originally proposed by \cite{Eric2014ACC} to solve bearing-only formation control in two dimensions.
In this direction, let $\delta_i= p_i-p_i^*$ and then $\dot{\delta}_i=f_i(\delta)=\dot{p}_i$.
Denote $\delta=[\delta_1^\T,\dots,\delta_n^\T]^\T$ and $f(\delta)=[f_1^\T (\delta),\dots,f_n^\T (\delta)]^\T$.
With control law \eqref{eq_controlLawMatrix}, the $\delta$-dynamics is expressed as
\begin{align}\label{eq_Global_deltaDynamics}
    \dot{\delta}(t) = f(\delta)=\bar{H}^\T\mydiag(P_{g_k})g^*.
\end{align}
Our aim is to show $\delta(t)$ converges to zero.
We next identify the equilibriums of the $\delta$-dynamics.
Denote
$$r(t)\triangleq p(t)-(\one\otimes\bar{p}), \quad r^*\triangleq p^*-(\one\otimes\bar{p}^*).$$
Note $r(t)$ is obtained by moving the centroid of $p(t)$ to the origin.
Due to the scale invariance, it can be verified that $\|r(t)\|\equiv\|r^*\|=\sqrt{n}s$ for all $t\ge0$.
Moreover, since $\bar{p}=\bar{p}^*$, we have $\delta(t)=r(t)-r^*$.

\begin{figure}
  \centering
  \includegraphics[width=0.38\linewidth]{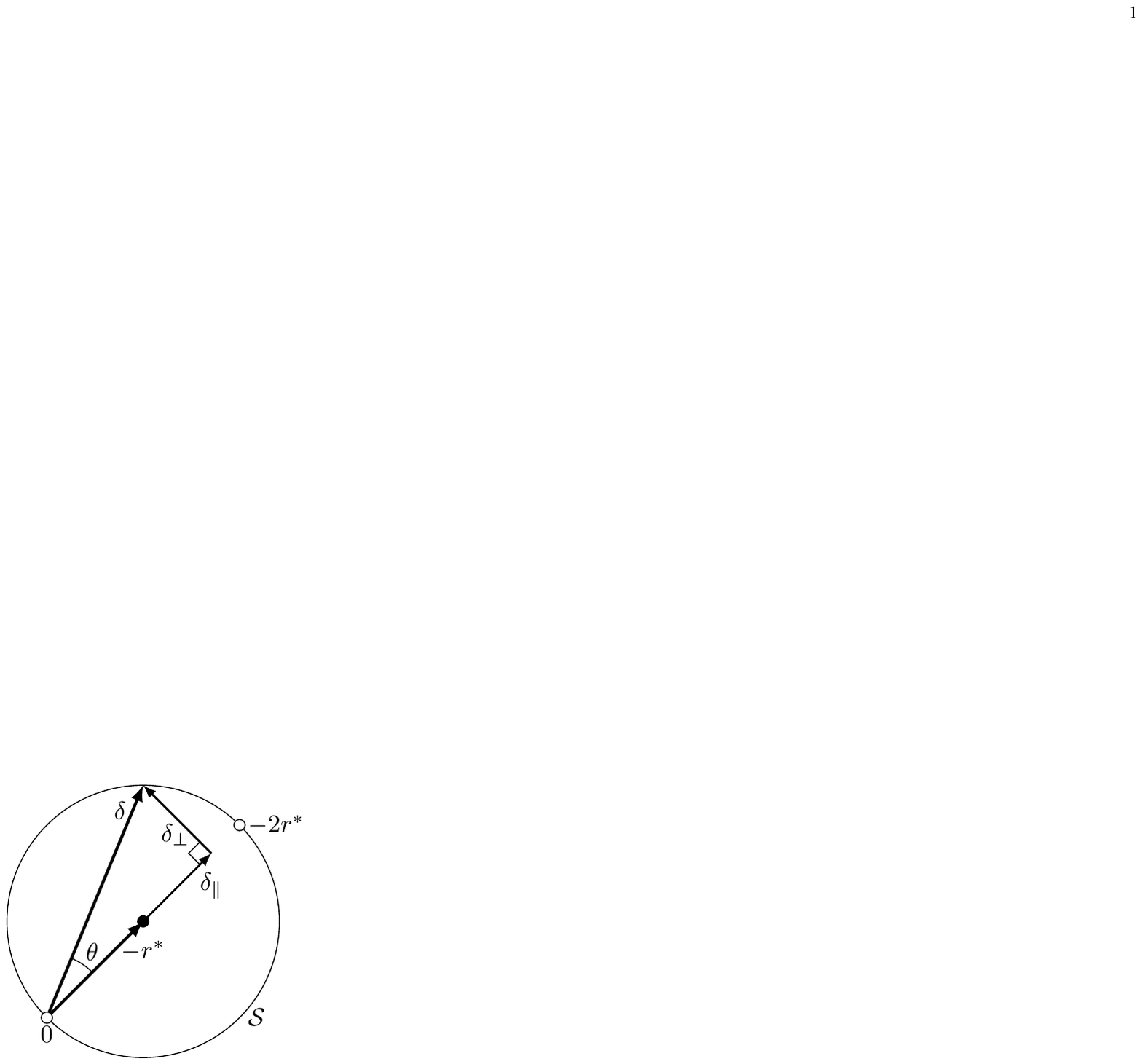}
  \caption{Geometric interpretation of $\delta$ which satisfies $\|\delta+r^*\|=\|r^*\|$.}
  \label{fig_deltaGeometry}
  \vspace{-10pt}
\end{figure}

\begin{lemma}
    System \eqref{eq_Global_deltaDynamics} evolves on the surface of the sphere
    \begin{align*}
        \S=\{\delta\in\R^{dn}: \|\delta+r^*\|=\|r^*\|\}.
    \end{align*}
\end{lemma}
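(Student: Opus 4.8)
The plan is to show that the quantity $\|\delta(t) + r^*\|$ is conserved along trajectories of \eqref{eq_Global_deltaDynamics}, and that it equals $\|r^*\|$ at $t=0$; together these imply $\delta(t)$ stays on $\S$ for all $t$. The key observation linking the $\delta$-dynamics to the original $p$-dynamics is the identity $\delta(t) + r^* = r(t) = p(t) - \one\otimes\bar{p}$, which holds because $\bar{p}(t) = \bar{p}^* = \bar{p}(0)$ is constant by Theorem~\ref{propos_invariantCentroidScale} and $\delta = p - p^*$, $r^* = p^* - \one\otimes\bar{p}^*$. So proving the claim reduces to showing $\|p(t) - \one\otimes\bar{p}\|$ is invariant, which is exactly the scale invariance already established.

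First I would write $\frac{\D}{\D t}\|\delta + r^*\|^2 = 2(\delta + r^*)^\T\dot\delta = 2 r(t)^\T \dot{p}(t)$, using $\dot\delta = \dot p$ and $\delta + r^* = r(t)$. Then I would invoke Lemma~\ref{lemma_xDotPerpendicular}, which gives $\dot{p}(t)\perp\myspan\{\one\otimes I_d, p(t)\}$; since $r(t) = p(t) - \one\otimes\bar{p}$ is a linear combination of $p(t)$ and $\one\otimes\bar{p} = (\one\otimes I_d)\bar{p}$, both of which lie in that span, we get $r(t)^\T\dot p(t) = 0$. Hence $\frac{\D}{\D t}\|\delta + r^*\|^2 \equiv 0$, so $\|\delta(t)+r^*\|$ is constant in $t$.

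It remains to evaluate the constant at $t = 0$. At $t=0$ we have $\delta(0) + r^* = r(0) = p(0) - \one\otimes\bar{p}(0)$, so $\|\delta(0) + r^*\| = \|p(0) - \one\otimes\bar{p}(0)\| = \sqrt{n}\,s(0) = \sqrt{n}\,s^*$ by the definition of scale and the fact that $s^* = s(0)$. On the other hand $\|r^*\| = \|p^* - \one\otimes\bar{p}^*\| = \sqrt{n}\,s^*$ as well. Therefore $\|\delta(t) + r^*\| = \|r^*\|$ for all $t\ge 0$, i.e., $\delta(t)\in\S$, which is the claim.

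I do not anticipate a genuine obstacle here: every ingredient (centroid invariance, scale invariance, $\dot p\perp\Null(R(p))$, and the decomposition $\delta + r^* = r$) is already in hand from the preceding lemmas, and the argument is a one-line energy computation. The only point requiring a little care is bookkeeping the centroid relations so that $r(t)$ is genuinely in $\myspan\{\one\otimes I_d, p(t)\}$ rather than some shifted affine set — but since $\one\otimes\bar{p}$ is a constant vector lying in $\Range(\one\otimes I_d)$, this is immediate.
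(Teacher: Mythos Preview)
Your proof is correct and follows essentially the same approach as the paper: both rely on the identity $\delta(t)+r^*=r(t)$ together with scale invariance to conclude $\|\delta(t)+r^*\|=\|r(t)\|=\|r^*\|$. The paper's version is just terser, invoking Theorem~\ref{propos_invariantCentroidScale} directly rather than re-deriving the derivative computation via Lemma~\ref{lemma_xDotPerpendicular}.
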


\begin{proof}
It follows from $\delta(t)=r(t)-r^*$ that $\|\delta(t)+r^*\|=\|r(t)\|=\|r^*\|$, where $\|r(t)\|=\|r^*\|$ is due to the scale invariance.
\end{proof}

The state manifold $\S$ is illustrated by Figure~\ref{fig_deltaGeometry}.
We next introduce a useful lemma and then prove that system \eqref{eq_Global_deltaDynamics} has \emph{two isolated equilibriums} on $\S$.

\begin{lemma}\label{lemma_exchangeP1g2}
    Any two unit vectors $g_1, g_2\in\R^d$ always satisfy $g_1^\T P_{g_2}g_1=g_2^\T P_{g_1}g_2$.
\end{lemma}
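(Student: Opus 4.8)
The plan is to verify the identity $g_1^\T P_{g_2} g_1 = g_2^\T P_{g_1} g_2$ for unit vectors $g_1, g_2 \in \R^d$ by simply expanding both sides using the definition $P_x = I_d - xx^\T$ (valid here since $g_1, g_2$ are unit vectors, so $P_{g_1} = I_d - g_1 g_1^\T$ and $P_{g_2} = I_d - g_2 g_2^\T$). First I would write
\begin{align*}
    g_1^\T P_{g_2} g_1 = g_1^\T (I_d - g_2 g_2^\T) g_1 = \|g_1\|^2 - (g_1^\T g_2)^2 = 1 - (g_1^\T g_2)^2,
\end{align*}
using $\|g_1\| = 1$. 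By the completely symmetric computation,
\begin{align*}
    g_2^\T P_{g_1} g_2 = g_2^\T (I_d - g_1 g_1^\T) g_2 = \|g_2\|^2 - (g_2^\T g_1)^2 = 1 - (g_1^\T g_2)^2.
\end{align*}
Since $g_1^\T g_2 = g_2^\T g_1$ is a scalar, the two right-hand sides are identical, which establishes the claim.

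There is essentially no obstacle here: the lemma is a one-line algebraic identity once one substitutes the rank-one form of the projection and uses that the vectors are unit. The only mild care needed is to note that both expressions equal $1 - \cos^2\theta = \sin^2\theta$, where $\theta$ is the angle between $g_1$ and $g_2$ — this geometric reading also makes the symmetry transparent, since the angle between two vectors does not depend on their order. I would include this remark briefly, as it clarifies why the identity will be useful downstream (it quantifies the bearing discrepancy symmetrically in the two endpoints of an edge).

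If one wanted a proof that avoids the unit-vector assumption and works with the general operator $P(x) = I_d - \frac{x}{\|x\|}\frac{x^\T}{\|x\|}$, the same expansion goes through verbatim with $g_i$ replaced by $g_i/\|g_i\|$, again yielding $1 - (g_1^\T g_2)^2/(\|g_1\|^2\|g_2\|^2)$ on both sides; but since the lemma as stated already restricts to unit vectors, the short computation above suffices.
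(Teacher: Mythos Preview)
Your proof is correct and matches the paper's own argument essentially line for line: both expand $g_i^\T P_{g_j} g_i$ using $P_{g_j} = I_d - g_j g_j^\T$ and reduce each side to $1 - (g_1^\T g_2)^2$ via $\|g_1\|=\|g_2\|=1$. The only cosmetic difference is that the paper chains the equalities in a single string rather than computing the two sides separately.
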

\begin{proof}
Since $g_1^\T g_1=g_2^\T g_2=1$, we have $g_1^\T P_{g_2}g_1=g_1^\T (I_d-g_2g_2^\T )g_1=g_1^\T g_1-g_1^\T g_2g_2^\T g_1=g_2^\T g_2-g_2^\T g_1g_1^\T g_2=g_2^\T (I_d-g_1g_1^\T )g_2=g_2^\T P_{g_1}g_2$.
\end{proof}

\begin{theorem}[Equilibrium]\label{theorem_global_twoEquilibrium}
    Under Assumption~\ref{assumption_parallelRigid}, system \eqref{eq_Global_deltaDynamics} has two isolated equilibriums, $\delta=0$ and $\delta=-2r^*$.
\end{theorem}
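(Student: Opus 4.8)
The plan is to check the two candidate points directly, then show that every equilibrium must be bearing equivalent to the target formation $\G(p^*)$, and finally invoke infinitesimal bearing rigidity together with the conserved centroid and scale to rule out any other equilibrium. For the direct check: at $\delta=0$ we have $p=p^*$, so $g_k=g_k^*$ and $P_{g_k}g_k^*=P_{g_k^*}g_k^*=0$ for every edge $k$, hence $f(0)=\bar H^\T\mydiag(P_{g_k})g^*=0$; at $\delta=-2r^*$ we have $r=r^*+\delta=-r^*$, i.e. $p$ is the point reflection of $p^*$ about the common centroid, so $g_k=-g_k^*$ and again $P_{g_k}g_k^*=P_{-g_k^*}g_k^*=0$, giving $f(-2r^*)=0$. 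Note $r^*\neq0$ since $s^*=s(0)>0$, so the two equilibria are distinct.

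Conversely, suppose $f(\delta)=0$. The key step is to left-multiply by $(p^*)^\T$; the natural choice $p^\T$ is uninformative since $p^\T f(\delta)\equiv0$ (this is just $\dot p\perp p$). Using $\bar Hp^*=e^*$, $e_k^*=\|e_k^*\|g_k^*$, and $P_{g_k}^\T P_{g_k}=P_{g_k}$, one gets
$$0=(p^*)^\T f(\delta)=\sum_{k=1}^m\|e_k^*\|\,\|P_{g_k}g_k^*\|^2.$$
Since each $\|e_k^*\|>0$ (feasibility of the bearing constraints), every term vanishes, so $P_{g_k}g_k^*=0$, i.e. $g_k\parallel g_k^*$, for all $k$ by Lemma~\ref{lemma_parallelismDefinition}. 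Thus $\G(p)$ is bearing equivalent to $\G(p^*)$, which is infinitesimally bearing rigid by Assumption~\ref{assumption_parallelRigid}; Theorem~\ref{theorem_IPRImplyGPR} then gives $p=cp^*+\one\otimes\eta$ for some $c\in\R\setminus\{0\}$ and $\eta\in\R^d$. A short computation using $\bar p=c\bar p^*+\eta$ yields $r=p-\one\otimes\bar p=c\,r^*$, and since the flow preserves the scale (Theorem~\ref{propos_invariantCentroidScale}), $\|r\|=\|r^*\|$ forces $|c|=1$. Because $\delta=r-r^*$ (centroid invariance gives $\bar p=\bar p^*$), we obtain $\delta=(c-1)r^*$, which is $0$ when $c=1$ and $-2r^*$ when $c=-1$. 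Hence the equilibrium set equals $\{0,-2r^*\}$, a two-point set, so both equilibria are isolated.

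The only nontrivial point — the main obstacle — is recognizing that the equilibrium condition should be tested against $(p^*)^\T$ in order to extract that all measured bearings are parallel to the prescribed ones; once the equilibrium configuration is identified as a framework bearing equivalent to $\G(p^*)$, the unique-shape theorem and the two conserved quantities leave exactly the two listed points, and everything else is routine.
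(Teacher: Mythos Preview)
Your proof is correct and follows essentially the same route as the paper: both hinge on left-multiplying the equilibrium condition by $(p^*)^\T$ to force each $(g_k^*)^\T P_{g_k}g_k^*=0$, then using infinitesimal bearing rigidity together with centroid and scale invariance to pin down $\delta\in\{0,-2r^*\}$. The only cosmetic differences are that you read off $P_{g_k}g_k^*=0$ directly from $\|P_{g_k}g_k^*\|^2=0$ (the paper instead passes through Lemma~\ref{lemma_exchangeP1g2} to get $e_k^\T P_{g_k^*}e_k=0$ and $\tilde R(p^*)p=0$), and you invoke Theorem~\ref{theorem_IPRImplyGPR} as a black box where the paper unpacks the null-space characterization of Theorem~\ref{theorem_conditionInfiParaRigid} inline; these are equivalent packagings of the same argument.
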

\begin{proof}
Any equilibrium $\delta\in\S$ must satisfy $f(\delta)=\bar{H}^\T\mydiag(P_{g_k})g^*=0$, which implies
\begin{align*}
    0
    &=(p^*)^\T\bar{H}^\T\mydiag(P_{g_k})g^*
    =(e^*)^\T\mydiag(P_{g_k})g^* \\
    &=\sum_{k=1}^m (e^*_k)^\T P_{g_k}g_k^*
    =\sum_{k=1}^m \|e_k^*\|(g^*_k)^\T P_{g_k}g_k^*.
\end{align*}
Since $(g^*_k)^\T P_{g_k}g_k^*\ge0$, the above equation implies $(g^*_k)^\T P_{g_k}g_k^*=0$ for all $k$.
As a result, by Lemma~\ref{lemma_exchangeP1g2}, we have $g_k^\T  P_{g_k^*}g_k=0\Rightarrow e_k^\T  P_{g_k^*}e_k=0$ for all $k$ and thus
\begin{align*}
    0& = e^\T \dia{P_{g_k^*}}e = p^\T \underbrace{\bar{H}^\T\dia{P_{g_k^*}}}_{\tilde{R}^\T (p^*)}\underbrace{\dia{P_{g_k^*}}\bar{H}}_{\tilde{R}(p^*)}p,
\end{align*}
where the last equality is due to the facts that $P_{g_k^*}=P_{g_k^*}^2$ and $e=\bar{H}p$.
The above equation indicates $\tilde{R}(p^*)p=0.$
    Observe $\tilde{R}(p^*)=\mydiag(P_{g^*_k})\bar{H}$ has the same null space as the bearing rigidity matrix $R(p^*)=\mydiag(P_{g^*_k}/\|e^*_k\|)\bar{H}$.
    Since $\G(p^*)$ is infinitesimally bearing rigid by Assumption~\ref{assumption_parallelRigid}, it follows from Theorem~\ref{theorem_conditionInfiParaRigid} that
    $\Null(\tilde{R}(p^*))= \myspan\{\one\otimes I_d,p^*-\one\otimes\bar{p}^*\}$.
    Considering $\tilde{R}(p^*)p=0\Leftrightarrow\tilde{R}(p^*)(p-\one\otimes\bar{p})=0$, we have
    \begin{align*}
        p-\one\otimes\bar{p}\in\myspan\{\one\otimes I_d, p^*-\one\otimes\bar{p}^*\}.
    \end{align*}
    Because $p-\one\otimes\bar{p}\perp \Range(\one\otimes I_d)$, we further know
    $p-\one\otimes\bar{p}\in\myspan\{p^*-\one\otimes\bar{p}^*\}$.
    Moreover, since $\|p-\one\otimes\bar{p}\|=\|p^*-\one\otimes\bar{p}^*\|$ due to the scale invariance, we have
    \begin{align*}
         p-\one\otimes\bar{p}= \pm(p^*-\one\otimes\bar{p}^*).
    \end{align*}
    (i)~In the case of $p-\one\otimes\bar{p}=p^*-\one\otimes\bar{p}^*$, we have $p=p^*\Leftrightarrow\delta=0$ and consequently $g_{ij}=g_{ij}^*$ for all $(i,j)\in\E$.
   (ii)~In the case of $p-\one\otimes\bar{p} = -(p^*-\one\otimes\bar{p}^*)$, we have $p=-p^*+2(\one\otimes\bar{p}^*)\Leftrightarrow\delta=-2(p^*-\one\otimes\bar{p}^*)$, and consequently $g_{ij}=-g_{ij}^*$ for all $(i,j)\in\E$.
\end{proof}

\begin{figure}[t]
  \centering
  \def\myscale{0.5}
  \begin{tikzpicture}[scale=\myscale]
\def\length{2}
\def\linetype{}
\coordinate (origin) at (0,0);
\coordinate (x1) at (-\length,-\length/2);
\coordinate (x2) at (0,\length);
\coordinate (x3) at (\length,-\length/2);
\coordinate (x1_ref) at (\length,\length/2);
\coordinate (x2_ref) at (0,-\length);
\coordinate (x3_ref) at (-\length,\length/2);
\def\radius{8pt}
\draw [fill=black](origin) circle [radius=1pt];
\def\mycolor{black}
\draw [thick, draw=\mycolor] (x1)--(x2)--(x3)--cycle;
\draw [thick,fill=white, draw=\mycolor](x1) circle [radius=\radius];
\draw (x1) node[] {\scriptsize{\color{\mycolor}$1$}};
\draw [thick,fill=white, draw=\mycolor](x2) circle [radius=\radius];
\draw (x2) node[] {\scriptsize{\color{\mycolor}$2$}};
\draw [thick,fill=white, draw=\mycolor](x3) circle [radius=\radius];
\draw (x3) node[] {\scriptsize{\color{\mycolor}$3$}};
\def\mycolor2{black}
\draw [semithick,draw=\mycolor2,densely dotted] (x1_ref)--(x2_ref)--(x3_ref)--cycle;
\draw [semithick,fill=white,draw=\mycolor2,densely dotted](x1_ref) circle [radius=\radius];
\draw (x1_ref) node[] {\color{\mycolor2}\scriptsize{$1$}};
\draw [semithick,fill=white,draw=\mycolor2,densely dotted](x2_ref) circle [radius=\radius];
\draw (x2_ref) node[] {\color{\mycolor2}\scriptsize{$2$}};
\draw [semithick,fill=white,draw=\mycolor2,densely dotted](x3_ref) circle [radius=\radius];
\draw (x3_ref) node[] {\color{\mycolor2}\scriptsize{$3$}};
\end{tikzpicture}
  \vspace{-10pt}
  \caption{An illustration of the target formation where $\delta =0$ (solid) and its corresponding point reflection where $\delta=-2r^*$ (dashed).}
  \label{fig_twoEquilibriums}
  \vspace{-10pt}
\end{figure}
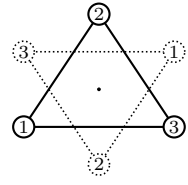

The equilibrium $\delta=0$ is desired whereas the other one $\delta=-2r^*$ is undesired.
As shown in the proof, the formation at $\delta=-2r^*$ is geometrically a \emph{point reflection} of the target formation about the \emph{centroid}.
As a result, the two formations at the two equilibriums have the same centroid, scale, and shape, but they have the opposite bearings.
See Figure~\ref{fig_twoEquilibriums} for an illustration.

Although we will present a nonlinear stability analysis of the two equilibriums later, it is still meaningful to examine the Jacobian matrices at the two equilibriums.
Based on the Jacobian matrices, we are able to conclude by Lyapunov's indirect method that the undesired equilibrium $\delta=-2r^*$ is unstable.

\begin{proposition}\label{proposition_global_Jacobian}
Let $A={\partial f(\delta)}/{\partial \delta}$ be the Jacobian of $f(\delta)$.
At the undesired equilibrium $\delta=-2r^*$, the Jacobian matrix $\left.A\right|_{\delta=-2r^*}$ is symmetric positive semi-definite and at least one eigenvalue is positive.
As a result, the undesired equilibrium $\delta=-2r^*$ is unstable.
\end{proposition}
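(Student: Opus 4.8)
The plan is to compute the Jacobian $A = \partial f(\delta)/\partial\delta$ directly from the $\delta$-dynamics \eqref{eq_Global_deltaDynamics} and evaluate it at the undesired equilibrium, where the bearings satisfy $g_k = -g_k^*$ for all $k$. Since $f(\delta) = \bar H^\T \mydiag(P_{g_k})g^*$ and $g_k$ depends on $p = p^* + \delta$, I would first differentiate the single block $P_{g_k}g_k^*$ with respect to $p$. Writing $g_k = e_k/\|e_k\|$ with $e_k = \bar H_k p$ (the $k$th block-row of $\bar H$), a short calculation using $\partial g_k/\partial e_k = P_{g_k}/\|e_k\|$ gives $\partial(P_{g_k}g_k^*)/\partial p = -\frac{1}{\|e_k\|}\big(g_k (g_k^*)^\T P_{g_k} + P_{g_k}g_k^* g_k^\T\big)\bar H_k$, where I have used the product rule on $P_{g_k} = I_d - g_k g_k^\T$. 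Assembling over all edges, $A = -\bar H^\T \mydiag\!\big(\tfrac{1}{\|e_k\|}(g_k(g_k^*)^\T P_{g_k} + P_{g_k}g_k^* g_k^\T)\big)\bar H$.

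Next I would specialize to $\delta = -2r^*$, i.e.\ $g_k = -g_k^*$. Then $P_{g_k} = P_{g_k^*}$, and $P_{g_k}g_k^* = P_{g_k^*}g_k^* = 0$, so the second term in each block vanishes outright, while in the first term $(g_k^*)^\T P_{g_k} = (g_k^*)^\T P_{g_k^*} = 0$ as well. Wait — that kills both terms, which would make $A = 0$; so the cancellation must be handled more carefully, keeping the full expression before substituting. The correct route is to \emph{not} expand $P_{g_k}$ naively but to differentiate $P_{g_k}g_k^*$ as $\tfrac{\partial g_k}{\partial p}$ acting appropriately; the right identity is $\partial(P_{g_k}g_k^*)/\partial e_k = -\tfrac{1}{\|e_k\|}\big( (g_k^\T g_k^*) P_{g_k} + g_k^* g_k^\T P_{g_k} - \text{(symmetrizing terms)}\big)$, and after substituting $g_k = -g_k^*$ the surviving piece is $\tfrac{1}{\|e_k\|} P_{g_k^*}$. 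Thus $\left.A\right|_{\delta=-2r^*} = \bar H^\T \mydiag\!\big(P_{g_k^*}/\|e_k^*\|\big)\bar H$, which is manifestly symmetric and positive semi-definite (it is $\tilde R^\T(p^*)D\,\tilde R(p^*)$ for a positive diagonal $D$, or equivalently $R(p^*)^\T$ times something — the key point is the PSD quadratic form structure). I expect the sign bookkeeping in this differentiation to be the main obstacle, since the equilibrium condition makes several terms collapse and one must be careful which terms survive.

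Once the clean form $\left.A\right|_{\delta=-2r^*} = \bar H^\T \mydiag(P_{g_k^*}/\|e_k^*\|)\bar H$ is established, symmetry and positive semi-definiteness are immediate. For the positivity of at least one eigenvalue, it suffices to show $A \neq 0$; but $A x = 0$ iff $\mydiag(P_{g_k^*}/\|e_k^*\|)\bar H x = 0$ iff $P_{g_k^*} e_k = 0$ for all $k$ (with $e = \bar H x$), i.e.\ iff $x \in \Null(\tilde R(p^*)) = \Null(R(p^*))$, which by Assumption~\ref{assumption_parallelRigid} and Theorem~\ref{theorem_conditionInfiParaRigid} is $\myspan\{\one\otimes I_d, p^*\}$, a proper subspace of $\R^{dn}$ (since $\rank(R(p^*)) = dn-d-1 > 0$). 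Hence $A$ has nontrivial range and, being PSD, a strictly positive eigenvalue. Finally, since the dynamics evolve on the sphere $\S$ and the tangent space to $\S$ at $\delta = -2r^*$ contains the eigenvector $v$ with positive eigenvalue (because $v \perp \Null(R(p^*)) \ni r^*$, so $v$ is tangent to $\S$ at $-2r^*$ whose normal direction is $\delta + r^* = -r^*$), Lyapunov's indirect method applied to the restricted flow on $\S$ yields instability of the undesired equilibrium.
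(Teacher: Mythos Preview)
Your approach is essentially the paper's: compute the Jacobian of $f$, evaluate at $g_k=-g_k^*$, recognize the Laplacian-type structure $\left.A\right|_{\delta=-2r^*}=\bar H^\T\mydiag(P_{g_k^*}/\|e_k^*\|)\bar H$, and conclude instability by Lyapunov's indirect method. The paper does the differentiation blockwise, obtaining $A_{ij}=G_{ij}P_{g_{ij}}/\|e_{ij}\|$ with $G_{ij}=(g_{ij}^\T g_{ij}^*)I_d+g_{ij}(g_{ij}^*)^\T$, which cleanly resolves the sign bookkeeping you flagged (your first expansion dropped the scalar term $(g_k^\T g_k^*)P_{g_k}$ that actually survives at $g_k=-g_k^*$); your null-space argument for the positive eigenvalue and the tangent-space check on $\S$ are valid refinements that the paper leaves implicit.
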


\begin{proof}
    Recall $f_i(\delta)=-\sum_{j\in\N_i} P_{g_{ij}}g_{ij}^*, \forall i\in\V$.
    For any $j\notin\N_i$, we have $A_{ij}=\partial f_i/\partial \delta_j=0$.
    For any $j\in\N_i$, we have
    \begin{align*}
        A_{ij}
        &=\frac{\partial f_i}{\partial \delta_j} = - \frac{\partial P_{g_{ij}}}{\partial \delta_j}g_{ij}^* = \left(\frac{\partial g_{ij}}{\partial \delta_j}g_{ij}^\T+g_{ij}\left(\frac{\partial g_{ij}}{\partial \delta_j}\right)^\T\right)g_{ij}^* \\
        &= \underbrace{\left(g_{ij}^\T g_{ij}^* I_d+g_{ij}{g_{ij}^*}^\T\right)}_{G_{ij}}\frac{\partial g_{ij}}{\partial \delta_j}
        = G_{ij}\frac{P_{g_{ij}}}{\|e_{ij}\|}.
    \end{align*}
    For any $i\in\V$, we have
    \begin{align*}
        A_{ii}
        & = - \sum_{j\in\N_i}\frac{\partial P_{g_{ij}}}{\partial \delta_i}g_{ij}^* = \sum_{j\in\N_i}G_{ij}\frac{\partial g_{ij}}{\partial \delta_i} = -\sum_{j\in\N_i}G_{ij}\frac{P_{g_{ij}}}{\|e_{ij}\|}.
    \end{align*}
    Observe $A_{ii}=-\sum_{j\in\N_i}A_{ij}$ and $A_{ij}=A_{ji}$.
    Therefore, $A$ has a similar structure as graph Laplacian \cite{Mesbahi2010}.

    At the undesired equilibrium $\delta=-2r^*$ where $g_{ij}=-g_{ij}^*$ for all $(i,j)\in\E$, we have
    \begin{align*}
        \left.A_{ij}\right|_{\delta=-2r^*}
        =-\left(I_d+g_{ij}^*{g_{ij}^*}^\T\right)\frac{P_{g_{ij}^*}}{\|e_{ij}\|}
        =- \frac{P_{g^*_{ij}}}{\|e^*_{ij}\|}\le0
    \end{align*}
    for all $j\in\N_i$.
    Similarly, we obtain
    \begin{align*}
        \left.A_{ii}\right|_{\delta=-2r^*}
        &= \sum_{j\in\N_i}\frac{P_{g^*_{ij}}}{\|e^*_{ij}\|}\ge0,\quad\forall i\in\V.
    \end{align*}
    Note $A|_{\delta=-2r^*}$ is positive semi-definite definite.
    To see that, consider any vector $y=[y_1^\T ,\dots,y_n^\T ]^\T$ where $y_i\in\R^d$. Then, $y^\T (A|_{\delta=-2r^*})y=\sum_{(i,j)\in\E} (y_i-y_j)^\T P_{g^*_{ij}}(y_i-y_j)/\|e^*_{ij}\|\ge 0$.
    Thus, $A|_{\delta=-2r^*}$ has at least one positive eigenvalue and consequently the undesired equilibrium $\delta=-2r^*$ is unstable by Lyapunov's indirect method.
\end{proof}

It can be shown that the Jacobian matrix at the desired equilibrium $\delta=0$ is $\left.A\right|_{\delta=0}=-\left.A\right|_{\delta=-2r^*}\le0$, which is symmetric negative semi-definite.
Since $\left.A\right|_{\delta=0}$ is not Hurwitz, the stability of $\delta=0$ cannot be concluded by the Lyapunov's indirect method.
We next present a complete and nonlinear stability analysis of the two equilibriums.

\begin{theorem}[Almost Global Exponential Stability]\label{theorem_global_almostGloablStability}
    Under Assumption \ref{assumption_parallelRigid}, the system trajectory $\delta(t)$ of \eqref{eq_Global_deltaDynamics} exponentially converges to $\delta=0$ from any $\delta(0)\in\S$ except $\delta(0)=-2r^*$.
\end{theorem}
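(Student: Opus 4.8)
The plan is to use a Lyapunov argument on the sphere $\S$, combined with LaSalle's invariance principle and the instability of the undesired equilibrium established in Proposition~\ref{proposition_global_Jacobian}. A natural candidate Lyapunov function is $V(\delta) = \tfrac{1}{2}\|\delta\|^2 = \tfrac{1}{2}\|p - p^*\|^2$. First I would compute $\dot V = \delta^\T \dot\delta = \delta^\T \bar{H}^\T \mydiag(P_{g_k}) g^*$. Writing $\delta = p - p^*$ and using $\bar H \delta = e - e^*$, this becomes $\sum_{k=1}^m (e_k - e_k^*)^\T P_{g_k} g_k^*$. Since $P_{g_k} e_k = P_{g_k} g_k = 0$, the $e_k$ term drops, leaving $\dot V = -\sum_{k=1}^m (e_k^*)^\T P_{g_k} g_k^* = -\sum_{k=1}^m \|e_k^*\| (g_k^*)^\T P_{g_k} g_k^*$. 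Each summand is nonnegative because $P_{g_k}$ is positive semidefinite, so $\dot V \le 0$, with equality precisely when $(g_k^*)^\T P_{g_k} g_k^* = 0$ for every $k$, i.e. (by Lemma~\ref{lemma_exchangeP1g2} and the rigidity argument already carried out in the proof of Theorem~\ref{theorem_global_twoEquilibrium}) only at the two equilibria $\delta = 0$ and $\delta = -2r^*$.

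Next I would invoke LaSalle: the sphere $\S$ is compact and invariant, so every trajectory converges to the largest invariant set contained in $\{\dot V = 0\} = \{0, -2r^*\}$, which is just those two points. Hence $\delta(t) \to 0$ or $\delta(t) \to -2r^*$. Proposition~\ref{proposition_global_Jacobian} shows $\delta = -2r^*$ is unstable (its Jacobian has a positive eigenvalue), so its basin of attraction within $\S$ has measure zero; more concretely, by the stable manifold theorem the set of initial conditions converging to $-2r^*$ lies on a lower-dimensional stable manifold, and in fact one can argue that the only trajectory on $\S$ converging to $-2r^*$ is the constant one, so "almost global" can be sharpened to "from every $\delta(0) \in \S$ except $\delta(0) = -2r^*$," which is exactly the claimed statement.

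To upgrade convergence to \emph{exponential} convergence near $\delta = 0$, I would linearize: the Jacobian at $\delta = 0$ is $\left.A\right|_{\delta=0} = -\left.A\right|_{\delta=-2r^*}$, which is symmetric negative semidefinite with kernel exactly $\myspan\{\one\otimes I_d, r^*\}$ (by Assumption~\ref{assumption_parallelRigid} and Theorem~\ref{theorem_conditionInfiParaRigid}, since $\left.A\right|_{\delta=0}$ is, up to signs, $-\tilde R^\T(p^*)$-type and shares the null space structure of $R(p^*)$). The dynamics are confined to $\S$, whose tangent space at $\delta = 0$ is $(r^*)^\perp$; the translational directions $\one\otimes I_d$ are removed because centroid invariance (Theorem~\ref{propos_invariantCentroidScale}) keeps $\delta \perp \one\otimes I_d$. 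Restricted to the tangent space of $\S$ at the origin, intersected with $(\one\otimes I_d)^\perp$, the Jacobian is negative \emph{definite}, so $\delta = 0$ is locally exponentially stable. Combining local exponential stability with global attraction (from the LaSalle argument, for all $\delta(0)\neq -2r^*$) yields the theorem.

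The main obstacle I anticipate is the rigorous treatment of the "except $\delta(0) = -2r^*$" clause — i.e. showing the stable set of the undesired equilibrium \emph{within $\S$} is precisely the single point $\{-2r^*\}$ rather than merely a measure-zero set. Since $\left.A\right|_{\delta=-2r^*}$ is positive semidefinite, the undesired equilibrium has no stable directions \emph{tangent to the relevant manifold}: every tangent direction at $-2r^*$ that lies in $(\one\otimes I_d)^\perp$ and in the tangent space of $\S$ is an unstable direction (the kernel of $\left.A\right|_{\delta=-2r^*}$ within that subspace is trivial by infinitesimal bearing rigidity, exactly as in Proposition~\ref{proposition_global_Jacobian}). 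Hence the local stable manifold of $-2r^*$ relative to $\S$ is zero-dimensional, giving the sharp exclusion set. A secondary technical point is verifying that $\dot V$ being strictly negative off the two equilibria is enough without re-deriving the equilibrium classification; but that is handled by reusing verbatim the rigidity computation from the proof of Theorem~\ref{theorem_global_twoEquilibrium}.
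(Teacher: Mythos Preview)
Your proposal is correct but takes a genuinely different route from the paper. Both arguments begin with $V=\tfrac12\|\delta\|^2$ and the computation $\dot V=-\sum_k\|e_k^*\|(g_k^*)^\T P_{g_k}g_k^*\le0$, but then diverge. The paper does \emph{not} invoke LaSalle, linearization, or the stable-manifold theorem. Instead it pushes for a direct quantitative bound: using Lemma~\ref{lemma_exchangeP1g2} it rewrites each summand as $g_k^\T P_{g_k^*}g_k=\|e_k\|^{-2}e_k^\T P_{g_k^*}e_k$, applies the uniform edge bound $\|e_k\|\le 2\sqrt{n-1}\,s$ from Corollary~\ref{corollary_maximumpipbarDistance}(b), and arrives at $\dot V\le-\alpha\,\delta^\T\tilde R^\T(p^*)\tilde R(p^*)\delta$. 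Decomposing $\delta=\delta_\perp+\delta_\parallel$ with respect to $\Null(\tilde R(p^*))=\myspan\{\one\otimes I_d,r^*\}$ and reading off from the sphere geometry that $\|\delta_\perp\|=\|\delta\|\sin\theta$ (where $\theta$ is the angle between $\delta$ and $-r^*$), the paper obtains $\dot V\le -2\alpha\lambda_{d+2}\sin^2\theta_0\,V$ for all $t\ge0$, since $\|\delta(t)\|\le\|\delta(0)\|$ forces $\theta(t)\ge\theta_0$. The excluded point $\delta(0)=-2r^*$ then appears for free as the unique configuration with $\theta_0=0$. What the paper's approach buys is an explicit global exponential rate valid from $t=0$, with no appeal to Proposition~\ref{proposition_global_Jacobian} or to stable-manifold machinery; what your approach buys is modularity and the reuse of the Jacobian analysis already done, at the cost of obtaining only local exponential decay patched together with global attraction, and of needing the extra (correct) verification that the kernel of $A|_{\delta=-2r^*}$ meets the tangent space of $\S\cap(\one\otimes I_d)^\perp$ trivially.
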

\begin{proof}
Choose the Lyapunov function as
\begin{align*}
    V=\frac{1}{2}\|\delta\|^2.
\end{align*}
The derivative of $V$ is
$\dot{V}=\delta^\T \dot{\delta}=(p-p^*)^\T \dot{p}=-(p^*)^\T \dot{p}$.
Substituting control law \eqref{eq_controlLawMatrix} into $\dot{V}$ yields
\begin{align}\label{eq_VDot_1}
    \dot{V}
    &=-(p^*)^\T \bar{H}^\T \mydiag(P_{g_k})g^*
    =-(e^*)^\T \mydiag(P_{g_k})g^*\nonumber\\
    &=-\sum_{k=1}^m (e_k^*)^\T  P_{g_k} g_k^*
    =-\sum_{k=1}^m \|e_k^*\|(g_k^*)^\T  P_{g_k} g_k^*\le0.
\end{align}
Since $\dot{V}\le0$, we have $\|\delta(t)\|\le \|\delta(0)\|$ for all $t\ge0$.
Furthermore, it follows from Lemma~\ref{lemma_exchangeP1g2} that
\begin{align*}
    (g_k^*)^\T  P_{g_k} g_k^*
    &= g_k^\T  P_{g_k^*} g_k,
\end{align*}
substituting which into \eqref{eq_VDot_1} gives
\begin{align}\label{eq_VDot_2}
    \dot{V}
    &=-\sum_{k=1}^m \|e_k^*\|g_k^\T  P_{g_k^*} g_k =-\sum_{k=1}^m \frac{\|e_k^*\|}{\|e_k\|^2}e_k^\T  P_{g_k^*} e_k \nonumber\\
    &\le-\underbrace{\frac{\min_{k=1,\dots,m}\|e_k^*\|}{4(n-1)s^2}}_{\alpha}\sum_{k=1}^m e_k^\T  P_{g_k^*} e_k,
\end{align}
where the inequality is due to the fact that $\|e_k\|\le2\sqrt{n-1}s$ as given in Corollary~\ref{corollary_maximumpipbarDistance}(b).
Inequality \eqref{eq_VDot_2} can be further written as
\begin{align}\label{eq_VDot_2-2}
    \dot{V}
    &\le-\alpha e^\T \mydiag(P_{g_k^*})e =-\alpha p^\T \bar{H}^\T \mydiag(P_{g_k^*})\bar{H}p \nonumber\\
    &=-\alpha \delta^\T \bar{H}^\T \mydiag(P_{g_k^*})\bar{H}\delta \quad\left(\text{due to $\mydiag(P_{g_k^*})\bar{H}p^*=0$}\right)\nonumber\\
    &=-\alpha \delta^\T \underbrace{\bar{H}^\T \mydiag(P_{g_k^*})}_{\tilde{R}^\T (p^*)}\underbrace{\mydiag(P_{g_k^*})\bar{H}}_{\tilde{R}(p^*)}\delta.
\end{align}
Observe $\tilde{R}(p^*)$ has the same rank and null space as the bearing rigidity matrix $R(p^*)$.
Under the assumption of infinitesimal bearing rigidity, it follows from Theorem~\ref{theorem_conditionInfiParaRigid} that $\Null(\tilde{R}(p^*))=\myspan\{\one\otimes I_d, p^*\}$ and $\rank(\tilde{R}(p^*))=dn-d-1$.
As a result, the smallest $d+1$ eigenvalues of $\tilde{R}^\T (p^*)\tilde{R}(p^*)$ are zero.
Let the minimum positive eigenvalue of $\tilde{R}^\T (p^*)\tilde{R}(p^*)$ be $\lambda_{d+2}$.
Decompose $\delta$ to $\delta=\delta_\perp+\delta_\parallel$, where $\delta_\perp\perp\Null(\tilde{R}(p^*))$ and $\delta_\parallel\in\Null(\tilde{R}(p^*))$.
Then \eqref{eq_VDot_2-2} implies
\begin{align}\label{eq_VDot_3}
    \dot{V}\le -\alpha\lambda_{d+2}\|\delta_\perp\|^2.
\end{align}
Note $\delta_\parallel$ is the orthogonal projection of $\delta$ on $\Null(\tilde{R}(p^*))=\myspan\{\one\otimes I_d, r^*\}$.
Since $\delta\perp \myspan\{\one\otimes I_d\}$, we further know that $\delta_\parallel$ is the orthogonal projection of $\delta$ on $r^*$ (see Figure~\ref{fig_deltaGeometry}).
Let $\theta$ be the angle between $\delta$ and $-r^*$.
Thus, $\|\delta_\perp\|=\|\delta\|\sin\theta$, and \eqref{eq_VDot_3} becomes
\begin{align}\label{eq_VDot_4}
    \dot{V}\le -\alpha\lambda_{d+2}\sin^2\theta\|\delta\|^2.
\end{align}
It can be seen from Figure~\ref{fig_deltaGeometry} that $\theta\in[0,\pi/2)$.
Let $\theta_0$ be the value of $\theta$ at time $t=0$.
Since $\|\delta(t)\|\le \|\delta(0)\|$ for all $t$, it is clear from Figure~\ref{fig_deltaGeometry} that $\theta(t)\ge\theta_0$.
Then, \eqref{eq_VDot_4} becomes
\begin{align*}
    \dot{V}\le -\underbrace{2\alpha\lambda_{d+2}\sin^2\theta_0}_K V.
\end{align*}
(i)~If $\theta_0>0$, then $K>0$.
As a result, the error $\|\delta(t)\|$ decreases to zero exponentially fast.
(ii)~If $\theta_0=0$, it can be seen from Figure~\ref{fig_deltaGeometry} that $\delta(0)=-2r^*$ which is the undesired equilibrium.
In summary, the system trajectory $\delta(t)$ converges to $\delta=0$ exponentially fast from any initial points except $\delta=-2r^*$.
\end{proof}

In terms of bearings, Theorem~\ref{theorem_global_almostGloablStability} indicates that $g_{ij}(t)$ converges to $g_{ij}^*$ for all $(i,j)\in\E$ from any initial conditions except $g_{ij}(0)=-g_{ij}^*, \forall (i,j)\in\E$.
In addition, the eigenvalue $\lambda_{d+2}$ of $\tilde{R}^\T (p^*)\tilde{R}(p^*)$ affects the convergence rate of the system.
Since $\lambda_{d+2}>0$ if and only if $\G(p^*)$ is infinitesimally bearing rigid, the eigenvalue $\lambda_{d+2}$ can be viewed as a measure of the ``degree of infinitesimal bearing rigidity''.

\subsection{Collision Avoidance}

It is worth noting that there is an implicit assumption in the stability analysis in Theorem~\ref{theorem_global_almostGloablStability} that no two neighbors collide with each other during the formation evolution.
If two neighbors collide, the bearing between them will be mathematically invalid.
As a result, without this assumption, the stability result in Theorem~\ref{theorem_global_almostGloablStability} is merely valid until collision happens.
In fact, control law \eqref{eq_controlLawMatrix} is not able to globally guarantee collision avoidance (see, for example, Figure~\ref{fig_sim_collisionCase}).
In practice, the proposed control law may be implemented together with some other mechanisms like artificial potentials to guarantee collision avoidance.
In this paper, we provide a sufficient condition that ensures all agents maintain a minimum separation distance.

\begin{figure}
  \centering
  \subfloat[Initial formation]{\includegraphics[width=0.33\linewidth]{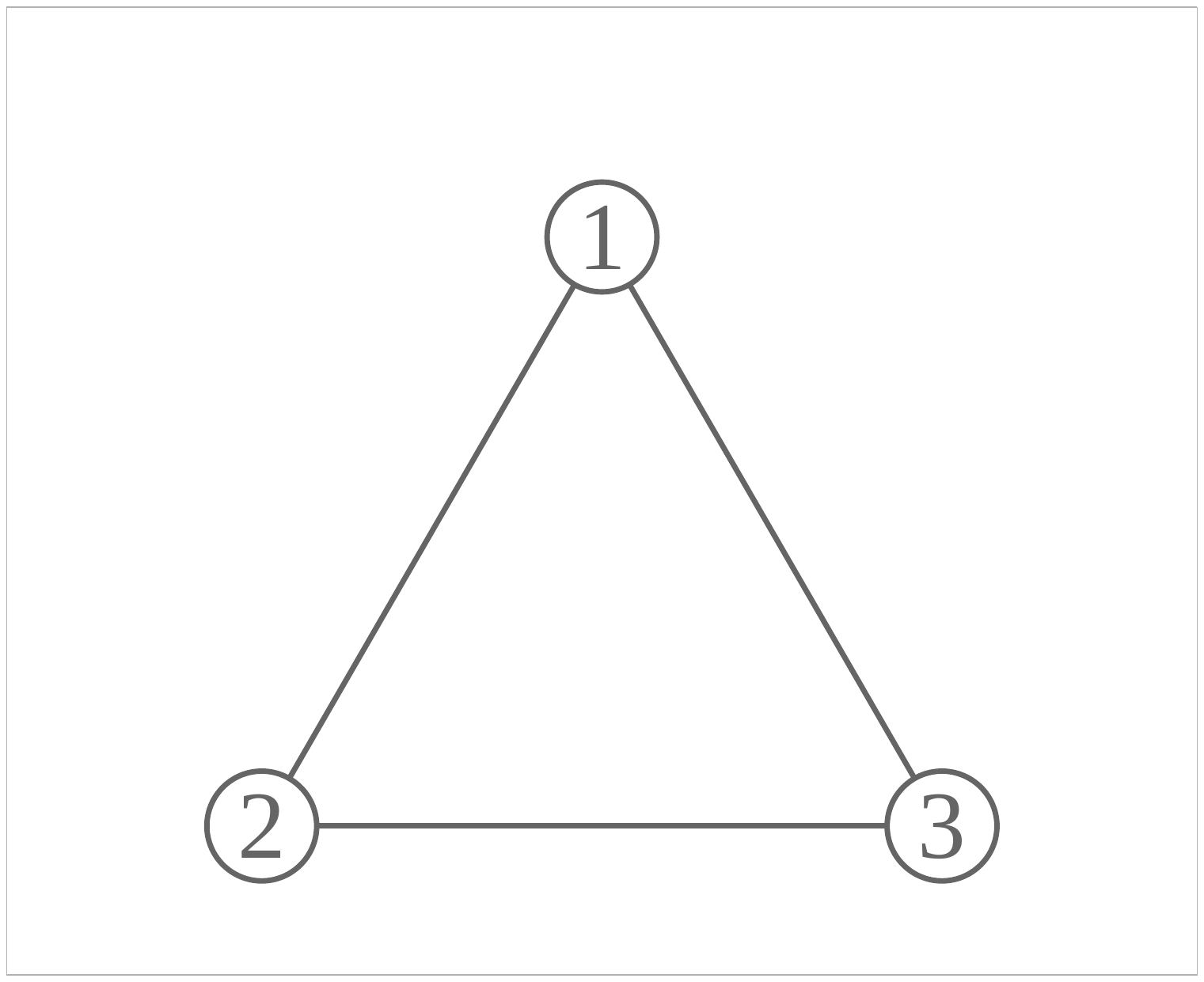}}
  \subfloat[Target formation]{\includegraphics[width=0.33\linewidth]{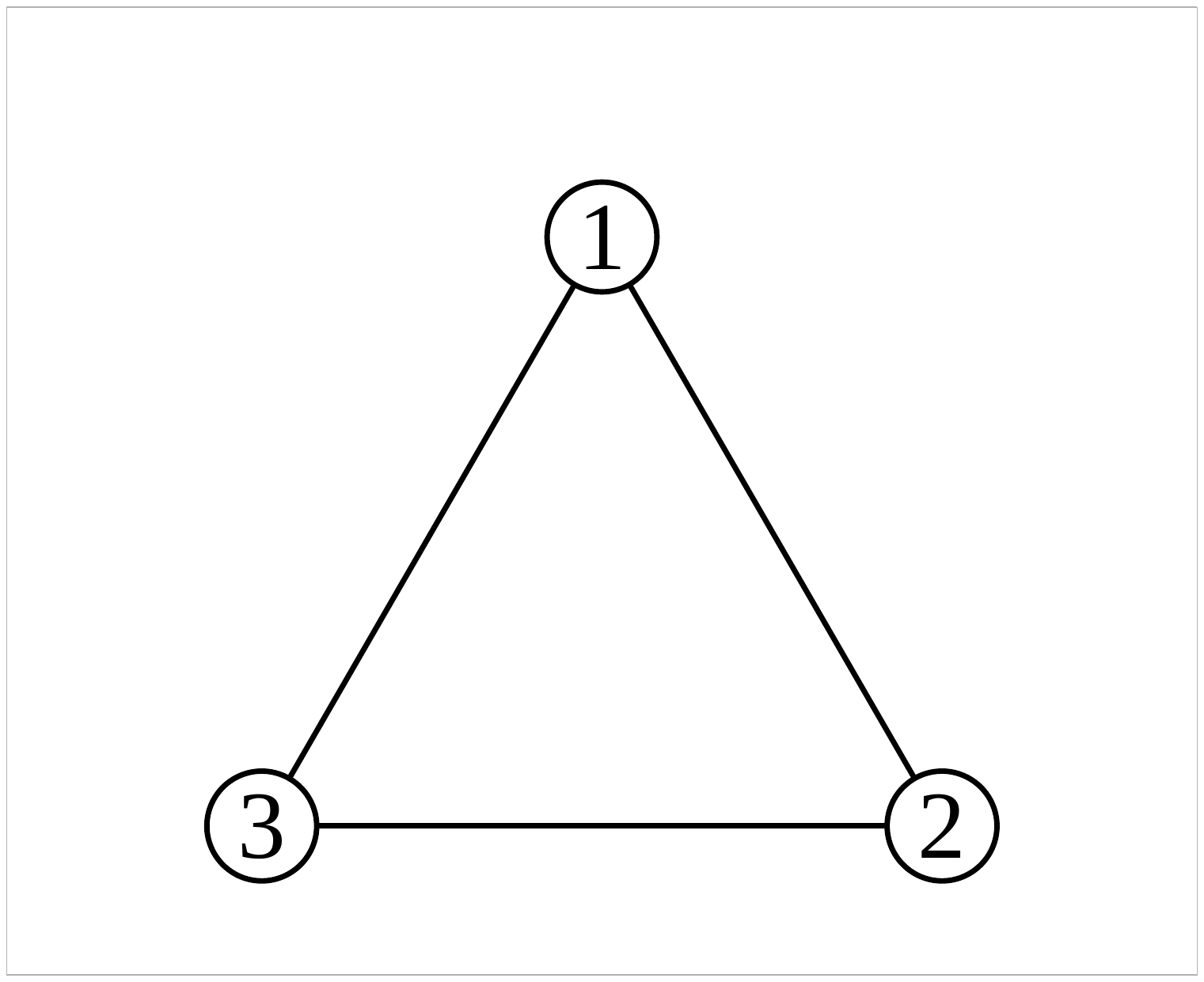}}
  \subfloat[Collision]{\includegraphics[width=0.33\linewidth]{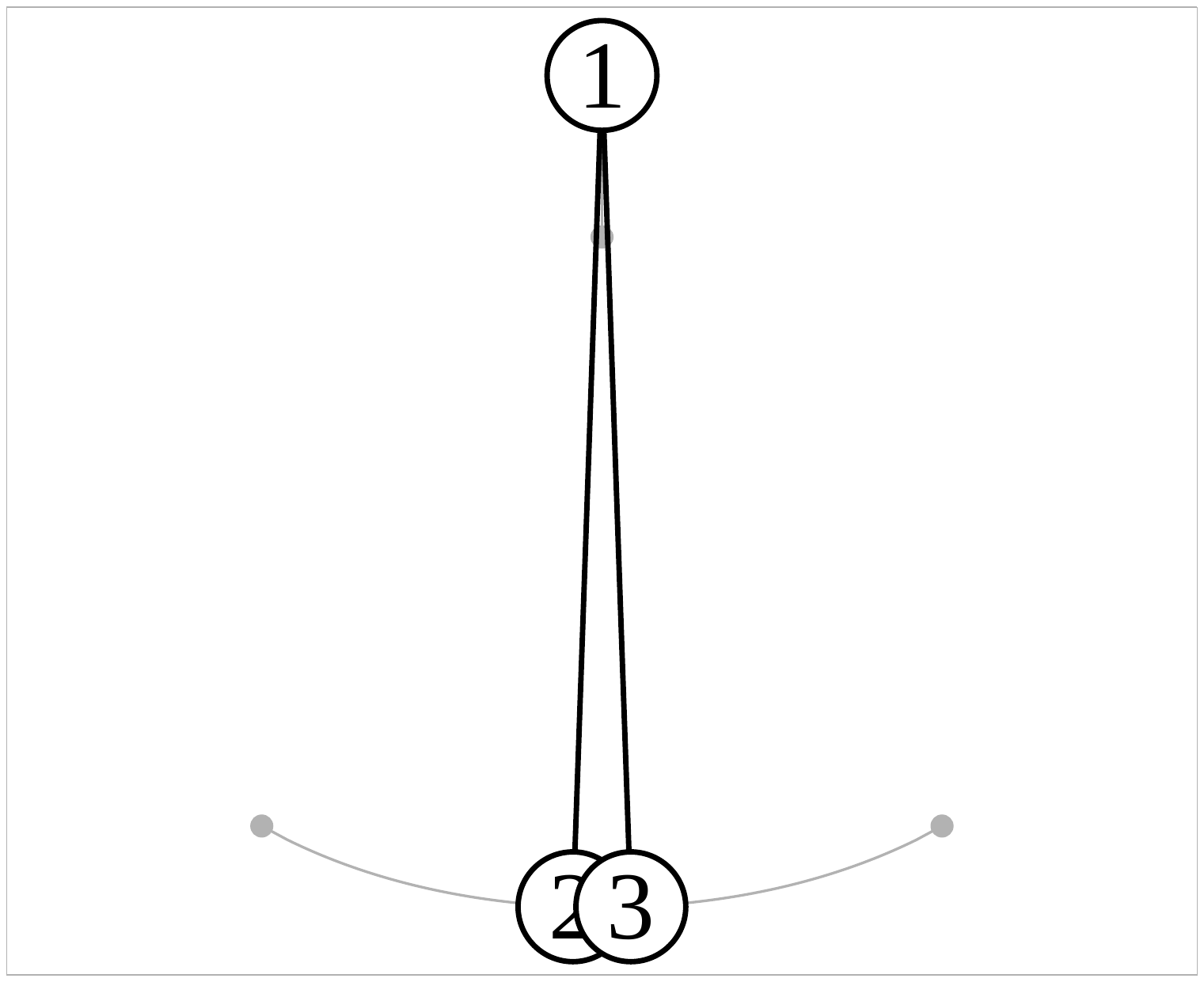}}
  \caption{Control law \eqref{eq_controlLawMatrix} is not able to globally guarantee collision avoidance.
  }
  \vspace{-10pt}
  \label{fig_sim_collisionCase}
\end{figure}

\begin{theorem}\label{theorem_collisionAvoidanceCondForDelta}
    Under Assumption~\ref{assumption_parallelRigid}, given a minimum distance $\gamma$ satisfying $0\le \gamma<\min_{i,j\in\V}\|p_i^*-p_j^*\|$, it can be guaranteed that $ \|p_i(t)-p_j(t)\|>\gamma, \,\forall i,j\in\V, \forall t\ge0$
    if $\delta(0)$ satisfies
    \begin{align}\label{eq_collisionAvoidanceCondForDelta}
        \|\delta(0)\| < \frac{1}{\sqrt{n}}\left(\min_{i,j\in\V}\|p_i^*-p_j^*\|-\gamma\right).
    \end{align}
\end{theorem}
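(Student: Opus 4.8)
The plan is to combine the monotonicity of the Lyapunov function $V=\frac12\|\delta\|^2$ established in the proof of Theorem~\ref{theorem_global_almostGloablStability} with two elementary inequalities. First I would note that $\G(p^*)$ exists and is unique under Assumption~\ref{assumption_parallelRigid} (Lemma~\ref{lemma_targetFormationExistUnique}), so $\delta=p-p^*$ is well defined, and that the bearings $g_{ij}$, hence the right-hand side of \eqref{eq_Global_deltaDynamics}, are defined exactly as long as no two neighboring agents coincide. Let $[0,T^*)$ be the maximal interval on which the solution exists with this property. On $[0,T^*)$ the computation \eqref{eq_VDot_1} gives $\dot V\le0$, so $\|\delta(t)\|\le\|\delta(0)\|$ for all $t\in[0,T^*)$.

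Next I would estimate each inter-agent distance. For $i,j\in\V$ with $i\ne j$, writing $p_i(t)-p_j(t)=(p_i^*-p_j^*)+(\delta_i(t)-\delta_j(t))$ and applying the reverse triangle inequality gives $\|p_i(t)-p_j(t)\|\ge\|p_i^*-p_j^*\|-\|\delta_i(t)-\delta_j(t)\|$. The perturbation term is bounded by
\begin{align*}
\|\delta_i(t)-\delta_j(t)\|\le\|\delta_i(t)\|+\|\delta_j(t)\|\le\sum_{k=1}^n\|\delta_k(t)\|\le\sqrt{n}\,\|\delta(t)\|\le\sqrt{n}\,\|\delta(0)\|,
\end{align*}
where the third inequality is Cauchy--Schwarz applied to $\sum_k 1\cdot\|\delta_k(t)\|$ together with $\|\delta(t)\|^2=\sum_k\|\delta_k(t)\|^2$, and the last inequality uses $\dot V\le0$. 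Substituting hypothesis \eqref{eq_collisionAvoidanceCondForDelta} yields $\|p_i(t)-p_j(t)\|\ge\min_{i,j\in\V}\|p_i^*-p_j^*\|-\sqrt{n}\,\|\delta(0)\|>\gamma$ for all $t\in[0,T^*)$ and all $i\ne j$.

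Finally I would close the argument with a standard continuation step: the estimate just obtained shows that on $[0,T^*)$ every pair of neighbors stays separated by at least $\min_{i,j\in\V}\|p_i^*-p_j^*\|-\sqrt{n}\,\|\delta(0)\|$, which is strictly positive by \eqref{eq_collisionAvoidanceCondForDelta}, while the state $p=p^*+\delta$ is bounded; hence the solution stays in the domain where \eqref{eq_Global_deltaDynamics} is smooth and can be continued beyond $T^*$, contradicting maximality unless $T^*=\infty$. Therefore $T^*=\infty$ and $\|p_i(t)-p_j(t)\|>\gamma$ for all $i,j\in\V$ with $i\ne j$ and all $t\ge0$. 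I do not expect a serious obstacle: the core is just the triangle inequality plus Cauchy--Schwarz, and the only point needing care is this mild bootstrap, since $\dot V\le0$ was itself derived under the implicit assumption that the bearings are well defined, so the bound on $\|\delta\|$ and the no-collision conclusion must be propagated together along the maximal interval of existence.
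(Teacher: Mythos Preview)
Your argument is correct and follows essentially the same route as the paper: decompose $p_i(t)-p_j(t)$ into the target separation plus the $\delta$-perturbation, bound the latter by $\sqrt{n}\,\|\delta(0)\|$ via the triangle and Cauchy--Schwarz inequalities, and invoke $\|\delta(t)\|\le\|\delta(0)\|$ from~\eqref{eq_VDot_1}. The only difference is that you add the continuation/bootstrap step to justify that the no-collision bound and the monotonicity of $V$ can be propagated together to $T^*=\infty$, a point the paper leaves implicit.
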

\begin{proof}
For any $i,j\in\V$ and $t\ge0$, since
\begin{align*}
p_i(t)-p_j(t)\equiv[p_i(t)-p_i^*]-[p_j(t)-p_j^*]+[p_i^*-p_j^*],
\end{align*}
we have
\begin{align*}
    \|p_i(t)-p_j(t)\|
    &\ge \|p_i^*-p_j^*\|-\|p_i(t)-p_i^*\|-\|p_j(t)-p_j^*\| \nonumber\\
    &\hspace{-65pt}\ge \|p_i^*-p_j^*\|\hspace{-3pt}-\hspace{-3pt}\sum_{\ell=1}^n \|p_\ell(t)-p_\ell^*\| \hspace{-2pt}\ge \hspace{-2pt}\|p_i^*-p_j^*\|\hspace{-3pt}-\hspace{-3pt}\sqrt{n}\|p(t)-p^*\|.\nonumber
\end{align*}
Substituting $\delta(t)=p(t)-p^*$ and $\|\delta(t)\|\le\|\delta(0)\|$ into the above inequality gives
\begin{align*}
    \|p_i(t)-p_j(t)\|
    \ge\|p_i^*-p_j^*\|-\sqrt{n}\|\delta(0)\|.
\end{align*}
As a result, if \eqref{eq_collisionAvoidanceCondForDelta} holds, we have the desired result.
\end{proof}

The upper bound for $\|\delta(0)\|$ given in Theorem~\ref{theorem_collisionAvoidanceCondForDelta} is inversely proportional to $\sqrt{n}$.
This is intuitively reasonable since the chance for two agents colliding is high when the number of the agents is large and consequently, the initial error must be small to avoid collision.
In addition, the condition given in Theorem~\ref{theorem_collisionAvoidanceCondForDelta} is conservative.
Extensive simulations have shown that the proposed controller can avoid collisions even if the above condition is not satisfied.

\section{Bearing-Only Formation Control without a Global Reference Frame}\label{section_BOF_noGlobal}

In this section, we study the case where the global reference frame is unknown to the agents and each agent can only measure the bearings and relative orientations of their neighbors in their local reference frames.

Consider $n\ge2$ agents in $\R^3$.
Denote $p_i\in\R^3$, $v_i\in\R^3$, and $w_i\in\R^3$ as the position, linear velocity, and angular velocity of agent $i\in\V$ expressed in a global reference frame which is unknown to each agent.
There is a local reference frame fixed on the body of each agent.
We use the superscript $b$ to indicate a vector expressed in the local body frame.
A vector quantity without the superscript is expressed in the global frame.
In particular, $v_i^b$ and $w_i^b$ represent the linear velocity and angular velocity of agent $i$ expressed in its own body frame.
Let $Q_i\in SO(3)$ be the rotation form the body frame of agent $i$ to the global frame.
Then, $v_i=Q_i v_i^b$ and $w_i=Q_i w_i^b$.
The position and orientation dynamics of agent $i$ is
\begin{align}\label{eq_noGlobal_agentDynamics}
    \dot{p}_i &=Q_iv_i^b, \nonumber\\
    \dot{Q}_i &= Q_i\sk{w_i^b},
\end{align}
where $\sk{\,\cdot\,}$ is the skew-symmetric matrix operator defined in \eqref{eq_skewSymmetricOperator}, and $v_i^b$ and $w_i^b$ are the inputs to be designed.

Denote, as before, $e_{ij}\triangleq p_j-p_i$ and $g_{ij}\triangleq {e_{ij}}/{\|e_{ij}\|}$ for $(i,j)\in\E$.
Agent $i$ can measure the bearings of its neighbors in its local frame, $\{g_{ij}^b\}_{j\in\N_i}$, where $g_{ij}^b=Q_i^\T g_{ij}$.
Moreover, assume agent $i$ can also measure the \emph{relative orientation} of its neighbors, $\{Q_i^\T Q_j\}_{j\in\N_i}$.
The bearing-only formation control problem to be solved in this section is stated as below.

\begin{problem}\label{problem_noGlobal_bearingonlyformationcontrol}
    Given feasible constant bearing constraints $\{g_{ij}^*\}_{(i,j)\in\E}$ and an initial formation $\G(p(0))$ with agent orientations as $\{Q_i(0)\}_{i\in\V}$, design $v_i^b(t)$ and $w_i^b(t)$ for agent $i\in\V$ based only on the local bearing measurements $\{g_{ij}^b(t)\}_{j\in\N_i}$ and relative orientation measurements $\{Q_i^\T(t) Q_j(t)\}_{j\in\N_i}$ such that $\{Q_i(t)\}_{i\in\V}$ converge to a common value and $g_{ij}^b(t)\rightarrow g_{ij}^*$ as $t\rightarrow\infty$ for all $(i,j)\in\E$.
\end{problem}

It is notable that there is an orientation synchronization problem embedded in Problem~\ref{problem_noGlobal_bearingonlyformationcontrol}.
This scheme is inspired by the works on formation control based on orientation alignment \cite{Oh2014ICCAS,KKOh2014TAC}.
Once the orientations of the agents have synchronized, the synchronized local frames can be viewed as a common frame where the bearing constraints should be satisfied.
It is worth mentioning that the value of the finally synchronized orientation is not of our interest, and we only care about the shape of the formation.
If required in practice, one may introduce a leader to control the synchronized orientation.

\vspace{-10pt}
\subsection{A Bearing-Only Control Law}

The proposed position and orientation control laws are
\begin{subequations}\label{eq_controlLaw_noGlobalReferece}
\begin{align}
    \label{eq_controlLaw_noGlobalReferece_position}
    v_i^b
    &= -\sum_{j\in\mathcal{N}_i} P_{g^b_{ij}}(I_3+Q_i^\T Q_j) g_{ij}^*, \\
    \label{eq_controlLaw_noGlobalReferece_orientation}
    \sk{w_i^b}
    &= -\sum_{j\in\mathcal{N}_i} \left(Q_j^\T Q_i-Q_i^\T Q_j\right).
\end{align}
\end{subequations}
The proposed control law is distributed and can be implemented without the knowledge of the global frame.
It only requires local bearing measurements $\{g_{ij}^b\}_{j\in\N_i}$ and relative orientation measurements $\{Q_i^\T Q_j\}_{j\in\N_i}$.
Control law \eqref{eq_controlLaw_noGlobalReferece_orientation} actually is the orientation synchronization control proposed in \cite{Igarashi2009TCST}.
Substituting control law \eqref{eq_controlLaw_noGlobalReferece} into \eqref{eq_noGlobal_agentDynamics} gives the closed-loop system dynamics with all vector quantities expressed in the global frame as
\begin{subequations}\label{eq_noGlobal_closedLoopDynamics}
\begin{align}
    \label{eq_noGlobal_closedLoopDynamics_position}
    \dot{p}_i
    &= -\sum_{j\in\mathcal{N}_i} P_{g_{ij}}(Q_i+Q_j) g_{ij}^*, \\
    \label{eq_noGlobal_closedLoopDynamics_orientation}
    \dot{Q}_i
    &= -\sum_{j\in\mathcal{N}_i} Q_i\left(Q_j^\T Q_i-Q_i^\T Q_j\right).
\end{align}
\end{subequations}
While deriving \eqref{eq_noGlobal_closedLoopDynamics_position}, we use the fact that $g_{ij}=Q_ig_{ij}^b$ and $Q_iP_{g_{ij}^b}Q_i^\T =P_{g_{ij}}$.

We next show that the centroid and the scale of the formation are invariant under control law \eqref{eq_controlLaw_noGlobalReferece}.

\begin{lemma}\label{lemma_noGlobal_xDotPerpendicular}
    Under control law \eqref{eq_controlLaw_noGlobalReferece}, $\dot{p}\perp \myspan\left\{\one\otimes I_3, p\right\}.$
\end{lemma}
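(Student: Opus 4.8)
The plan is to mimic the argument used for Lemma~\ref{lemma_xDotPerpendicular} in the global-frame case: show that the stacked velocity vector $\dot p = [\dot p_1^\T,\dots,\dot p_n^\T]^\T$ lies in the range of a transposed ``bearing-rigidity-like'' matrix whose null space contains $\myspan\{\one\otimes I_3, p\}$. First I would collect the closed-loop position dynamics \eqref{eq_noGlobal_closedLoopDynamics_position} into matrix-vector form. Writing $u_k \triangleq (Q_i+Q_j)g_{ij}^*\in\R^3$ for the $k$th oriented edge $(i,j)$, and stacking $u=[u_1^\T,\dots,u_m^\T]^\T$, the per-agent sum $-\sum_{j\in\N_i}P_{g_{ij}}u_k$ is exactly the $i$th block of $-\bar H^\T\mydiag(P_{g_k})u$, so $\dot p = -\bar H^\T\mydiag(P_{g_k})u = \tilde R^\T(p)\,(-u)$, where $\tilde R(p)=\mydiag(P_{g_k})\bar H$ is the same matrix appearing in \eqref{eq_controlLawMatrix}. (One must be a little careful with orientation conventions: with the oriented-edge labelling $e_k=p_j-p_i$ the term $P_{g_{ij}}u_k$ appears with a $-$ sign in $\dot p_i$ and a $+$ sign in $\dot p_j$, which is precisely the incidence-matrix pattern $\bar H^\T$; this bookkeeping is the one spot where a sign slip is easy.)

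Next I would invoke the orthogonality of range and null space: $\dot p\in\Range(\tilde R^\T(p))\perp\Null(\tilde R(p))$. By the same reasoning as in Lemma~\ref{lemma_xDotPerpendicular}, $\Null(\tilde R(p))=\Null(R(p))$ since $\tilde R(p)$ and $R(p)=\mydiag(P_{g_k}/\|e_k\|)\bar H$ differ only by the left multiplication by the invertible block-diagonal matrix $\mydiag(I_3/\|e_k\|)$. Finally, Lemma~\ref{lemma_nullSpaceRigidity} gives $\myspan\{\one\otimes I_3, p\}\subseteq\Null(R(p))=\Null(\tilde R(p))$, hence $\dot p\perp\myspan\{\one\otimes I_3,p\}$, which is the claim.

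The only substantive point beyond pure bookkeeping is the very first reduction, i.e.\ confirming that the closed-loop law \eqref{eq_noGlobal_closedLoopDynamics_position}, despite the extra rotation matrices $Q_i+Q_j$ sandwiched between $P_{g_{ij}}$ and $g_{ij}^*$, still has the form $\bar H^\T\mydiag(P_{g_k})(\,\cdot\,)$ for \emph{some} stacked vector. It does, because the rotations only reshape the ``effective bearing target'' $u_k$ edge-by-edge and do not touch the projector $P_{g_k}$ or the incidence structure; the vectors $u_k$ need not be unit vectors, but that is irrelevant — the argument above never used $\|g_k^*\|=1$, only that $\dot p$ factors through $\bar H^\T\mydiag(P_{g_k})$. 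I expect this observation to be the main (minor) obstacle; once it is in place the rest is an immediate transcription of the earlier lemma.
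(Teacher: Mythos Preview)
Your proposal is correct and follows essentially the same route as the paper: the paper defines $Q_{ij}\triangleq Q_i+Q_j$, writes the closed-loop position dynamics in the matrix form $\dot p=\bar H^\T\mydiag(P_{g_k})\mydiag(Q_k)g^*$, and then observes that $\one\otimes I_3$ and $p$ lie in the left null space of $\bar H^\T\mydiag(P_{g_k})$. Your sign in $\dot p=-\bar H^\T\mydiag(P_{g_k})u$ is off by a minus (the paper's convention absorbs it into the incidence structure), but as you anticipated this is irrelevant for the orthogonality conclusion.
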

\begin{proof}
    Let $Q_{ij}\triangleq Q_i+Q_j$.
    Then, $\dot{p}_i=-\sum_{j\in\mathcal{N}_i} P_{g_{ij}}Q_{ij}g_{ij}^*$.
    Consider an arbitrary oriented graph, the position dynamics \eqref{eq_noGlobal_closedLoopDynamics_position} can be written in a matrix form as $\dot{p} = \bar{H}^\T\dia{P_{g_k}}\dia{Q_{k}}g^*$.
    Because $\one\otimes I_3$ and $p$ are all in the left null space of $\bar{H}^\T\dia{P_{g_k}}$, we obtain the result.
\end{proof}

\begin{theorem}[Centroid and Scale Invariance]\label{theorem_noGlobal_invariantCentroidScale}
    The centroid $\bar{p}$ and the scale $s$ are invariant under control law \eqref{eq_controlLaw_noGlobalReferece}.
\end{theorem}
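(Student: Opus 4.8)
The plan is to mimic the proof of Theorem~\ref{propos_invariantCentroidScale} verbatim, since Lemma~\ref{lemma_noGlobal_xDotPerpendicular} already supplies the only ingredient that was used there, namely $\dot{p}\perp\myspan\{\one\otimes I_3,p\}$. First I would write $\bar{p}=(\one\otimes I_3)^\T p/n$, differentiate to get $\dot{\bar p}=(\one\otimes I_3)^\T\dot p/n$, and invoke $\dot p\perp\Range(\one\otimes I_3)$ from Lemma~\ref{lemma_noGlobal_xDotPerpendicular} to conclude $\dot{\bar p}\equiv 0$, so the centroid is invariant.

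Next I would handle the scale. Rewrite $s=\|p-\one\otimes\bar p\|/\sqrt n$ and differentiate, obtaining
\begin{align*}
    \dot s=\frac{1}{\sqrt n}\,\frac{(p-\one\otimes\bar p)^\T}{\|p-\one\otimes\bar p\|}\,(\dot p-\one\otimes\dot{\bar p})
          =\frac{1}{\sqrt n}\,\frac{(p-\one\otimes\bar p)^\T}{\|p-\one\otimes\bar p\|}\,\dot p,
\end{align*}
where the second equality uses $\dot{\bar p}=0$ from the previous step. Since $\dot p\perp p$ and $\dot p\perp\one\otimes\bar p$ (both from Lemma~\ref{lemma_noGlobal_xDotPerpendicular}, as $\one\otimes\bar p\in\Range(\one\otimes I_3)$), the numerator vanishes and $\dot s\equiv 0$. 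One small point worth a sentence: if $p=\one\otimes\bar p$ (all agents coincident) then $s=0$ trivially and the bearings are undefined anyway, so we may assume $\|p-\one\otimes\bar p\|\neq 0$, exactly as in the global case.

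I do not anticipate any real obstacle here; the entire content has been pushed into Lemma~\ref{lemma_noGlobal_xDotPerpendicular}, whose proof in turn rests on the structural fact that $\bar H^\T\dia{P_{g_k}}$ annihilates $\one\otimes I_3$ and $p$ on the left (the former because $H^\T\one$-type columns sit in $\Null(\bar H)$, the latter because $P_{g_k}e_k=0$). The only thing to be careful about is bookkeeping: making sure the dimension is $d=3$ throughout (the section fixes $\R^3$) and that the extra factors $\dia{Q_k}g^*$ in the dynamics are harmless because they are applied \emph{after} $\bar H^\T\dia{P_{g_k}}$, so the left-annihilation argument is unaffected. Hence the proof is essentially a two-line repetition of Theorem~\ref{propos_invariantCentroidScale} with Lemma~\ref{lemma_noGlobal_xDotPerpendicular} in place of Lemma~\ref{lemma_xDotPerpendicular}.
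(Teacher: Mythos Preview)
Your proposal is correct and matches the paper's own approach: the paper's proof is literally one line stating that, with the perpendicularity lemma (Lemma~\ref{lemma_noGlobal_xDotPerpendicular}) in hand, the argument is identical to that of Theorem~\ref{propos_invariantCentroidScale}. Your write-up simply unpacks that reference, and the extra remarks about the degenerate case and the placement of $\dia{Q_k}g^*$ are fine but not needed.
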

\begin{proof}
     With Lemma~\ref{lemma_xDotPerpendicular}, the proof is similar to Theorem~\ref{propos_invariantCentroidScale}.
\end{proof}
\begin{remark}
    It can be easily verified that Lemma~\ref{lemma_noGlobal_xDotPerpendicular} and Theorem~\ref{theorem_noGlobal_invariantCentroidScale} hold for any position control law that has the form of $\dot{p}_i=-\sum_{i=1}^n P_{g_{ij}}y_{ij}$ where $y_{ij}\in\R^3$ and $y_{ij}=-y_{ji}$.
\end{remark}

The following results, which can be obtained from Theorem~\ref{theorem_noGlobal_invariantCentroidScale}, give bounds for $\max_{i\in\V}\|p_i(t)-\bar{p}\|$ and $\|p_i(t)-p_j(t)\|, \forall i,j\in\V$.

\begin{corollary}
The formation trajectory under the control law \eqref{eq_controlLaw_noGlobalReferece} satisfies the following inequalities,
\begin{enumerate}[(a)]
	\item $s\le \max_{i\in\V}\|p_i(t)-\bar{p}\| \le s\sqrt{n-1}, \quad \forall t\ge0.$
	\item $ \|p_i(t)-p_j(t)\|\le 2s\sqrt{n-1},\quad\forall i,j\in\V,\ \forall t\ge0.$
\end{enumerate}
\end{corollary}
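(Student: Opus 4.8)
The plan is to observe that the two inequalities are exactly the content of Corollary~\ref{corollary_maximumpipbarDistance}, and that the proof given there used \emph{only} two facts about the closed-loop trajectory: that the centroid $\bar p$ is stationary and that the scale $s$ is constant. Both of these now hold for the dynamics \eqref{eq_noGlobal_closedLoopDynamics} by Theorem~\ref{theorem_noGlobal_invariantCentroidScale}, so the argument transfers verbatim. The only thing I would actually verify is that nothing in the original computation secretly invoked the specific form of control law \eqref{eq_controlLawMatrix}; a glance confirms it does not.

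Concretely, I would proceed in the same three steps. First, for the upper bound in (a): from $\sum_{j\in\V}(p_j-\bar p)=0$ we get $p_i-\bar p=-\sum_{j\ne i}(p_j-\bar p)$, so by the triangle inequality and Cauchy--Schwarz, $\|p_i-\bar p\|^2\le\big(\sum_{j\ne i}\|p_j-\bar p\|\big)^2\le(n-1)\sum_{j\ne i}\|p_j-\bar p\|^2$. Combining this with the scale-invariance identity $\|p_i-\bar p\|^2+\sum_{j\ne i}\|p_j-\bar p\|^2=ns^2$ yields $\|p_i-\bar p\|^2\le(n-1)(ns^2-\|p_i-\bar p\|^2)$, hence $\|p_i-\bar p\|\le s\sqrt{n-1}$ for every $i\in\V$.

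Second, for the lower bound in (a): since $\max_{i\in\V}\|p_i-\bar p\|^2$ is at least the average of the $\|p_i-\bar p\|^2$, we have $n\,(\max_{i\in\V}\|p_i-\bar p\|^2)\ge\sum_{i=1}^n\|p_i-\bar p\|^2=ns^2$, so $\max_{i\in\V}\|p_i-\bar p\|\ge s$. Third, for (b), the triangle inequality gives $\|p_i(t)-p_j(t)\|\le\|p_i(t)-\bar p\|+\|p_j(t)-\bar p\|\le 2s\sqrt{n-1}$, using the bound from the first step. Since $\bar p$ and $s$ are time-invariant under \eqref{eq_controlLaw_noGlobalReferece}, all of these hold for all $t\ge0$.

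There is no real obstacle here: the statement is a direct corollary of centroid/scale invariance, and the main point is simply to cite Theorem~\ref{theorem_noGlobal_invariantCentroidScale} in place of Theorem~\ref{propos_invariantCentroidScale} and otherwise reuse the proof of Corollary~\ref{corollary_maximumpipbarDistance}. Accordingly I would state the proof in one line: ``With Theorem~\ref{theorem_noGlobal_invariantCentroidScale}, the proof is identical to that of Corollary~\ref{corollary_maximumpipbarDistance}.''
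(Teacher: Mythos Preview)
Your proposal is correct and matches the paper's approach exactly: the paper's proof is the single line ``The proof is similar to Corollary~\ref{corollary_maximumpipbarDistance},'' and you correctly identified that the only ingredients needed are centroid and scale invariance, now supplied by Theorem~\ref{theorem_noGlobal_invariantCentroidScale}. Your final one-line summary is essentially what the paper writes.
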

\begin{proof}
The proof is similar to Corollary~\ref{corollary_maximumpipbarDistance}.
\end{proof}

\vspace{-10pt}
\subsection{Formation Stability Analysis}

The closed-loop system \eqref{eq_noGlobal_closedLoopDynamics} is a cascade system: the dynamics of the orientation is independent to the dynamics of the position, whereas the converse is not true.
Similar cascade systems can also be found in recent studies on formation control in $SE(2)$ or $SE(3)$ \cite{Oh2012CDC,Oh2014ICCAS,KKOh2014TAC,Montijano2014ACC} and input-to-state stability (ISS) can be used to prove the stability of the cascade systems.
In order to analyze the stability of system \eqref{eq_noGlobal_closedLoopDynamics}, we first note that the orientation of the agents will synchronize by control law \eqref{eq_controlLaw_noGlobalReferece_orientation} under the following assumption \cite{Igarashi2009TCST}.

\begin{assumption}\label{assumption_initialAttitudePositiveDefinite}
    In the initial formation, there exists $Q_0\in SO(3)$ such that $Q_0^\T Q_i$ is (non-symmetric) positive definite for all $i\in\V$.
\end{assumption}

\begin{remark}
Based on axis-angle representation, a rotation matrix is positive definite if and only if the rotation angle is in $(-\pi/2,\pi/2)$.
\end{remark}

\begin{lemma}[{\cite[Thm~1]{Igarashi2009TCST}}]
    \label{lemma_attitudeSynchronization}
    Under Assumption~\ref{assumption_initialAttitudePositiveDefinite}, if the interconnection graph is fixed and strongly connected, the orientation control law \eqref{eq_controlLaw_noGlobalReferece_orientation} guarantees orientation synchronization in the sense that $\lim_{t\rightarrow \infty} Q_i^\T Q_j=I_3$ for all $i,j\in\V$.
\end{lemma}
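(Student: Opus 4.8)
The plan is to reconstruct the argument of \cite[Thm~1]{Igarashi2009TCST} by a Lyapunov/LaSalle analysis on the compact manifold $SO(3)^n$; note that, since $\G$ is undirected here, ``strongly connected'' simply means connected. First I would introduce the orientation disagreement function
\[
    V(Q) \triangleq \sum_{(i,j)\in\E}\tr\!\left(I_3 - Q_i^\T Q_j\right) = \tfrac{1}{2}\sum_{(i,j)\in\E}\|Q_i - Q_j\|^2 \ge 0,
\]
which is smooth, bounded on $SO(3)^n$, and, because $\G$ is connected, vanishes precisely when $Q_1=\dots=Q_n$. The target is to show that trajectories of \eqref{eq_noGlobal_closedLoopDynamics_orientation} starting in the region of Assumption~\ref{assumption_initialAttitudePositiveDefinite} drive $V$ to zero.

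Second, I would differentiate $V$ along \eqref{eq_controlLaw_noGlobalReferece_orientation}. Writing $\dot Q_i = Q_i\sk{w_i^b}$ with $\sk{w_i^b} = -\sum_{j\in\N_i}(Q_j^\T Q_i - Q_i^\T Q_j)$, and using that $Q_j^\T Q_i - Q_i^\T Q_j$ is skew-symmetric together with the trace identities $\tr(S\sk{x})=0$ for symmetric $S$ and $\tr(\sk{x}^\T\sk{y})=2x^\T y$, a short computation gives $\dot V = -c\sum_{i\in\V}\|w_i^b\|^2 \le 0$ for a fixed constant $c>0$ (the precise value of $c$ is immaterial). Hence $V$ is non-increasing, so every sublevel set of $V$ is positively invariant; in particular, since the set $\mathcal{D}$ of configurations admitting a common $Q_0$ with every $Q_0^\T Q_i$ positive definite (equivalently, every $Q_i$ within rotation angle less than $\pi/2$ of $Q_0$) sits inside a sublevel set of $V$, one shows $\mathcal{D}$ is positively invariant and the trajectory remains there. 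By LaSalle's invariance principle on the compact invariant set, $Q(t)$ converges to the largest invariant subset of $\{\dot V = 0\} = \{w_i^b = 0\ \forall i\in\V\}$, which is a set of equilibria of \eqref{eq_noGlobal_closedLoopDynamics_orientation}.

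The main obstacle --- and the heart of the argument --- is to prove that the only equilibria lying in $\mathcal{D}$ are the synchronized ones. At such an equilibrium, $\sum_{j\in\N_i}(Q_j^\T Q_i - Q_i^\T Q_j)=0$ for every $i\in\V$, i.e.\ $\sum_{j\in\N_i}Q_j^\T Q_i$ is symmetric. The plan is to pass to the $Q_0$-centered coordinates $\tilde Q_i = Q_0^\T Q_i$, all of which are positive definite, and to run an extremal / maximum-principle argument over the connected graph: positive-definiteness confines the $\tilde Q_i$ to a geodesically convex neighborhood of $I_3$ (angles below $\pi/2$), on which $\tr(I_3 - \tilde Q_i^\T \tilde Q_j)$ behaves like a strictly convex disagreement measure, so a vertex realizing the ``extreme'' orientation cannot satisfy the averaging/symmetry condition unless all its neighbors coincide with it; connectivity then propagates equality to all vertices. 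Once $Q_1=\dots=Q_n$ is established on the $\omega$-limit set, $Q_i^\T Q_j\to I_3$ for all $i,j\in\V$ follows immediately, completing the proof. I expect the ``no spurious equilibria in $\mathcal{D}$'' claim (together with the positive invariance of $\mathcal{D}$) to be the only genuinely delicate part; the Lyapunov computation itself is routine.
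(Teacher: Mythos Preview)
The paper does not prove this lemma at all: it is quoted verbatim from \cite[Thm~1]{Igarashi2009TCST} and used as a black box, so there is no ``paper's own proof'' to compare against. Your plan to reconstruct the Igarashi--Fujita argument via the disagreement Lyapunov function $V(Q)=\sum_{(i,j)\in\E}\tr(I_3-Q_i^\T Q_j)$ is the standard route, and your computation $\dot V=-c\sum_i\|w_i^b\|^2$ is correct (with $c=1$ for the natural normalization), so the LaSalle step is fine.

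There is, however, a genuine logical gap in your invariance argument. You write that $\mathcal{D}$ ``sits inside a sublevel set of $V$'' and then conclude $\mathcal{D}$ is positively invariant; this implication is backwards. What you need is a sublevel set $\{V\le c\}$ that contains the initial point \emph{and is itself contained in} $\mathcal{D}$, so that $V$ non-increasing traps the trajectory in $\mathcal{D}$. Equivalently, one must show that whenever all $Q_0^\T Q_i$ are positive definite, the trajectory cannot exit that region; this is exactly the content of the ``geodesic convexity'' remark you allude to, but it requires an argument (in Igarashi et al.\ it is handled by working with the axis--angle parameters and bounding the maximal rotation angle). Likewise, your extremal/maximum-principle sketch for excluding non-synchronized equilibria in $\mathcal{D}$ is the right intuition but is not yet a proof: the precise statement is that if every $Q_0^\T Q_i$ has rotation angle below $\pi/2$ and $\sum_{j\in\N_i}(Q_i^\T Q_j-Q_j^\T Q_i)=0$ for all $i$, then all $Q_i$ coincide, and making this rigorous is indeed the crux. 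Since the paper simply cites the result, filling in these two points would go beyond what is required here; it suffices to invoke \cite{Igarashi2009TCST}.
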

Although the value of the final converged orientation is not given in \cite{Igarashi2009TCST}, there exists a unique $Q^*\in SO(3)$ such that $Q_i$ ($i\in\V$) converges to $Q^*$ asymptotically.
The specific value of $Q^*$ is not of our interest and it is not required to prove the formation stability.
In fact, control law \eqref{eq_controlLaw_noGlobalReferece_orientation} can be replaced by any other orientation control law as long as it ensures the orientations can converge to a common constant value.
With the above preparation, we next define the target formation that the formation will converge to.

\begin{definition}[Target Formation]\label{definition_noGlobalcase_targetFormation}
    Let $\G(p^*)$ be the \emph{target formation} that satisfies
    \begin{enumerate}[(a)]
        \item Centroid: $\bar{p}^*=\bar{p}(0)$.
        \item Scale: $s^*=s(0)$.
        \item Bearing: $(p_j^*-p_i^*)/\|p_j^*-p_i^*\|=Q^*g_{ij}^*$ for all $(i,j)\in\E$.
    \end{enumerate}
\end{definition}

\begin{lemma}[Existence and Uniqueness]
The target formation $\G(p^*)$ in Definition~\ref{definition_noGlobalcase_targetFormation} always exists and is unique under Assumptions~\ref{assumption_parallelRigid} and \ref{assumption_initialAttitudePositiveDefinite}.
\end{lemma}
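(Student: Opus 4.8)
The plan is to mimic the proof of Lemma~\ref{lemma_targetFormationExistUnique} almost verbatim; the only genuinely new ingredient is the common limit orientation $Q^*$ appearing in condition (c) of Definition~\ref{definition_noGlobalcase_targetFormation}, so I first have to make sure that $Q^*$ is a well-defined object and that the ``rotated'' bearing constraints $\{Q^*g_{ij}^*\}_{(i,j)\in\E}$ inherit the two properties (feasibility and infinitesimal bearing rigidity) that drove the argument in the global-frame case. First I would note that Assumption~\ref{assumption_parallelRigid} forces the sensing graph $\G$ to be connected: if $\G$ had $c\ge 2$ components $C_1,\dots,C_c$, then the $cd$-dimensional subspace spanned by $\{\mathbf 1_{C_j}\otimes I_d\}_{j=1}^c$ (independent rigid translations of the components) already lies in $\Null(\bar H)\subseteq\Null(R(p))$, and $cd\ge 2d>d+1$ contradicts Theorem~\ref{theorem_conditionInfiParaRigid}(c). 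Since the interconnection is bidirectional, connectedness gives strong connectedness, so Lemma~\ref{lemma_attitudeSynchronization} applies under Assumption~\ref{assumption_initialAttitudePositiveDefinite} and, as recorded just after it, there is a \emph{unique} $Q^*\in SO(3)$ with $Q_i(t)\to Q^*$ for all $i\in\V$. Hence condition (c) is well posed.

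Next I would check feasibility and rigidity of $\{Q^*g_{ij}^*\}$. Feasibility is immediate: pick, by feasibility of $\{g_{ij}^*\}$, a framework $\G(p)$ realizing the bearings $g_{ij}^*$; then the rotated configuration $(I_n\otimes Q^*)p$ realizes the bearings $Q^*g_{ij}^*$. For rigidity, using $\|Q^*x\|=\|x\|$ and the identity $P_{Q^*x}=Q^*P_xQ^{*\T}$ for every nonzero $x\in\R^3$, the formula \eqref{eq_rigidityMatrixForm} yields $R\big((I_n\otimes Q^*)p\big)=(I_m\otimes Q^*)\,R(p)\,(I_n\otimes Q^{*\T})$. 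Since $I_m\otimes Q^*$ and $I_n\otimes Q^{*\T}$ are orthogonal, $\rank R\big((I_n\otimes Q^*)p\big)=\rank R(p)=3n-4$, so by Theorem~\ref{theorem_conditionInfiParaRigid} the rotated framework is again infinitesimally bearing rigid; equivalently, the constraints $\{Q^*g_{ij}^*\}$ ensure infinitesimal bearing rigidity.

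With these two facts in hand the conclusion is exactly the argument of Lemma~\ref{lemma_targetFormationExistUnique}: by feasibility there exist frameworks realizing the bearings $Q^*g_{ij}^*$, and by Theorem~\ref{theorem_IPRImplyGPR} any two of them coincide up to a translation and a nonzero scaling factor; because each such framework carries the prescribed bearings themselves (not their negatives), the scaling factor relating any two of them is positive. Imposing the centroid $\bar p^*=\bar p(0)$ then pins down the translation, and imposing the scale $s^*=s(0)>0$ pins down the (positive) scaling factor, so $\G(p^*)$ exists and is unique. Equivalently, the explicit construction in the remark following Lemma~\ref{lemma_targetFormationExistUnique}, applied with $g_k^*$ replaced by $Q^*g_k^*$ in $\tilde R$, produces $p^*=\one\otimes\bar p(0)+\alpha q$ with $\alpha=\pm s(0)\sqrt{n}/\|q\|$, the sign being fixed by matching the signs of $q_j-q_i$ against $Q^*g_{ij}^*$.

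The only step that requires any care is the rotation invariance of infinitesimal bearing rigidity --- the identity $P_{Q^*x}=Q^*P_xQ^{*\T}$ and the resulting similarity-type factorization of $R$ --- together with keeping track of the sign of the scaling factor so that genuine uniqueness (and not merely uniqueness of shape) is obtained; everything else is a transcription of the proof already given for the global-reference-frame case, and the appeal to Lemma~\ref{lemma_attitudeSynchronization} and Assumption~\ref{assumption_initialAttitudePositiveDefinite} is only needed to guarantee the existence of the single object $Q^*$.
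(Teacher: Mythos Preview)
Your proposal is correct and follows essentially the same approach as the paper, which simply states that the proof is similar to Lemma~\ref{lemma_targetFormationExistUnique} with the bearings $\{g_{ij}^*\}$ replaced by $\{Q^*g_{ij}^*\}$. You have carefully filled in the details the paper leaves implicit---in particular the connectedness of $\G$ from Assumption~\ref{assumption_parallelRigid}, the well-definedness of $Q^*$, and the rotation invariance of feasibility and infinitesimal bearing rigidity via $P_{Q^*x}=Q^*P_xQ^{*\T}$---all of which are correct and make the argument self-contained.
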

\begin{proof}
The proof is similar to Lemma~\ref{lemma_targetFormationExistUnique}.
But it should be noted that the bearings of $\G(p^*)$ in Definition~\ref{definition_noGlobalcase_targetFormation} are $\{Q^*g_{ij}^*\}_{(i,j)\in\E}$ instead of $\{g_{ij}^*\}_{(i,j)\in\E}$.
\end{proof}

Let $\delta_i\triangleq p_i-p^*_i$.
It follows from the closed-loop position dynamics \eqref{eq_noGlobal_closedLoopDynamics_position} that
\begin{align*}
    \dot{\delta}_i
    &=-\sum_{j\in\mathcal{N}_i} P_{g_{ij}}(Q_i+Q_j)g_{ij}^* \nonumber\\
    &=\underbrace{-2\sum_{j\in\mathcal{N}_i} P_{g_{ij}}{Q^*}g_{ij}^*}_{f_i(\delta)}
    + \underbrace{\sum_{j\in\mathcal{N}_i} P_{g_{ij}}(2Q^*-Q_i-Q_j)g_{ij}^*}_{h_i(t)}.
\end{align*}
Denote $\delta=[\delta_1^\T ,\dots,\delta_n^\T ]^\T$, $f(\delta)=[f_1^\T (\delta),\dots,f_n^\T (\delta)]^\T$, and $h(t)=[h_1^\T (t),\dots,h_n^\T (t)]^\T$.
Then, the $\delta$-dynamics is
\begin{align}\label{eq_noGlobal_deltaDynamics}
    \dot{\delta} = f(\delta) + h(t),
\end{align}
where $h(t)$ can be viewed as an input.
It should be noted that the autonomous system (i.e., system \eqref{eq_noGlobal_deltaDynamics} with $h(t)\equiv0$)
\begin{align*}
  \dot{\delta}=f(\delta)
\end{align*}
has already been well studied in Section~\ref{section_BOF_global}.
For this autonomous system, we know from Section~\ref{section_BOF_global} that $\delta=0$ is an almost globally stable equilibrium and $g_{ij}(t)\rightarrow Q^*g_{ij}^*$ almost globally as $t\rightarrow\infty$.

\begin{lemma}\label{lemma_noGlobal_inputConvergeZero}
    The input $h(t)$ converges to zero asymptotically.
\end{lemma}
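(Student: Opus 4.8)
The plan is to show that $h(t)$ is built entirely out of the orientation errors $2Q^* - Q_i - Q_j$, and that these go to zero by Lemma~\ref{lemma_attitudeSynchronization}. More precisely, recall
\begin{align*}
    h_i(t) = \sum_{j\in\mathcal{N}_i} P_{g_{ij}}\left(2Q^*-Q_i-Q_j\right)g_{ij}^*.
\end{align*}
First I would bound each summand: since $\|P_{g_{ij}}\|\le 1$ and $\|g_{ij}^*\|=1$, we have $\|P_{g_{ij}}(2Q^*-Q_i-Q_j)g_{ij}^*\| \le \|2Q^*-Q_i-Q_j\| \le \|Q^*-Q_i\| + \|Q^*-Q_j\|$, hence $\|h_i(t)\| \le \sum_{j\in\N_i}\left(\|Q^*-Q_i(t)\| + \|Q^*-Q_j(t)\|\right)$ and $\|h(t)\| \le 2m\,\max_{i\in\V}\|Q_i(t)-Q^*\|$ or a similar graph-dependent constant times $\max_i\|Q_i(t)-Q^*\|$.

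Next I would invoke the orientation synchronization result. By Lemma~\ref{lemma_attitudeSynchronization} (under Assumption~\ref{assumption_initialAttitudePositiveDefinite} and a strongly connected fixed graph), $Q_i(t)\to Q^*$ asymptotically for every $i\in\V$, so $\max_{i\in\V}\|Q_i(t)-Q^*\|\to 0$ as $t\to\infty$. Combining this with the bound above gives $\|h(t)\|\to 0$, which is the claim. I should be careful to state that the orientation dynamics \eqref{eq_noGlobal_closedLoopDynamics_orientation} evolve independently of the position dynamics (the cascade structure already noted in the text), so the convergence $Q_i\to Q^*$ holds unconditionally on the formation trajectory, and thus $h(t)\to 0$ regardless of what $\delta(t)$ does.

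The only genuinely delicate point is justifying that the $Q_i$ converge to a \emph{common constant} $Q^*$ rather than merely synchronizing in the relative sense $Q_i^\T Q_j\to I_3$; this is the assertion made in the paragraph following Lemma~\ref{lemma_attitudeSynchronization}, which I would simply cite. Everything else is the routine norm estimate above. One subtlety worth a remark: the bound uses $\|P_{g_{ij}}\|\le 1$, which requires $e_{ij}\ne 0$, i.e. no inter-neighbor collisions — consistent with the standing implicit assumption used throughout the stability analysis — but since $h(t)$ is being treated as an exogenous input whose decay is driven purely by the orientation subsystem, even a crude bound $\|P_{g_{ij}}\|\le 1$ suffices and no further care about $\|e_{ij}\|$ is needed here.
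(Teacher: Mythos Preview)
Your proof is correct and follows essentially the same approach as the paper: bound $\|h(t)\|$ by $\sum_{i}\sum_{j\in\N_i}\|P_{g_{ij}}\|\|2Q^*-Q_i-Q_j\|\|g_{ij}^*\|$ using $\|P_{g_{ij}}\|=\|g_{ij}^*\|=1$, then invoke Lemma~\ref{lemma_attitudeSynchronization} to conclude $Q_i,Q_j\to Q^*$ and hence $\|h(t)\|\to 0$. Your additional remarks on the cascade structure and the implicit no-collision assumption are accurate but not needed for the argument itself.
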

\begin{proof}
Note
$\|h(t)\|
    \le \sum_{i=1}^n \|h_i(t)\|
    \le \sum_{i=1}^n\sum_{j\in\mathcal{N}_i} \|P_{g_{ij}}\|\|2Q^*-Q_i-Q_j\|\|g_{ij}^*\|$.
Since $Q_i, Q_j\rightarrow Q^*$ by Lemma~\ref{lemma_attitudeSynchronization} and $\|P_{g_{ij}}\|=\|g_{ij}^*\|=1$, we have $\|h(t)\|\rightarrow0$ as $t\rightarrow\infty$.
\end{proof}

We next identify the state manifold and the equilibriums of the $\delta$-dynamics \eqref{eq_noGlobal_deltaDynamics}.
Denote, as before, $r(t)=p(t)-\one\otimes\bar{p}$ and $r^*=p^*-\one\otimes\bar{p}^*$.

\begin{lemma}
    System \eqref{eq_noGlobal_deltaDynamics} evolves on the surface of the sphere
    \begin{align*}
        \S=\{\delta\in\R^{3n}: \|\delta+r^*\|=\|r^*\|\}.
    \end{align*}
\end{lemma}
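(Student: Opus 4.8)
The plan is to reduce the claim to the scale-invariance property already established for this control law, exactly as was done for the global-frame case in Section~\ref{section_BOF_global}. First I would invoke Definition~\ref{definition_noGlobalcase_targetFormation}, which fixes $\bar{p}^*=\bar{p}(0)$ and $s^*=s(0)$. Combining $\bar{p}^*=\bar{p}(0)$ with Theorem~\ref{theorem_noGlobal_invariantCentroidScale} (centroid invariance under \eqref{eq_controlLaw_noGlobalReferece}) gives $\bar{p}(t)=\bar{p}^*$ for all $t\ge 0$, so that
\begin{align*}
    \delta(t)=p(t)-p^*=\big(p(t)-\one\otimes\bar{p}(t)\big)-\big(p^*-\one\otimes\bar{p}^*\big)=r(t)-r^*.
\end{align*}

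Next I would use this identity to rewrite the sphere condition: $\delta(t)+r^*=r(t)$, hence $\|\delta(t)+r^*\|=\|r(t)\|$. It remains to show $\|r(t)\|=\|r^*\|$. By definition $r(t)=p(t)-\one\otimes\bar{p}(t)$, so $\|r(t)\|^2=\sum_{i=1}^n\|p_i(t)-\bar{p}(t)\|^2=n\,s(t)^2$, and likewise $\|r^*\|^2=n\,(s^*)^2$. Scale invariance (again Theorem~\ref{theorem_noGlobal_invariantCentroidScale}) gives $s(t)=s(0)$, and Definition~\ref{definition_noGlobalcase_targetFormation} gives $s^*=s(0)$; therefore $\|r(t)\|=\sqrt{n}\,s(0)=\|r^*\|$. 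Combining these facts yields $\|\delta(t)+r^*\|=\|r^*\|$ for all $t\ge 0$, i.e.\ $\delta(t)\in\S$.

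There is essentially no hard step here: the statement is an immediate corollary of the centroid and scale invariance already proved for the closed-loop system \eqref{eq_noGlobal_closedLoopDynamics}. The only point requiring a small amount of care is that Theorem~\ref{theorem_noGlobal_invariantCentroidScale} was established for the full closed-loop dynamics (both position and orientation), not merely for the autonomous vector field $f(\delta)$; since the present lemma concerns the trajectories of \eqref{eq_noGlobal_deltaDynamics} with the true input $h(t)$ induced by \eqref{eq_noGlobal_closedLoopDynamics_orientation}, the invariance applies verbatim. Geometrically this is the same picture as in Figure~\ref{fig_deltaGeometry}: $\delta$ is confined to the sphere that passes through the origin and is centered at $-r^*$.
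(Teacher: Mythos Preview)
Your proposal is correct and follows essentially the same approach as the paper: use centroid invariance together with $\bar{p}^*=\bar{p}(0)$ to get $\delta(t)=r(t)-r^*$, then use scale invariance together with $s^*=s(0)$ to conclude $\|r(t)\|=\|r^*\|$. The paper's proof is simply a terser version of what you wrote.
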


\begin{proof}
It follows from $\delta(t)=r(t)-r^*$ that $\|\delta(t)+r^*\|=\|r(t)\|=\|r^*\|$, where $\|r(t)\|=\|r^*\|$ is due to the scale invariance.
\end{proof}

\begin{theorem}[Equilibrium]
    Under Assumptions~\ref{assumption_parallelRigid} and \ref{assumption_initialAttitudePositiveDefinite},
    the closed-loop system \eqref{eq_noGlobal_closedLoopDynamics} (i.e., the $\delta$-dynamics together with the orientation dynamics) has two equilibrium points,
    \begin{enumerate}[(a)]
        \item $\delta=0$ and $Q_i=Q^*, \forall i\in\V$,
        \item $\delta=-2r^*$ and $Q_i=Q^*, \forall i\in\V$.
    \end{enumerate}
\end{theorem}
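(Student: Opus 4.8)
The plan is to exploit the cascade structure of \eqref{eq_noGlobal_closedLoopDynamics}: a point $(\delta,\{Q_i\})$ is an equilibrium precisely when the orientation subsystem \eqref{eq_noGlobal_closedLoopDynamics_orientation} and the $\delta$-subsystem \eqref{eq_noGlobal_deltaDynamics} both vanish there. I would analyze the (decoupled) orientation part first, and then substitute the resulting orientation configuration into the position part.

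For the orientation subsystem, since each $Q_i\in SO(3)$ is invertible, $\dot{Q}_i=0$ for all $i$ if and only if $\sum_{j\in\N_i}\left(Q_j^\T Q_i-Q_i^\T Q_j\right)=0$ for all $i$. Invoking the Lyapunov analysis behind Lemma~\ref{lemma_attitudeSynchronization} (from \cite{Igarashi2009TCST}): the set of orientation configurations admitting a common $Q_0\in SO(3)$ with $Q_0^\T Q_i$ positive definite (Assumption~\ref{assumption_initialAttitudePositiveDefinite}) is forward invariant, and within it the only critical configuration of the synchronization potential — hence the only equilibrium of \eqref{eq_noGlobal_closedLoopDynamics_orientation} — is the consensus configuration $Q_1=\cdots=Q_n$. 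For an equilibrium of the closed loop the whole trajectory is constant, so this common orientation coincides with its own asymptotic limit $Q^*$; thus any admissible equilibrium has $Q_i=Q^*$ for all $i\in\V$.

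With $Q_i=Q^*$ for all $i$, the input term in \eqref{eq_noGlobal_deltaDynamics} satisfies $h_i(t)=\sum_{j\in\N_i}P_{g_{ij}}(2Q^*-Q_i-Q_j)g_{ij}^*=0$, so the $\delta$-dynamics collapses to the autonomous system $\dot{\delta}=f(\delta)$ with $f_i(\delta)=-2\sum_{j\in\N_i}P_{g_{ij}}Q^*g_{ij}^*$. This is exactly the bearing-only dynamics of Section~\ref{section_BOF_global} with the bearing constraints $g_{ij}^*$ replaced by the rotated ones $Q^*g_{ij}^*$ (the scalar factor $2$ only rescales time and does not affect equilibria) and with the target formation $\G(p^*)$ of Definition~\ref{definition_noGlobalcase_targetFormation}, whose bearings are $Q^*g_{ij}^*$. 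Since rotating every target point leaves the rank of the bearing rigidity matrix unchanged, Assumption~\ref{assumption_parallelRigid} still guarantees $\G(p^*)$ is infinitesimally bearing rigid, so Theorem~\ref{theorem_global_twoEquilibrium} applies verbatim and yields exactly two equilibria, $\delta=0$ and $\delta=-2r^*$. Conversely, both listed points are readily checked to be equilibria: at $Q_i=Q^*$ one has $Q_j^\T Q_i-Q_i^\T Q_j=I_3-I_3=0$, so $\dot{Q}_i=0$; and at $\delta=0$ (where $g_{ij}=Q^*g_{ij}^*$) respectively $\delta=-2r^*$ (where $g_{ij}=-Q^*g_{ij}^*$) one has $P_{g_{ij}}Q^*g_{ij}^*=P_{\pm Q^*g_{ij}^*}(Q^*g_{ij}^*)=0$, so $\dot{\delta}=0$.

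The routine parts are the reduction to Theorem~\ref{theorem_global_twoEquilibrium} and the final verification. The delicate step is the orientation argument: $\dot{Q}_i=0$ for all $i$ does \emph{not} by itself force synchronization (configurations in which some $Q_i^\T Q_j$ is a rotation by $\pi$ also satisfy the balance condition), so one genuinely needs Assumption~\ref{assumption_initialAttitudePositiveDefinite} together with forward invariance of the associated region to exclude spurious equilibria, and one must be careful to identify the common equilibrium orientation with the limit $Q^*$ used to define $\G(p^*)$.
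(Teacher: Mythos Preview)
Your proposal is correct and follows essentially the same route as the paper: first use Lemma~\ref{lemma_attitudeSynchronization} (together with Assumption~\ref{assumption_initialAttitudePositiveDefinite}) to pin down $Q_i=Q^*$ at any admissible equilibrium, then reduce the position equilibrium condition to $\sum_{j\in\N_i}P_{g_{ij}}Q^*g_{ij}^*=0$ and invoke the argument of Theorem~\ref{theorem_global_twoEquilibrium} with the rotated constraints $Q^*g_{ij}^*$. If anything, you are more careful than the paper in flagging that $\dot{Q}_i=0$ alone does not force consensus and that the forward-invariance region from Assumption~\ref{assumption_initialAttitudePositiveDefinite} is what rules out the spurious $\pi$-rotation equilibria.
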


\begin{proof}
Any equilibrium must satisfy
\begin{align}\label{eq_noGlobal_equilibriumCondition}
    \sum_{j\in\mathcal{N}_i} P_{g_{ij}}(Q_i+Q_j) g_{ij}^*=0, \quad\forall i\in\V.
\end{align}
It follows from Lemma~\ref{lemma_attitudeSynchronization} that $Q_i=Q^*$ ($\forall i\in\V$) is the equilibrium for the orientation dynamics \eqref{eq_noGlobal_closedLoopDynamics_orientation} under Assumption~\ref{assumption_initialAttitudePositiveDefinite}.
Then, \eqref{eq_noGlobal_equilibriumCondition} becomes
\begin{align*}
    \sum_{j\in\mathcal{N}_i} P_{g_{ij}}Q^* g_{ij}^*=0, \quad\forall i\in\V.
\end{align*}
Similar to the proof of Theorem~\ref{theorem_global_twoEquilibrium}, it can be shown that the above equation suggests two equilibriums: $\delta=0$ and $\delta=-2r^*$.
The bearings at the two equilibriums are $g_{ij}=Q^*g_{ij}^*, \forall(i,j)\in\E$ and $g_{ij}=-Q^*g_{ij}^*, \forall(i,j)\in\E$, respectively.
\end{proof}

The equilibrium $\delta=0$ is desired while the other one $\delta=-2r^*$ is undesired.
The formations at the two equilibriums have the same centroid, scale, and shape, but they have the opposite bearings.
We next present the main stability result and show that the desired equilibrium $\delta =0$ is almost globally stable.
Since the equilibrium $\delta=0$ for $\dot{\delta}=f(\delta)$ is almost globally stable, the idea of the proof is to show system \eqref{eq_noGlobal_deltaDynamics} is almost globally ISS \cite{Angeli2011TAC} and then the almost global stability can be concluded by $\lim_{t\rightarrow\infty}h(t)=0$.
Note the conventional ISS is not applicable since it is defined for globally stable equilibriums.

\begin{theorem}[Almost Global Asymptotical Stability]\label{theorem_noGlobal_almostGloablStability}
    Under Assumptions~\ref{assumption_parallelRigid} and \ref{assumption_initialAttitudePositiveDefinite}, the system trajectory $\delta(t)$ of \eqref{eq_noGlobal_deltaDynamics} asymptotically converges to $\delta=0$ from any $\delta(0)\in\S$ except a set of measure zero.
\end{theorem}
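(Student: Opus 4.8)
The plan is to exploit the cascade structure of the closed loop \eqref{eq_noGlobal_closedLoopDynamics}: the orientation dynamics is autonomous and, by Lemma~\ref{lemma_attitudeSynchronization}, drives $Q_i\to Q^*$, so by Lemma~\ref{lemma_noGlobal_inputConvergeZero} the term $h(t)$ in \eqref{eq_noGlobal_deltaDynamics} is a vanishing input. Hence it suffices to view \eqref{eq_noGlobal_deltaDynamics} as the field $f(\delta)$ — which is structurally the autonomous system studied in Section~\ref{section_BOF_global}, with the bearing constraints $g_{ij}^*$ replaced by their rotations $Q^*g_{ij}^*$ — perturbed by $h(t)$, and to show that almost every trajectory on the compact sphere $\S$ converges to $\delta=0$. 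First I would record what is already known about $\dot\delta=f(\delta)$: applying Theorem~\ref{theorem_global_almostGloablStability} (with the rotated target) shows that $f$ has exactly the two equilibria $\delta=0$ and $\delta=-2r^*$ on $\S$, with $\delta=0$ exponentially stable and region of attraction $\S\setminus\{-2r^*\}$; and by Proposition~\ref{proposition_global_Jacobian} the linearization of $f$ at $\delta=-2r^*$ is positive semi-definite and, modulo the trivial translation/scaling directions — which are orthogonal both to the centroid-invariant subspace in which $\delta$ lives and to the tangent space of $\S$ at $-2r^*$ — is in fact positive definite. Thus, in the reduced $(dn-d-1)$-dimensional state space, $\delta=0$ is locally exponentially stable and $\delta=-2r^*$ is a repeller, so the unforced $\delta$-system is almost globally asymptotically stable on $\S$.

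Next I would upgrade this to an almost input-to-state stability property in the sense of \cite{Angeli2011TAC}, relative to the exceptional point $\{-2r^*\}$. Reusing the Lyapunov computation from the proof of Theorem~\ref{theorem_global_almostGloablStability} with $V=\tfrac12\|\delta\|^2$, and noting that $p^\T f(\delta)=p^\T h(t)=0$ along \eqref{eq_noGlobal_deltaDynamics} (which follows exactly as in Lemma~\ref{lemma_noGlobal_xDotPerpendicular}), one obtains
\[
\dot V \;\le\; -2\alpha\lambda_{d+2}\sin^2\theta\,\|\delta\|^2 + \|\delta\|\,\|h(t)\|,
\]
where $\theta$ is the angle between $\delta$ and $-r^*$ and $\lambda_{d+2}>0$ is the smallest positive eigenvalue of $\tilde R^\T(p^*)\tilde R(p^*)$ for the rotated target (positive by Assumption~\ref{assumption_parallelRigid} and Theorem~\ref{theorem_conditionInfiParaRigid}). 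On $\S$, $\sin\theta$ vanishes only at $\delta=-2r^*$ and $V$ only at $\delta=0$, so on every compact subset of $\S\setminus\{-2r^*\}$ the decay coefficient is bounded below by a positive constant; together with the compactness of $\S$ this yields global-in-time boundedness of all solutions and the dissipation estimate required for almost-ISS away from $-2r^*$.

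Finally, since $h(t)\to0$, I would invoke the converging-input property of almost-ISS systems to conclude that every trajectory whose initial condition lies outside the stable set of $-2r^*$ converges to $\delta=0$, i.e. $g^b_{ij}(t)=Q_i^\T g_{ij}(t)\to g_{ij}^*$ for all $(i,j)\in\E$. Because the linearization at $-2r^*$ is positive definite on the reduced tangent space, its local stable set under the time-varying perturbed dynamics collapses to a single trajectory, and its pre-image under the (smooth) flow is still a set of measure zero in $\S$; this is the exceptional set in the statement. The main obstacle I anticipate is this last step: making rigorous, for the specific time-varying, vanishing perturbation $h(t)$ produced by the cascade, both the uniform almost-ISS/dissipation estimate on $\S\setminus\{-2r^*\}$ and the assertion that the set of initial conditions steered into $-2r^*$ has measure zero. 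The remaining ingredients — boundedness via compactness of $\S$, the two-equilibria structure of $f$, and local exponential stability at $\delta=0$ — are immediate from Section~\ref{section_BOF_global}.
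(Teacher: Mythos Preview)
Your proposal is correct and follows essentially the same route as the paper: both view \eqref{eq_noGlobal_deltaDynamics} as the almost globally stable system $\dot\delta=f(\delta)$ of Section~\ref{section_BOF_global} perturbed by the vanishing input $h(t)$, use the Lyapunov function $V=\tfrac12\|\delta\|^2$ to derive the dissipation inequality (your $\sin^2\theta$ factor is exactly the paper's $1-\|\delta\|^2/(4\|r^*\|^2)$ via $\cos\theta=\|\delta\|/(2\|r^*\|)$ on $\S$), and then invoke the almost-ISS framework of \cite{Angeli2011TAC} together with $h(t)\to0$. The paper handles your anticipated obstacle by directly verifying the ultimate boundedness property and the structural hypotheses A0--A2 of \cite{Angeli2011TAC} and appealing to \cite[Proposition~2]{Angeli2011TAC}, rather than arguing the measure-zero claim from scratch.
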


\begin{proof}
We first prove system \eqref{eq_noGlobal_deltaDynamics} fulfills the ultimate boundedness property \cite[Proposition~3]{Angeli2011TAC}.
Consider the Lyapunov function $V=\|\delta\|^2/2$.
For the autonomous system $\dot{\delta}=f(\delta)$, we already know from the proof of Theorem~\ref{theorem_global_almostGloablStability} that there exists a positive constant $\kappa$ such that
\begin{align*}
 \frac{\partial V}{\partial \delta}f(\delta)\le-\kappa\sin^2\theta\|\delta\|^2=-\kappa\left(1-\frac{\|\delta\|^2}{4\|r^*\|^2}\right)\|\delta\|^2.
\end{align*}
The derivative of $V$ along the trajectory of system \eqref{eq_noGlobal_deltaDynamics} is
\begin{align*}
    \dot{V}\hspace{-3pt}
    &=\hspace{-3pt}\frac{\partial V}{\partial \delta}(f(\delta)+h(t)) \hspace{-3pt}\le \hspace{-3pt}-\kappa\left(1-\frac{\|\delta\|^2}{4\|r^*\|^2}\right)\|\delta\|^2+\|\delta\|\|h(t)\| \nonumber\\
    &=-\kappa\|\delta\|^2+\frac{\kappa\|\delta\|^4}{4\|r^*\|^2}+\|\delta\|\|h(t)\| \nonumber\\
    &\le-2\kappa V+4\kappa\|r^*\|^2+2\|r^*\|\|h(t)\|,
\end{align*}
where the last inequality is due to $\|\delta\|\le2\|r^*\|$.
By \cite[Proposition~3]{Angeli2011TAC}, system \eqref{eq_noGlobal_deltaDynamics} fulfills the ultimate boundedness property.

We next show system \eqref{eq_noGlobal_deltaDynamics} satisfies the three assumptions A0-A2 in \cite{Angeli2011TAC}.
First, the state of \eqref{eq_noGlobal_deltaDynamics} evolves on the sphere $\S$ which satisfies assumption A0.
Second, consider $V=\|\delta\|^2/2$.
For the autonomous system $\dot{\delta}=f(\delta)$, we have $(\partial V/\partial \delta) f(\delta)\le -\kappa\sin^2\theta\|\delta\|^2<0$ for all $\delta\in\S$ except the equilibriums $\delta=0$ and $\delta=-2r^*$.
Thus, assumption A1 is fulfilled.
Third, the unstable equilibrium of the autonomous system $\dot{\delta}=f(\delta)$ is $\delta=-2r^*$.
It is isolated.
Similar to the proof of Proposition~\ref{proposition_global_Jacobian}, it can be shown that the Jacobian $A=\partial f/\partial \delta$ at $\delta=-2r^*$ is positive semi-definite and at least one eigenvalue is positive.
As a result, assumption A2 is fulfilled.

Thus, it can be concluded from \cite[Proposition~2]{Angeli2011TAC} that system \eqref{eq_noGlobal_deltaDynamics} is almost globally ISS.
Furthermore, since the input $h(t)$ converges to zero as shown in Lemma~\ref{lemma_noGlobal_inputConvergeZero}, the equilibrium $\delta=0$ is almost globally asymptotically stable. The trajectory of \eqref{eq_noGlobal_deltaDynamics} asymptotically converges to $\delta=0$ from any $x(0)\in\S$ except a set of zero measure.
\end{proof}
\begin{remark}
    In terms of bearings, Theorem~\ref{theorem_noGlobal_almostGloablStability} indicates that $g_{ij}(t)$ almost globally converges to $Q^*g_{ij}^*$ for all $(i,j)\in\E$.
    Consequently, $g_{ij}^i(t)=Q_i^\T(t) g_{ij}(t)\rightarrow (Q^*)^\T Q^*g_{ij}^*=g_{ij}^*$ as $t\rightarrow\infty$.
    Therefore, control law \eqref{eq_controlLaw_noGlobalReferece} solves Problem~\ref{problem_noGlobal_bearingonlyformationcontrol}.
\end{remark}

\section{Simulation Examples}\label{section_simulation}


In order to illustrate control law \eqref{eq_controlLawElement}, we have already presented two simulation examples in Figure~\ref{fig_simExampleDemo}.
It is worth noting that collinear initial formations may cause troubles for distance-based formation control, but as shown in Figure~\ref{fig_simExampleDemo}(b) it is not a problem for bearing-only formation control.
Two more simulation examples are shown in Figures~\ref{fig_sim_Global_2DPolygon} and \ref{fig_sim_Global_3DCube}, respectively.
The initial formations are generated randomly.
It is shown that control law \eqref{eq_controlLawMatrix} can steer the agents to a formation that satisfies the bearing constraints.


In order to illustrate control law \eqref{eq_controlLaw_noGlobalReferece}, two simulation examples are shown in Figures~\ref{fig_sim_noGlobal_2DSquare} and \ref{fig_sim_noGlobal_3DCube}, respectively.
The local frame for each agent is represented by the line segments in red/solid, green/dashed, and blue/dotted in the figures.
The initial positions and orientations of the agents are generated randomly.
The target formations in Figures~\ref{fig_sim_noGlobal_2DSquare} and \ref{fig_sim_noGlobal_3DCube} have the same shape as those in Figures~\ref{fig_simExampleDemo}(b) and \ref{fig_sim_Global_3DCube}, respectively.
As can be seen, the orientations of the agents finally synchronize, and the bearing constraints are satisfied in the synchronized frames.

\section{Conclusions and Future Works}\label{section_conclusion}

In this paper, we first proposed a bearing rigidity theory that is applicable to arbitrary dimensions and showed that the shape of a framework can be uniquely determined by its inter-neighbor bearings if and only if it is infinitesimally bearing rigid.
The infinitesimal bearing rigidity of a given framework can be conveniently examined by a rank condition.
The connection between the proposed bearing rigidity and the well-known distance rigidity has also been explored.
We showed that a framework in $\R^2$ is infinitesimally bearing rigid if and only if it is also infinitesimally distance rigid.
Based on the bearing rigidity theory, we studied the problem of bearing-only stabilization of multi-agent formations.
Two bearing-only distributed formation control laws have been proposed, respectively, for the cases with and without global reference frames.
Almost global formation stability for the control laws has been proved.

It is assumed in this paper that the underlying graphs for the frameworks or formations are undirected.
In fact, the bearing rigidity theory is independent to whether the underlying graph is undirected or directed.
By considering a directed graph (i.e., an orientation of the undirected graph), the bearing rigidity results are still valid.
The fundamental reason is that one of the two bearings $g_{ij}$ and $g_{ji}$ is redundant since $g_{ij}=-g_{ji}$.
However, the stability analysis of the bearing-only formation control laws is merely valid for the undirected case.
It is meaningful to study bearing-only formation control with directed interaction topologies in the future.

\begin{figure}
  \centering
  \subfloat[Initial formation]{\includegraphics[width=0.35\linewidth]{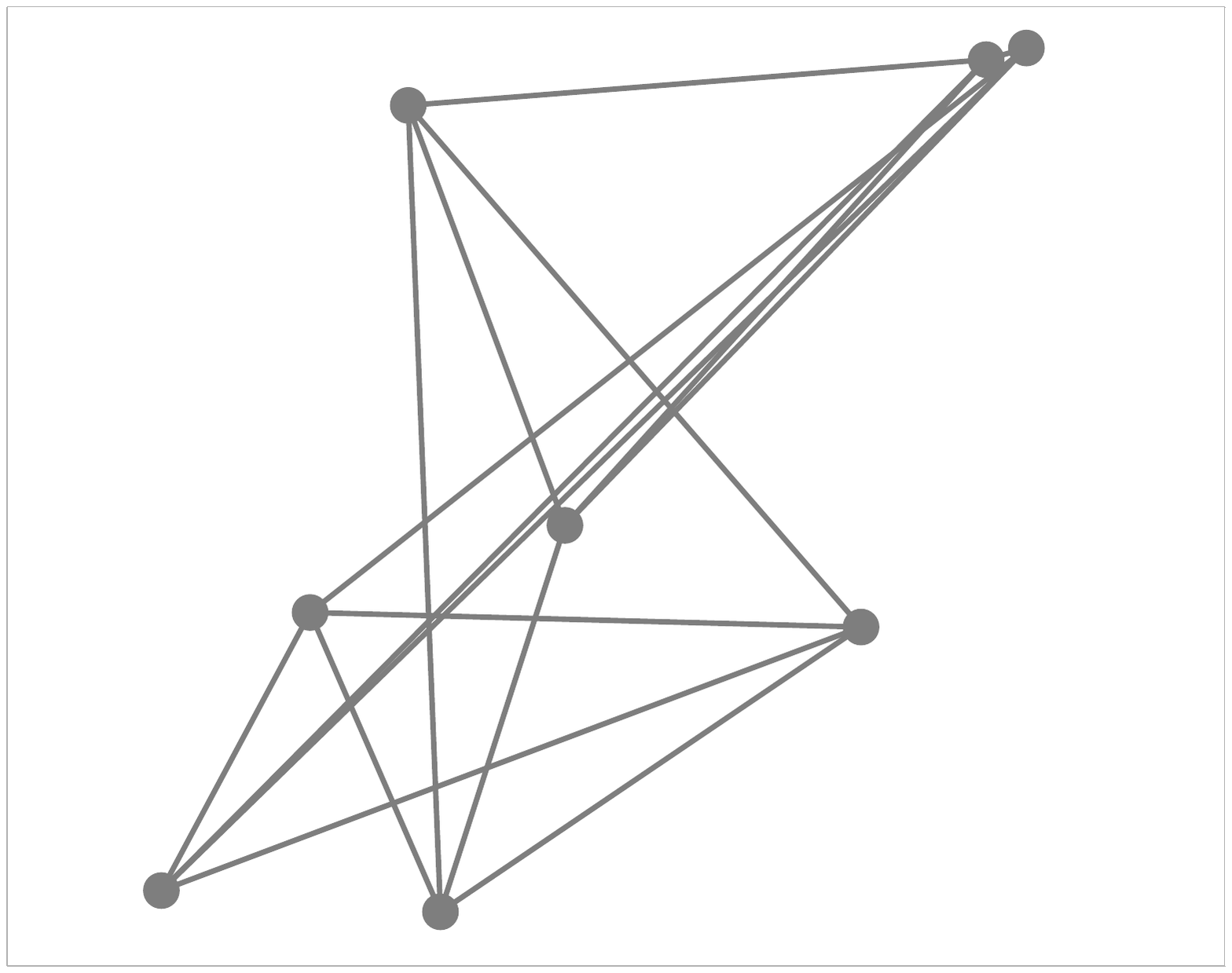}}
  \subfloat[Final formation]{\includegraphics[width=0.35\linewidth]{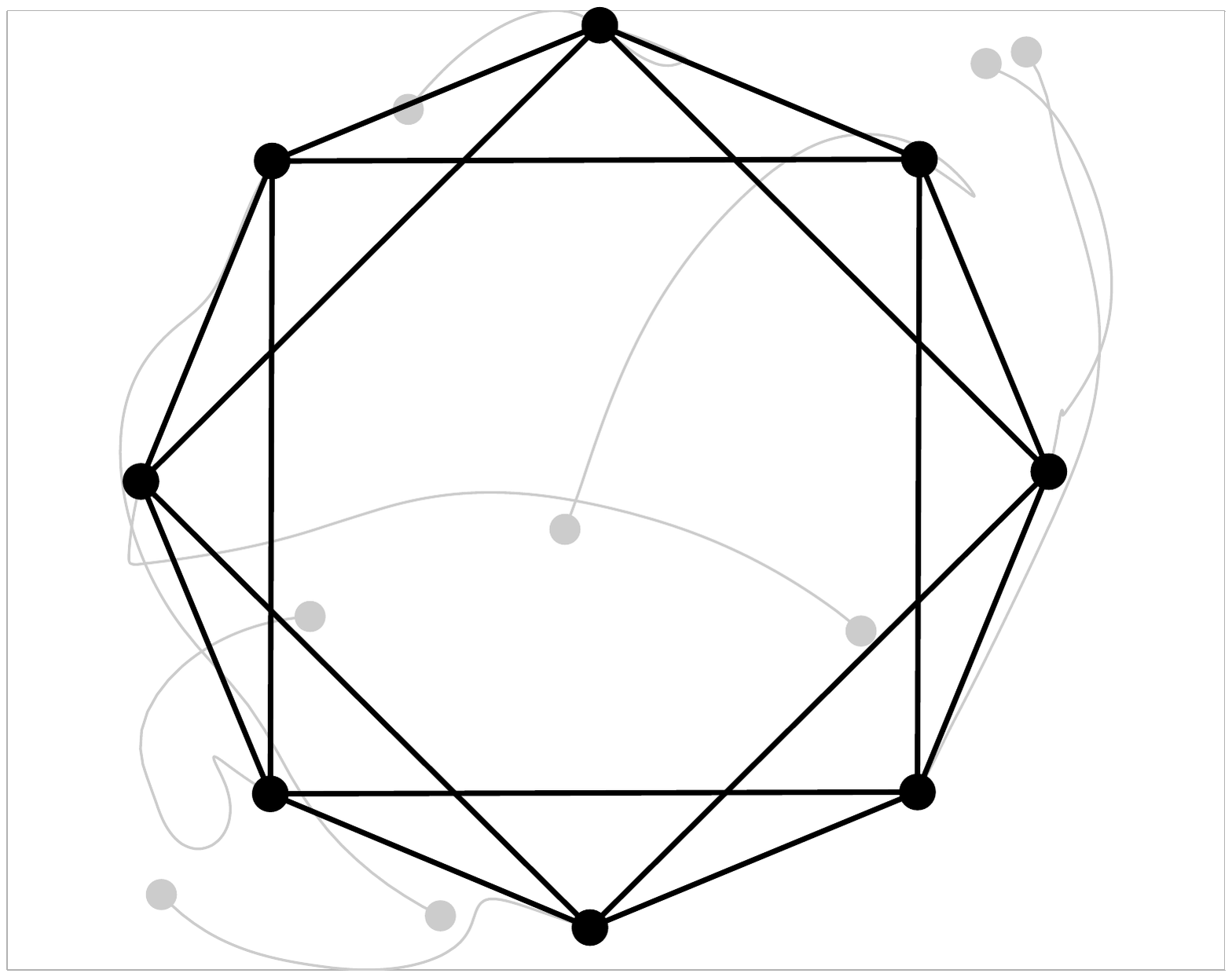}}
  \caption{The case with a global reference frame in $\R^2$ with $n=8$, $m=16$.}
  \label{fig_sim_Global_2DPolygon}
  \vspace{-10pt}
\end{figure}
\begin{figure}
  \centering
  \subfloat[Initial formation]{\includegraphics[width=0.35\linewidth]{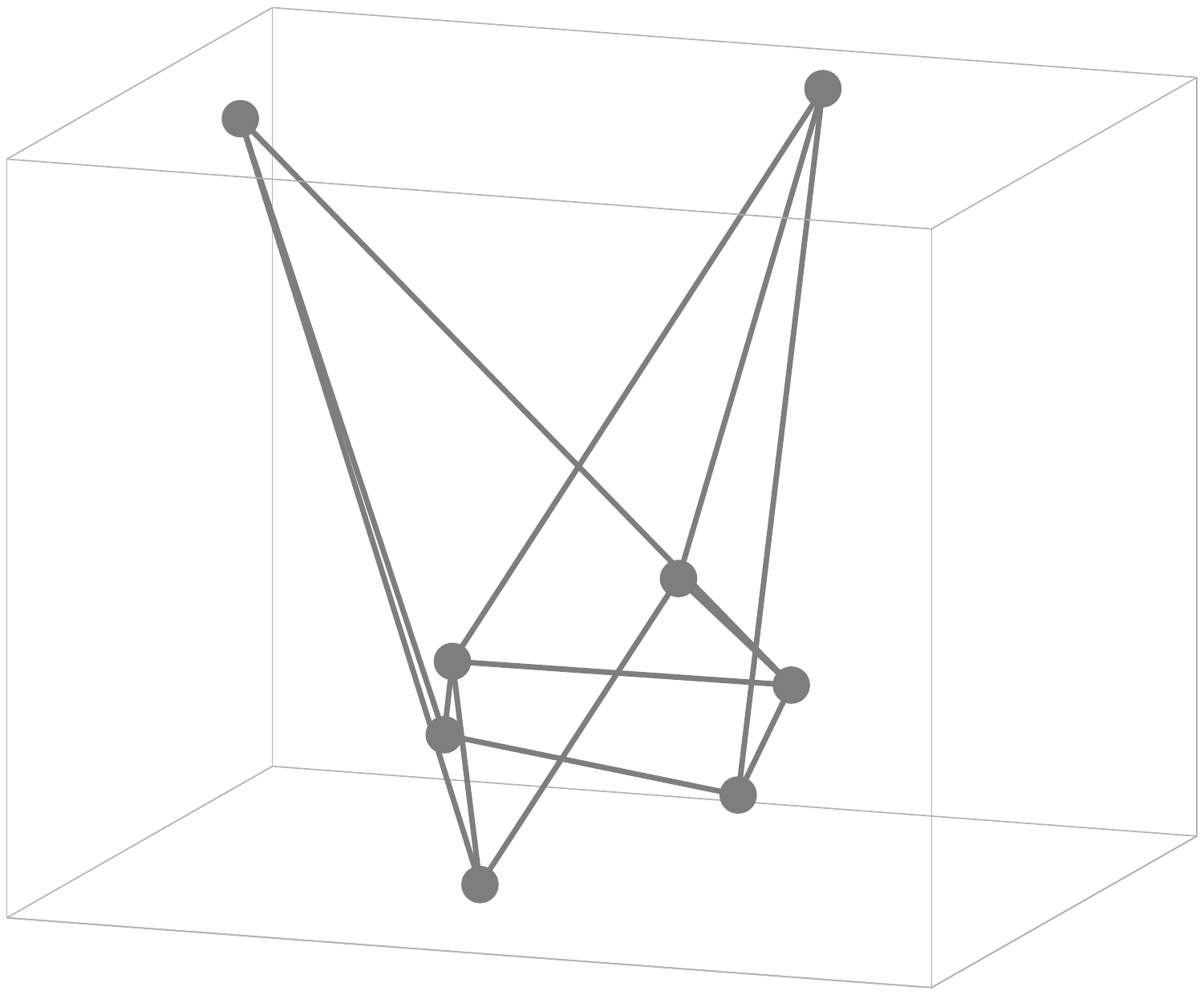}}
  \subfloat[Final formation]{\includegraphics[width=0.35\linewidth]{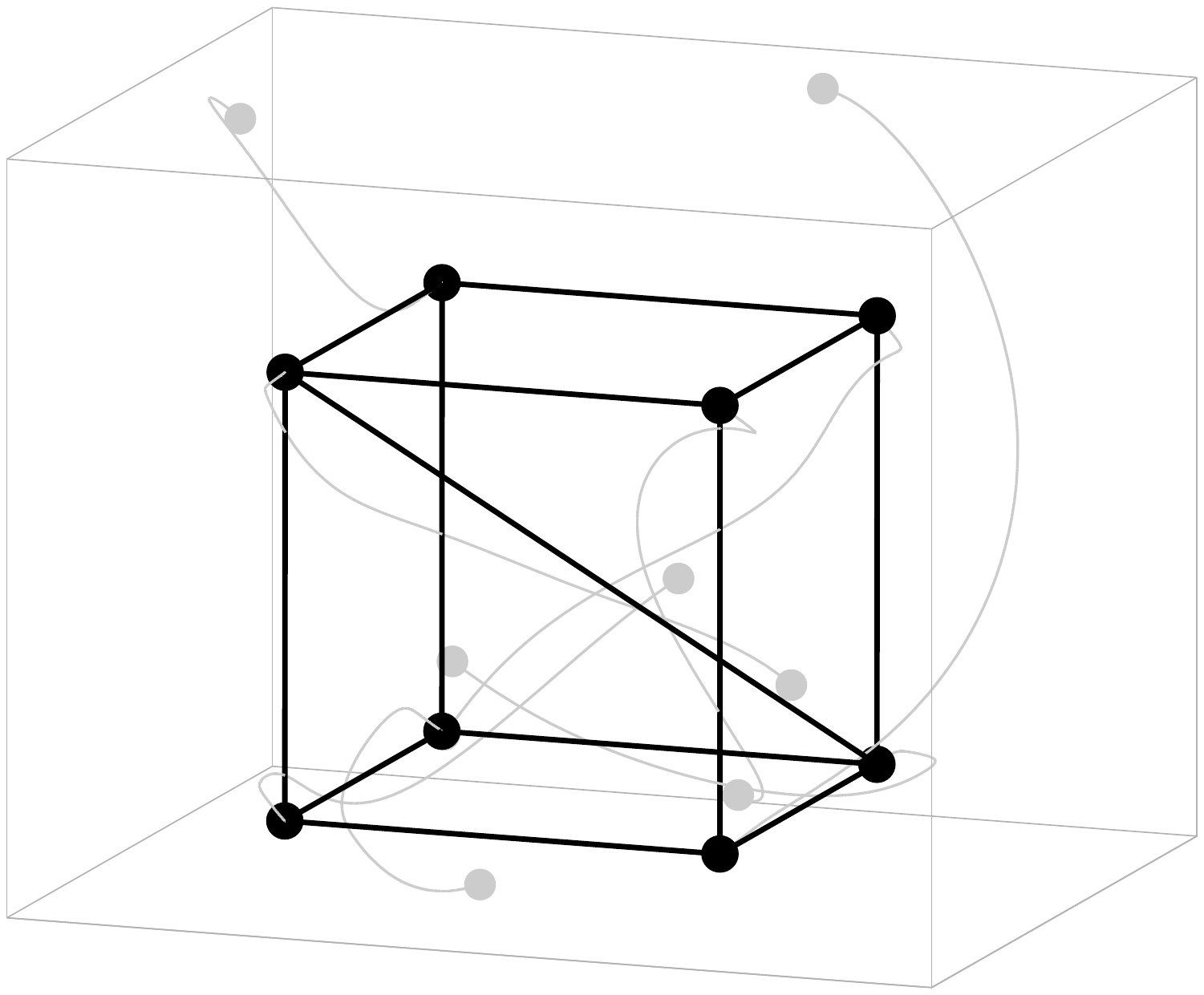}}
  \caption{The case with a global reference frame in $\R^3$ with $n=8$, $m=13$.}
  \label{fig_sim_Global_3DCube}
    \vspace{-10pt}
\end{figure}
\begin{figure}[t]
  \centering
  \subfloat[Initial formation]{\includegraphics[width=0.35\linewidth]{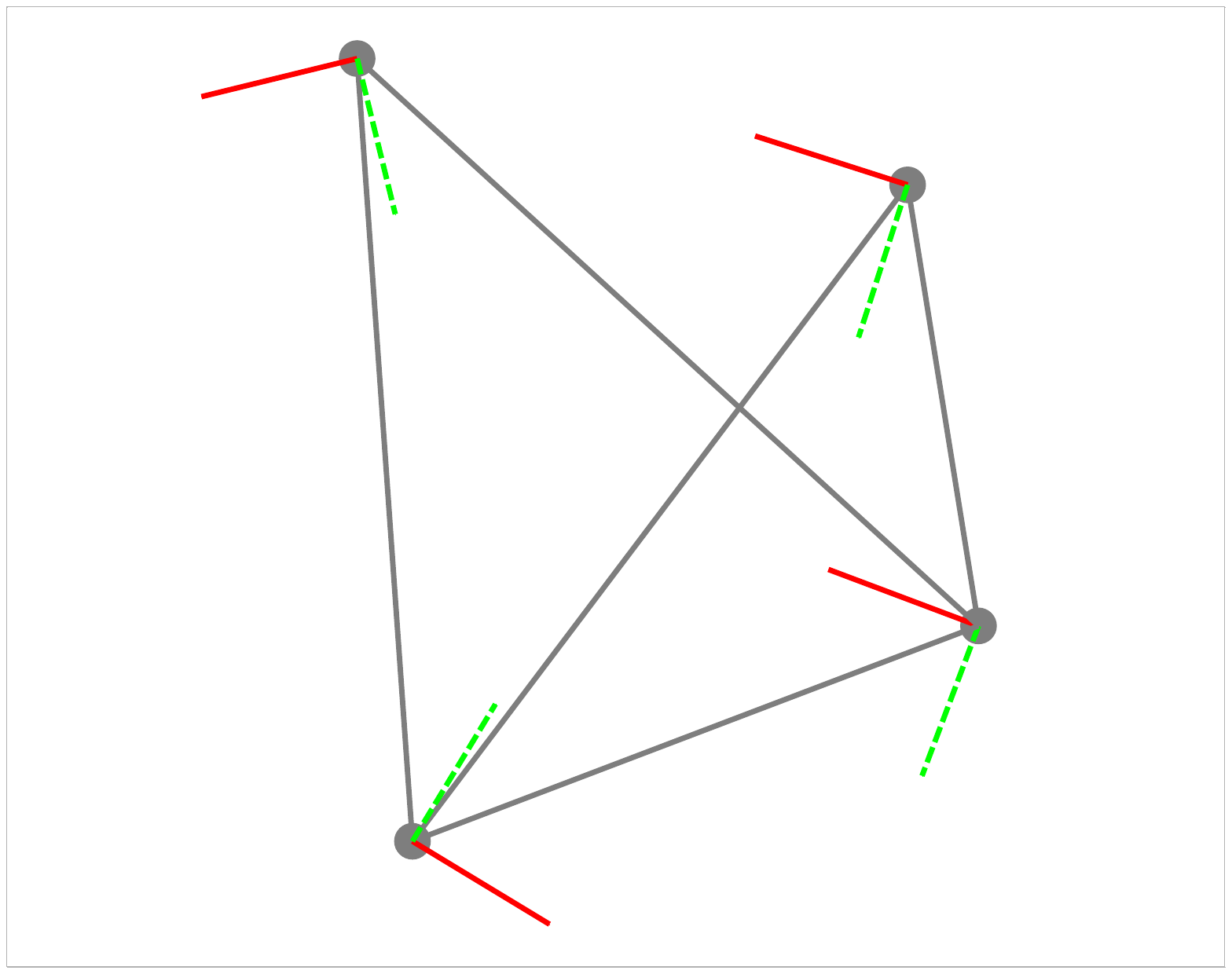}}
  \subfloat[Final formation]{\includegraphics[width=0.35\linewidth]{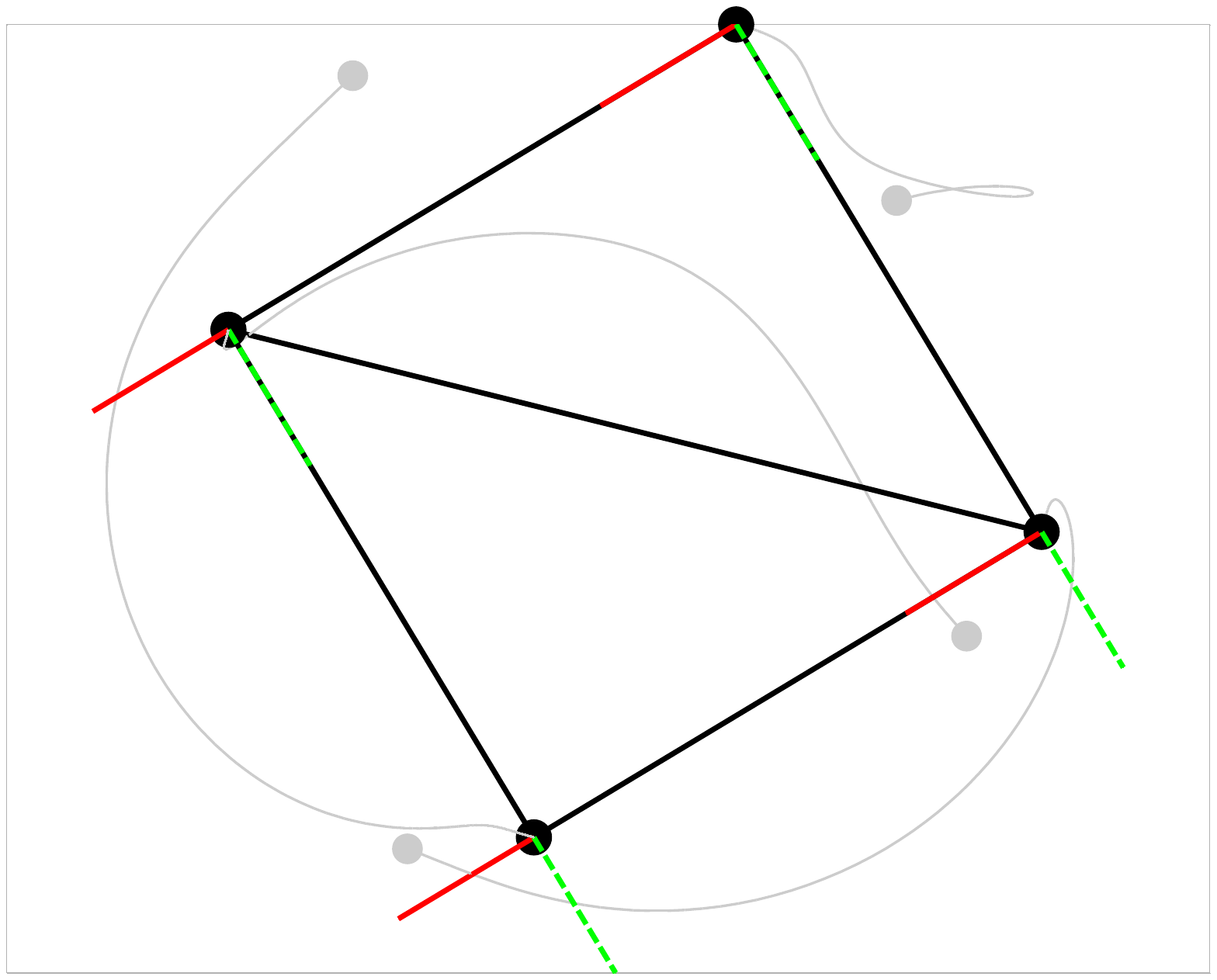}}
  \caption{The case without a global reference frame in $\R^2$ with $n=4$, $m=5$.}
  \label{fig_sim_noGlobal_2DSquare}
    \vspace{-10pt}
\end{figure}
\begin{figure}[t]
  \centering
  \subfloat[Initial formation]{\includegraphics[width=0.35\linewidth]{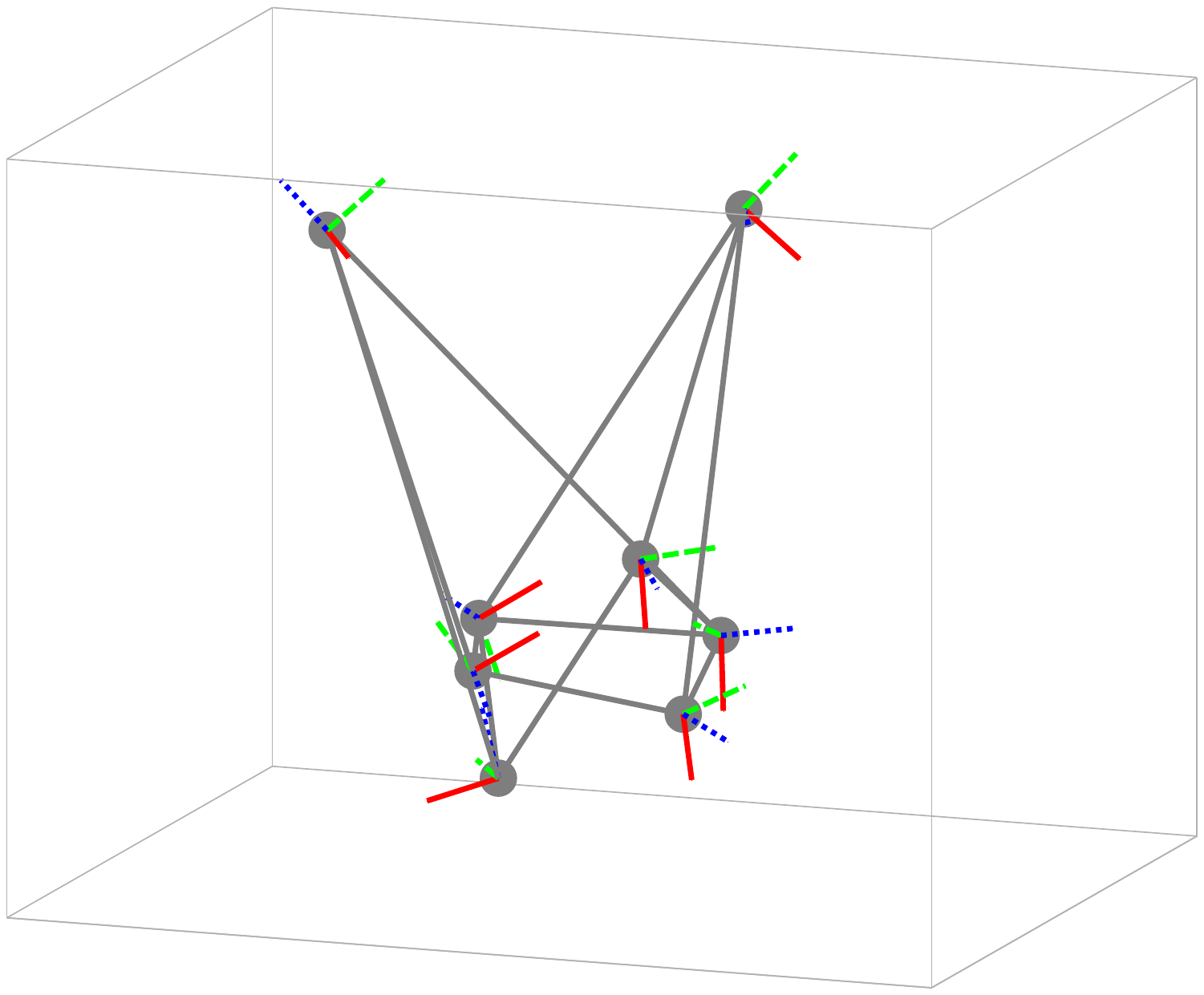}}
  \subfloat[Final formation]{\includegraphics[width=0.35\linewidth]{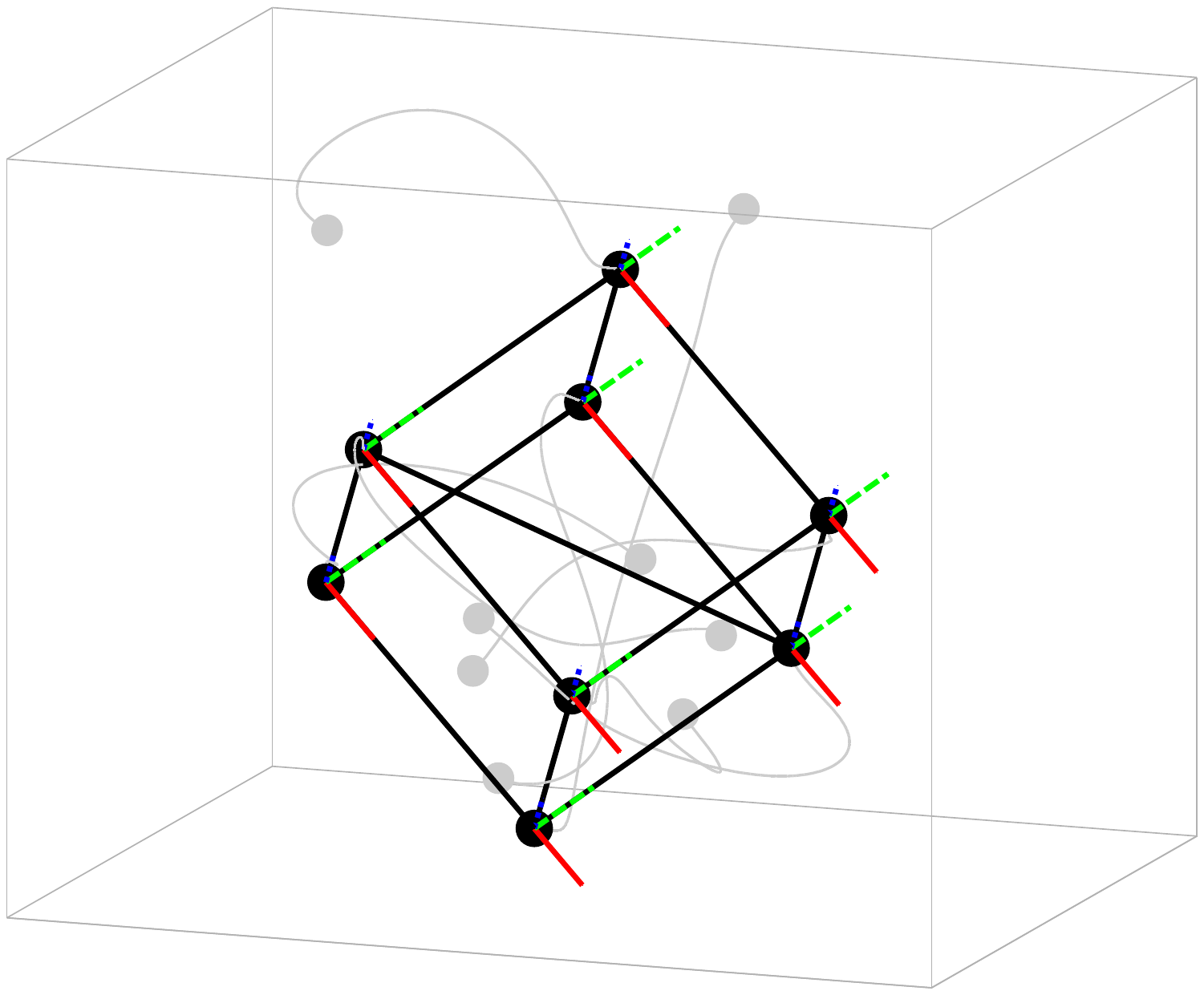}}
  \caption{The case without a global reference frame in $\R^3$ with $n=8$, $m=13$.}
  \label{fig_sim_noGlobal_3DCube}
    \vspace{-10pt}
\end{figure}

\appendix

\subsection{Proof of Theorem~\ref{theorem_2DIPRandIREquivalence} and Corollary~\ref{corollary_infinitesimalBearngAndDistanceMotion}}\label{appendix_distanceRigidity}

In order to prove Theorem~\ref{theorem_2DIPRandIREquivalence} and Corollary~\ref{corollary_infinitesimalBearngAndDistanceMotion}, we need first introduce some concepts and results in the distance rigidity theory \cite{Hendrickson1992SIAM,Connelly2005Generic}.
Define the \emph{distance function} for a framework $\G(p)$ as
\begin{align}\label{eq_distanceEdgeFunction}
    F_D(p)\triangleq\frac{1}{2}
    \left[
      \begin{array}{ccc}
        \|e_1\|^2 & \cdots & \|e_m\|^2 \\
      \end{array}
    \right]^\T\in\R^{m}.
\end{align}
Each entry of $F_D(p)$ corresponds to the length of an edge of the framework.
The \emph{distance rigidity matrix} is defined as the Jacobian of the distance function,
\begin{align*}
    R_D(p) \triangleq \frac{\partial F_D(p)}{\partial p}\in\R^{m\times dn}.
\end{align*}
Let $\delta p$ be a variation of $p$.
If $R_D(p) \delta{p}=0$, then $\delta{p}$ is called an \emph{infinitesimal distance motion} of $\G(p)$.
A framework is \emph{infinitesimally distance rigid} if the infinitesimal motion only corresponds to rigid-body rotations and translations.

\begin{lemma}[\cite{Hendrickson1992SIAM}]\label{lemma_diatanceInfiRigid_NSCondition}
    A framework $\G(p)$ in $\R^d$ is infinitesimally distance rigid if and only if
    \begin{align*}
        \rank(R_D(p))=\left\{
                              \begin{array}{ll}
                                dn-d(d+1)/2 & \text{if $n\ge d$}, \\
                                n(n-1)/2 & \text{if $n<d$}. \\
                              \end{array}
                            \right.
    \end{align*}
\end{lemma}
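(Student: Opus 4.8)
The plan is to reduce infinitesimal distance rigidity to the stated rank condition through the space of \emph{trivial} infinitesimal motions. Let $T(p)\subseteq\R^{dn}$ be the set of velocity fields induced by rigid motions of $\R^d$, i.e.\ those $\delta p$ with $\delta p_i=Ap_i+b$ for some skew-symmetric $A\in\R^{d\times d}$ and some $b\in\R^d$. Since the $k$th row of $R_D(p)$ yields $[R_D(p)\delta p]_k=(p_i-p_j)^\T(\delta p_i-\delta p_j)$ for the edge $(i,j)$, and $(p_i-p_j)^\T A(p_i-p_j)=0$ for skew $A$, one always has $T(p)\subseteq\Null(R_D(p))$. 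By definition $\G(p)$ is infinitesimally distance rigid exactly when $\Null(R_D(p))=T(p)$, which, because the inclusion always holds, is equivalent to $\rank(R_D(p))=dn-\dim T(p)$. So the task reduces to identifying $\dim T(p)$ and showing it takes the value $d(d+1)/2$ (case $n\ge d$), resp.\ $dn-n(n-1)/2$ (case $n<d$), exactly when the rigidity/rank condition holds.

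First I would compute $\dim T(p)$ in general. The linear parametrization $(A,b)\mapsto\delta p$ has domain of dimension $d+d(d-1)/2$, and its kernel consists of the pairs with $Ap_i+b=0$ for all $i$; eliminating $b$, these are the skew $A$ that vanish on $U:=\myspan\{p_i-p_j:i,j\in\V\}$, equivalently the skew maps supported on $U^\perp$, a space of dimension $(d-r)(d-r-1)/2$ where $r:=\dim U$ is the affine dimension of the configuration. Hence $\dim T(p)=d(d+1)/2-(d-r)(d-r-1)/2$. A direct check shows this equals $d(d+1)/2$ precisely when $r\in\{d-1,d\}$, and equals $dn-n(n-1)/2$ precisely when $r=n-1$. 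Since $r\le\min\{d,n-1\}$ always, it remains to show that under infinitesimal rigidity one has $r=n-1$ when $n\le d$ and $r\ge d-1$ when $n\ge d$, and conversely that the rank attaining the stated value forces the same relation on $r$.

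The geometric core is a flex that rules out degenerate configurations. Suppose $r<d$ and the points are affinely dependent --- which is automatic once $r<n-1$ --- so that there is a nontrivial relation $\sum_i\lambda_i p_i=0$ with $\sum_i\lambda_i=0$; pick a nonzero $u\in U^\perp$ and set $\delta p_i:=\lambda_i u$. For each edge, $(p_i-p_j)^\T(\delta p_i-\delta p_j)=(\lambda_i-\lambda_j)(p_i-p_j)^\T u=0$, so $\delta p\in\Null(R_D(p))$; but if $\delta p$ were trivial then, after absorbing $b$, $\lambda_i u=A(p_i-\bar p)$ for all $i$, whence $\big(\sum_i\lambda_i^2\big)u=A\big(\sum_i\lambda_i p_i-\bar p\sum_i\lambda_i\big)=0$ and therefore $u=0$, a contradiction. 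Thus any such framework is not infinitesimally rigid, which gives the claimed lower bounds on $r$ under rigidity and hence the forward direction. For the converse, $\rank(R_D(p))$ equal to the stated value already forces $m$ to be large (equal to $n(n-1)/2$ when $n<d$), and degenerate configurations are excluded using two facts: every row of $R_D(p)$ lies in $U^{\,n}\subseteq\R^{dn}$, so $\rank(R_D(p))\le rn$; and for a sufficiently dense graph an affine dependence $\lambda$ produces a self-stress $\omega_{ij}=\lambda_i\lambda_j$ in the left null space, forcing $\rank(R_D(p))$ strictly below $m$. Together these pin $r$ down, so $\dim T(p)$ takes the required value and $\Null(R_D(p))=T(p)$.

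I expect the main obstacle to be exactly this converse step: verifying, uniformly over all graphs (not only complete ones) and over the ranges $n\ge d$ and $n<d$, that the rank of $R_D(p)$ cannot reach the stated value at a configuration whose affine span is too small. The dimension count for $T(p)$ and the flex construction are routine bookkeeping; the careful case analysis mirrors the Asimow--Roth / Hendrickson treatment of generic distance rigidity \cite{Hendrickson1992SIAM}.
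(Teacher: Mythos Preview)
The paper does not prove this lemma at all: it is stated as a quotation from \cite{Hendrickson1992SIAM} and used as a black box in the proof of Theorem~\ref{theorem_2DIPRandIREquivalence}. There is therefore no ``paper's own proof'' to compare your proposal against.

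That said, your plan is essentially the classical Asimow--Roth argument and the forward direction is fine: the flex $\delta p_i=\lambda_i u$ with $u\in U^\perp$ correctly shows that an infinitesimally rigid framework must have affine dimension $r\ge\min\{d,n-1\}$, and your formula $\dim T(p)=d(d+1)/2-(d-r)(d-r-1)/2$ then yields the stated rank. The weak point is exactly where you flag it: for the converse you invoke $\rank(R_D(p))\le rn$ and a self-stress argument ``for a sufficiently dense graph'', neither of which suffices for an arbitrary graph $\G$. The clean fix is to sharpen the rank bound. After rotating so that $U=\R^r\times\{0\}$, the rigidity matrix coincides with that of the same framework viewed in $\R^r$, where the configuration has full affine span; hence
\[
\rank(R_D(p))\le rn-\tfrac{r(r+1)}{2}.
\]
Combining this with $\rank(R_D(p))=dn-d(d+1)/2$ and $n\ge d$ gives $(d-r)n\le(d-r)(d+r+1)/2$, forcing $r\ge d-1$; similarly, for $n<d$ the inequality $n(n-1)/2\le rn-r(r+1)/2$ forces $r=n-1$. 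With this bound in place the converse goes through without any assumption on the density of $\G$, and your self-stress detour becomes unnecessary.
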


In the case of $n\ge d$, the framework $\G(p)$ is infinitesimally distance rigid in $\R^2$ if and only if $\rank(R_D(p))=2n-3$, and in $\R^3$ if and only if $\rank(R_D(p))=3n-6$.

To prove Theorem~\ref{theorem_2DIPRandIREquivalence}, we first prove the following result which indicates that the bearing rigidity matrix always has the same rank as the distance rigidity matrix for any framework in $\R^2$.

\begin{proposition}\label{proposition_R2RBRDSameRank}
    For any framework $\G(p)$ in $\R^2$,  $\rank(R(p))=\rank(R_D(p))$.
\end{proposition}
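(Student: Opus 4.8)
The plan is to compare the two rigidity matrices entry-by-entry at the level of each edge, using the explicit form $R(p) = \mydiag(P_{g_k}/\|e_k\|)\bar H$ derived above and the analogous block form of $R_D(p)$. First I would recall that the $k$th block-row of $R_D(p)$ (for the directed edge $k$ from $i$ to $j$) is $e_k^\T$ in the columns for vertex $j$ and $-e_k^\T$ in the columns for vertex $i$, i.e. $R_D(p) = \mydiag(e_k^\T)\bar H$. Thus the $k$th block-row of $R_D(p)$ is the single row vector $e_k^\T H_k$ (where $H_k$ denotes the $k$th row of the incidence matrix, tensored appropriately), whereas the $k$th block of $R(p)$ is the $2\times 2n$ matrix $(P_{g_k}/\|e_k\|) H_k$. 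Since row-scaling by the nonzero scalar $1/\|e_k\|$ and replacing $e_k$ by $g_k$ do not change ranks, it suffices to show that, blockwise, the two-dimensional row space of $P_{g_k} H_k$ and the one-dimensional row space of $e_k^\T H_k$ interact with the global column structure in a rank-preserving way.

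The key observation, special to $\R^2$, is that for a unit vector $g_k\in\R^2$ one has $P_{g_k} = g_k^\perp (g_k^\perp)^\T$, a rank-one matrix, so the two rows of the block $P_{g_k}H_k$ are both scalar multiples of the single row $(g_k^\perp)^\T H_k$. Hence $\Range(R(p)^\T)$ and $\Range(R_D(p)^\T)$ are obtained from "the same" collection of $m$ rows — $(g_k^\perp)^\T H_k$ versus $g_k^\T H_k$ — up to applying the fixed invertible map $Q_{\pi/2}$ (rotation by $\pi/2$) independently inside each edge's pair of coordinates. Concretely, I would introduce the block-diagonal orthogonal matrix $J = I_m\otimes Q_{\pi/2}$ acting on the edge-indexed side and the block-diagonal orthogonal matrix $\bar J = I_n\otimes Q_{\pi/2}$ acting on the vertex-indexed side, and verify the identity $\mydiag(P_{g_k})\bar H = \tilde R'$ relates to $\mydiag(e_k^\T)\bar H$ via these rotations; the cleanest route is to show $\Null(R(p)) = \bar J\,\Null(R_D(p))$ (equivalently $\delta p\mapsto \delta p^\perp$ maps infinitesimal distance motions to infinitesimal bearing motions and back), which is exactly the content of Corollary~\ref{corollary_infinitesimalBearngAndDistanceMotion} and immediately forces equality of ranks since $\bar J$ is invertible and both matrices have the same column count $2n$.

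So the core computation is: for each edge $k = (i,j)$, check that $P_{g_k}(\delta p_i - \delta p_j) = 0$ if and only if $g_k^\T\big(Q_{\pi/2}(\delta p_i - \delta p_j)\big) = 0$ — both say that $\delta p_i - \delta p_j$ is orthogonal to $g_k^\perp$, i.e. parallel to $g_k$ (using Lemma~\ref{lemma_parallelismDefinition} and $g_k^\T Q_{\pi/2} = (Q_{\pi/2}^\T g_k)^\T = -(g_k^\perp)^\T$ up to sign). Summing over all edges gives $R(p)\delta p = 0 \iff R_D(p)\,\overline{Q_{\pi/2}}\,\delta p = 0$ where $\overline{Q_{\pi/2}} = I_n\otimes Q_{\pi/2}$, hence $\Null(R(p)) = (I_n\otimes Q_{\pi/2}^\T)\Null(R_D(p))$ and the nullities — therefore the ranks — agree. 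I expect the main obstacle to be purely bookkeeping: getting the Kronecker/permutation structure of $\bar H$, the per-edge $2\times 2$ rotation, and the per-vertex $2\times 2$ rotation to line up consistently (including sign conventions for $g^\perp$ and for edge orientation), rather than any conceptual difficulty; the rank-one factorization $P_{g_k} = g_k^\perp(g_k^\perp)^\T/\|g_k^\perp\|^2$ is what makes the argument collapse to a one-line-per-edge check and is available only in $\R^2$.
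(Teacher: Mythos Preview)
Your proposal is correct and rests on the same key observation the paper uses: in $\R^2$ the projector factors as $P_{g_k}=g_k^\perp(g_k^\perp)^\T$, so each block-row of $R(p)$ is a scalar multiple of $(g_k^\perp)^\T$ applied through $\bar H$, and rotating each vertex block by $Q_{\pi/2}$ swaps ``parallel to $g_k$'' with ``orthogonal to $g_k$'' edge-by-edge.

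The organization differs slightly from the paper. The paper pushes the Kronecker algebra through to an explicit matrix identity
\[
R(p)=\mydiag\!\left(\frac{g_k^\perp}{\|e_k\|^2}\right) R_D(p)\,(I_n\otimes Q_{\pi/2}^\T),
\]
from which both the rank equality and Corollary~\ref{corollary_infinitesimalBearngAndDistanceMotion} fall out immediately (left factor has full column rank, right factor is invertible). You instead argue directly at the null-space level, establishing $\Null(R(p))=(I_n\otimes Q_{\pi/2}^\T)\Null(R_D(p))$ via the per-edge equivalence, which is precisely Corollary~\ref{corollary_infinitesimalBearngAndDistanceMotion}, and then read off the rank equality. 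So you effectively reverse the paper's order of implications: the paper proves Proposition~\ref{proposition_R2RBRDSameRank} first and obtains the corollary from the factorization, while you prove the corollary's content first and deduce the proposition. Your route avoids some of the Kronecker bookkeeping; the paper's route has the advantage of producing a single reusable matrix identity. Both are sound and essentially equivalent.
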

\begin{proof}
        Consider an oriented graph and write the bearings of the framework as $\{g_k\}_{k=1}^m$.
    Let $Q_{\pi/2}$ be a $2\times2$ rotation matrix that rotates any vector $\pi/2$.
    Denote $g_k^\perp\triangleq Q_{\pi/2}g_k$. Then, $g_k^\perp\perp g_k$ and $\|g_k^\perp\|=\|g_k\|=1$.
    Since $P_{g_k}=g_k^\perp(g_k^\perp)^\T$, the bearing rigidity matrix can be rewritten as
    \begin{align}\label{eq_Rp_expression}
        R(p)
        =\dia{\frac{P_{g_k}}{\|e_k\|}}\bar{H}
        =\dia{\frac{g_{k}^\perp}{\|e_k\|}}\dia{(g_{k}^\perp)^\T}\bar{H}.
    \end{align}
    The matrix $\dia{(g_{k}^\perp)^\T}\bar{H}$ can be further written as
    \begin{align}\label{eq_Rp_expression1}
        &\dia{(g_{k}^\perp)^\T}\bar{H}=\dia{g_{k}^\T Q_{\pi/2}^\T}\bar{H}\nonumber\\
        &=\dia{g_{k}^\T}(I_m\otimes Q_{\pi/2}^\T)(H\otimes I_2)=\dia{g_{k}^\T}(H\otimes Q_{\pi/2}^\T) \nonumber\\ &=\dia{g_{k}^\T}\bar{H}(I_n\otimes Q_{\pi/2}^\T).
    \end{align}
    Note the distance rigidity matrix can be expressed as $R_D(p)=\dia{e_k^\T }\bar{H}$ (this expression can be obtained by calculating the Jacobian of the distance function \eqref{eq_distanceEdgeFunction}).
    Substituting $\dia{g_{k}^\T}\bar{H}= \dia{{1}/{\|e_k\|}}R_D(p)$ and \eqref{eq_Rp_expression1} into \eqref{eq_Rp_expression} yields
    \begin{align}\label{eq_RBAndRDRelation}
    R(p)=\dia{\frac{g_{k}^\perp}{\|e_k\|^2}}R_D(p)\left(I_n\otimes Q_{\pi/2}^\T\right).
    \end{align}
    Since $\dia{{g_{k}^\perp}/{\|e_k\|^2}}$ has full column rank and $I_n\otimes Q_{\pi/2}^\T$ is invertible, we have $\rank(R(p))=\rank(R_D(p))$.
\end{proof}

\begin{proof}[Proof of Theorem~\ref{theorem_2DIPRandIREquivalence}]
    By Theorem~\ref{theorem_conditionInfiParaRigid}, a framework $\G(p)$ in $\R^2$ is \emph{infinitesimally bearing rigid} if and only if $\rank(R(p))=2n-3$.
    By Lemma~\ref{lemma_diatanceInfiRigid_NSCondition}, a framework is \emph{infinitesimally distance rigid} if and only if $\rank(R_D(p))=2n-3$.
    Since $\rank(R(p))=\rank(R_D(p))$ as proved in Proposition~\ref{proposition_R2RBRDSameRank}, we know $\rank(R(p))=2n-3$ if and only if $\rank(R_D(p))=2n-3$, which concludes the theorem.
\end{proof}

\begin{proof}[Proof of Corollary~\ref{corollary_infinitesimalBearngAndDistanceMotion}]
It immediately follows from from \eqref{eq_RBAndRDRelation} that $R(p)\delta p=0$ if and only if $R_D(p)\delta p^\perp=0$.
\end{proof}

{\small
\section*{Acknowledgements}
The work presented here has been supported by the Israel Science Foundation (grant No. 1490/13).

\bibliography{myOwnPub,zsyReferenceAll} 
\bibliographystyle{ieeetr}
}
\begin{IEEEbiography}[{\includegraphics[width=1in,height=1.25in,clip,keepaspectratio]{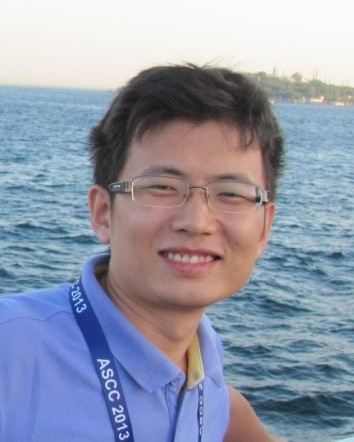}}]{Shiyu Zhao}
is a postdoctoral research fellow in the Faculty of Aerospace Engineering at the
Technion - Israel Institute of Technology.  He received his B.Eng (06) and
M.Eng (09) degrees from Beijing University of Aeronautics and Astronautics.
He got his Ph.D. in Electrical Engineering from National University of Singapore in 2014.
His research interests include distributed control and estimation of networked dynamical systems and its application to intelligent and robotic systems.
\end{IEEEbiography}

\begin{IEEEbiography}[{\includegraphics[width=1in,height=1.25in,clip,keepaspectratio]{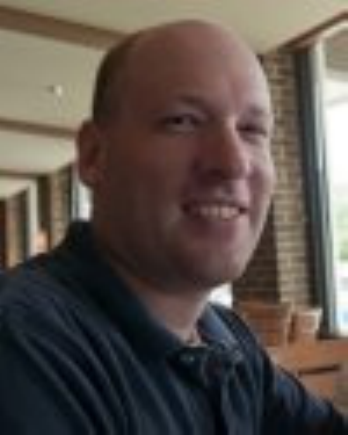}}]{Daniel Zelazo}
is an Assistant Professor of Aerospace Engineering at the
Technion - Israel Institute of Technology.  He received his BSc. (99) and
M.Eng (01) degrees in Electrical Engineering from the Massachusetts
Institute of Technology. In 2009, he completed his
Ph.D. from the University of Washington in Aeronautics
and Astronautics. From 2010-2012 he served as a post-doctoral research
associate and lecturer at the Institute for Systems Theory \& Automatic
Control in the University of Stuttgart. His research interests include
topics related to multi-agent systems, optimization, and graph
theory.
\end{IEEEbiography}

\end{document}